\DeclareMathAlphabet\mathcal{OMS}{cmsy}{m}{n}
\SetMathAlphabet\mathcal{bold}{OMS}{cmsy}{b}{n}
\newtheorem{theorem}{Theorem}
\newtheorem{lemma}{Lemma}
\newtheorem{observation}{Observation}
\newtheorem{corollary}{Corollary}
\newcommand{\eps}{\varepsilon}
\newcommand{\opt}{\textsc{Opt}\xspace}
\newcommand{\pushforward}{\textbf{Push Forward}\xspace}
\newcommand{\pushdown}{\textbf{Push Down}\xspace}
\newcommand{\region}{region algorithm\xspace}
\newcommand{\nextj}{j^*}
\newcommand{\ihat}{i}
\newcommand{\dist}{\varphi}
\newcommand{\I}{\mathcal{I}}
\newcommand{\N}{\mathbb{N}}
\newcommand{\epsfrac}{\frac{\eps}{\eps-\delta}}
\newcommand{\OO}{\mathcal{O}}
\definecolor{pink}{RGB}{255,0,102}
\newtcolorbox{algbox}[1]{colback=white, colframe=black, colbacktitle=white, coltitle=black, width=0.9\textwidth, leftrule=0mm, rightrule=0mm, arc=0mm, title=#1}
\definecolor{root}{RGB}{255,240,0}
\definecolor{level11}{RGB}{0,176,80}
\definecolor{level12}{RGB}{51,204,51}
\definecolor{level13}{RGB}{102,255,51}
\definecolor{level21}{RGB}{0,51,204}
\definecolor{level22}{RGB}{0,76,226}
\definecolor{level23}{RGB}{0,102,255}
\definecolor{level24}{RGB}{0,153,255}
\definecolor{level25}{RGB}{0,204,255}
\definecolor{level26}{RGB}{69,226,255}
\definecolor{level27}{RGB}{139,255,255}
\definecolor{ellipse123}{RGB}{76,229,51}
\definecolor{ellipse2234}{RGB}{0,102,255}
\definecolor{ellipse256}{RGB}{34,215,255}
\definecolor{yellow}{RGB}{255,240,0}
\definecolor{green}{RGB}{51,204,51}
\definecolor{blue}{RGB}{0,102,255}
\tikzstyle{job} = [rectangle,  minimum width = .5cm, minimum height = .5cm, anchor=west, draw=black]
\tikzstyle{whitejob} = [rectangle,  minimum width = .5cm, minimum height = .6cm, anchor=west, draw=white, fill = white]
\title{A general framework for handling commitment\\
in online throughput maximization
}
\author{
Lin Chen\thanks{Department of Computer Science, University of Houston, Texas, United States. Email: \texttt{chenlin198662@gmail.com}.}
\and Franziska Eberle\thanks{Department for Mathematics and Computer Science, University of Bremen, Germany. Email: \texttt{\{feberle,nicole.megow\}@uni-bremen.de}. Supported by the German Science Foundation (DFG) under contract  ME 3825/1.}
 \and Nicole Megow\footnotemark[2] 
 \and Kevin Schewior\thanks{Fakult\"at f\"ur Informatik, Technische Universit\"at M\"unchen, M\"unchen, Germany; D\'epartement d'Informatique, \'Ecole Normale Sup\'erieure, PSL University, Paris, France. Email: \texttt{kschewior@gmail.com}. Supported by Conicyt Grant PII 20150140 and DAAD PRIME program.}
 \and Cliff Stein\thanks{Department of Industrial Engineering and Operations Research, Columbia University, New York, United States. Email: \texttt{cliff@ieor.columbia.edu}. Research supported in part by NSF grants CCF-1421161 and CCF-1714818.}
 }
\date{\today}
\begin{document}
\maketitle

\thispagestyle{empty}

\begin{abstract}
	We study a fundamental online job admission
	problem where jobs with 
	deadlines arrive online over time at their release dates, and the task is to determine a
	preemptive single-server schedule which maximizes the number of jobs that complete on time. To circumvent known impossibility results, we make
	a standard slackness assumption by which the feasible time window
	for scheduling a job is at least $1+\eps$ times its processing time,
	for some $\eps>0$. 
	%
	We quantify the impact that different provider
	commitment requirements have on the performance of online
	algorithms. 
	Our main contribution is
	one universal algorithmic framework for  online job admission both with and
	without commitments. Without commitment, 
	our algorithm with a competitive ratio of~$\OO(1/\eps)$
	is the 
	best possible (deterministic) for this problem. For 
	commitment models, we give the first non-trivial performance bounds.
	If the commitment
	decisions must be made before a job's slack becomes less than a $\delta$-fraction of its size, we prove 
	a competitive ratio of $\OO(\eps/((\eps -\delta)\delta^2))$,
	for $0<\delta<\eps$. When a provider must commit
	upon starting a job, our bound is~$\OO(1/\eps^2)$. 
	Finally, we observe that for scheduling with
	commitment the restriction to the ``unweighted'' throughput model is
	essential; if jobs have individual weights, we rule out 
	\mbox{competitive deterministic~algorithms.}
	%
\end{abstract}



\newpage

\setcounter{page}{1}

\section{Introduction}

Many modern computing environments involve a centralized system for managing the resource allocation for processing many different jobs. Such environments are varied, including, for example, internal clusters and public clouds.  These systems typically handle a diverse workload~\cite{DBLP:conf/spaa/LucierMNY13} with a mixture of jobs including short time-sensitive jobs, longer batch jobs, and everything in between.

The challenge for a system designer is to implement scheduling policies that trade off between these different types of jobs and obtain good performance.  There are many ways to define good performance and in this paper, we will focus on the commonly used notion of {\em throughput} which is the number of jobs completed, or if jobs have weights, the total weight of jobs completed.

In general, throughput 
is a ``social welfare'' objective that tries to maximize total utility. 
By centralizing computing and scheduling decisions, one can potentially better utilize resources. 
Nevertheless, for the ``greater good'' it may be beneficial to abort jobs close to their deadlines in favor of many, but shorter and more urgent tasks~\cite{FergusonBKBF12}. As companies start to outsource mission critical processes to external clouds, they may require a certain provider-side guarantee, i.e., service providers have to {\em commit to complete} admitted jobs before they cannot be moved 
to other computing clusters anymore. Moreover, companies tend to rely on business analytics to support decision making. Analytical tools, that usually 
work with copies of databases, 
depend on faultless 
data. This means, once such a copy process started, its completion must be guaranteed. 


More formally, we consider a fundamental single-machine scheduling model in which jobs arrive online over time at their {\em release date} $r_j$. Each job has a {\em processing time} $p_j\geq 0$, a {\em   deadline}~$d_j$, and possibly a {\em weight} $w_j>0$.  In order to complete, a job must receive a total of $p_j$ units of processing time in the interval $[r_j,d_j]$.  We allow {\em preemption}, that is, the processing time does not need to be contiguous. If a schedule completes a set $S$ of jobs, then the {\em throughput} is $|S|$, while the weighted throughput is $\sum_{j \in S} w_j$. 
	We analyze the performance of algorithms using standard {\em competitive analysis} in which we compare the throughput of an online algorithm with the throughput achievable by an optimal offline algorithm that is given full information in advance. 

Deadline-based objectives are typically much harder to optimize than other Quality-of-Service metrics such as response time or makespan. Indeed, the problem becomes hopeless when {\em preemption} (interrupting a job and resuming it later) is not allowed: whenever an algorithm starts a job $j$ without being able to preempt it, it may miss the deadlines of an arbitrary number of jobs that would have been schedulable if $j$ had not been started. 
For scheduling with commitment, we provide a similarly strong lower bound for the preemptive version of the problem in the presence of weights. Therefore, we focus on {\em unweighted preemptive online} throughput maximization.

Hard examples for online algorithms tend to involve jobs that arrive and then {\em must} immediately be processed since $d_j - r_j \approx p_j$.  It is entirely reasonable to bar such jobs from a system, requiring that any submitted job contains some {\em slack}, that is, we must have some separation between $p_j$ and $d_j - r_j$.  To that end we say that an instance has $\eps$-slack if every job satisfies $d_j - r_j \geq (1 + \eps) p_j$.  We develop algorithms whose competitive ratio depends on $\eps$; the greater the slack, the better we expect the performance of our algorithm to be.  This slackness parameter captures certain aspects of Quality-of-Service (QoS) provisioning and admission control, see e.g. \cite{GeorgiadisGP97,LiebeherrWF96}, and it has  been considered in previous work, \mbox{e.g., in~\cite{DBLP:conf/spaa/LucierMNY13,AzarKLMNY15,SchwiegelshohnS16,GarayNYZ02,Goldwasser1999,BaruahH97}.}
Other results for scheduling with deadlines use speed scaling, which can be viewed as another way to add slack to the 
schedule, e.g. \cite{BansalCP07,PruhsS10,AgrawalLLM18,ImM16}.
In this paper we quantify the impact that different job commitment requirements have on the performance of online algorithms. We parameterize our performance guarantees by the slackness of jobs.

\subsection{Our results and techniques}


Our main contribution is a general algorithmic framework, called \region, for online scheduling with and without commitments. We prove performance guarantees which are either tight or constitute the first non-trivial results. We also answer open questions in previous work.
We show strong lower bounds for the weighted case and therefore 
our algorithms are all for the unweighted case $w_j=1$.

\smallskip
\noindent{\bf Optimal algorithm for scheduling without commitment.} We give an implementation of the \region that achieves a competitive ratio of $\OO(\frac1\eps)$. We prove that this is {\em optimal} by giving a matching lower bound~(ignoring constants) for any deterministic online algorithm.

\smallskip
\noindent{\bf Impossibility results for commitment upon job arrival.} In this most restrictive model an algorithm must decide immediately at a job's release date if the job will be completed or not. 
	We show that no (randomized) online algorithm admits a bounded competitive ratio. 
	Such a lower bound has only been shown by exploiting arbitrary job weights~\cite{DBLP:conf/spaa/LucierMNY13,Yaniv-Thesis2017}. Given our strong negative result, we do not consider this model~any~further. 

\smallskip
\noindent{\bf Scheduling with commitment.} We distinguish two different models: \emph{(i) commitment upon job admission} and \emph{(ii) $\delta$-commitment}. In the first model, an algorithm may discard a job any time before its start, its admission. This reflects the situation when the start of a process is the critical time point after which the successful execution is essential (e.g.\ faultless copy of a database). In the second model, $\delta$-commitment, an online algorithm must commit to complete a job when its slack has reduced from the original slack requirement of an $\epsilon$-fraction of the size to a $\delta$-fraction for~$0 \leq \delta \leq \eps$. 
Then, the latest time for committing to job~$j$ is $d_j - (1 + \delta) p_j$. 
This models an early enough commitment (parameterized by $\delta$) for mission~critical~jobs.

For both models, we show that implementations of the \region allow for the first non-trivial performance guarantees. 
We prove an upper bound on the competitive ratio of~$\OO(1/\eps^2)$ for commitment upon admission and a competitive ratio of $\OO(\eps/((\eps -\delta)\delta^2))$, 
for $0<\delta<\eps$, in the $\delta$-commitment model. These are the first rigorous non-trivial upper bounds in any commitment model~(excluding the special weighted setting with $w_j=p_j$ that has been resolved; see related work).

	
Instances with arbitrary weights are hopeless without further restrictions. There is no deterministic online algorithm with bounded competitive ratio, neither for commitment upon admission (also shown in~\cite{AzarKLMNY15}) nor for $\delta$-commitment. 
	In fact, our construction implies that there is no deterministic online algorithm with bounded competitive ratio in {\em any commitment model} in which a scheduler may have to commit to a job before it has completed. (This 
	is hard to formalize but may give guidance for the design of alternative~commitment~models.) 
	Our lower bound for $\delta$-commitment is actually more fine-grained: 
for any $\delta>0$ and any $\eps$ with $\delta \leq \epsilon < 1+\delta$, no deterministic online algorithm has a bounded competitive ratio for weighted throughput. In particular, this rules out bounded performance guarantees for $\eps\in(0,1)$. We remark that for sufficiently large slackness ($\eps>3$), Azar et al.~\cite{AzarKLMNY15} provide an online algorithm that has bounded competitive ratio. Our new lower bound  answers affirmatively the open question if high slackness is indeed required.
	
Finally, our impossibility result for weighted jobs and the positive result for instances without weights clearly separate the weighted from the unweighted setting. Hence, we do not consider weights in this paper. 

\smallskip
\noindent{\bf Our techniques.}
Once a job $j$ is admitted to the system, its slack becomes a scarce resource: To complete the job before its deadline (which may be mandatory depending on the commitment model, but is at least desirable), one needs 
	to carefully ``spend" the slack on admitting jobs to be processed before 
	the deadline of~$j$. Our general framework for admission control, the region algorithm, addresses this issue by the concept of ``responsibility": Whenever a job $j'$ is admitted while $j$ could be processed, $j'$ becomes responsible for not admitting similar-length jobs for a certain period, its \emph{region}. The intention is that $j'$ reserves time for $j$ to complete. To balance between reservation (
	commitment to complete $j$) and performance (loss of other jobs), the algorithm uses the parameters $\alpha$ and $\beta$, which specify the length of a region and similarity of~job~lengths. 

A major difficulty 
in the analysis of the region algorithm is understanding the complex interval structure formed by feasible time windows, regions, and time intervals during which jobs are processed. Here, we rely on a key design principle of our algorithm: Regions are defined independently of scheduling decisions. Thus, the analysis can be naturally split into two parts:

{\em In the first part,} we argue that the scheduling routine can handle the admitted jobs sufficiently well for aptly 
	chosen parameters $\alpha$ and $\beta$. That means that the respective commitment model is obeyed and, if not implied by that, an adequate number of the admitted jobs is completed.
	
{\em In the second part,} we can disregard how jobs are actually scheduled by the scheduling routine and argue that the region algorithm admits sufficiently many jobs to be competitive with an optimum solution. 
	 The above notion of ``responsibility" 
	 suggests a proof strategy mapping jobs that are completed in the optimum to the corresponding job that was ``responsible" due to 
	 its region. Transforming this idea into a charging scheme is, however, a non-trivial task: 
	 There might be many (
	 $\gg\OO(\frac{1}{\eps^2}$)) jobs released within the region of a single job~$j$ and completed by the optimum, but not admitted by the region algorithm due to many consecutive regions of varying size.
	 It is unclear where to charge these jobs---clearly not all of \mbox{them to $j$.} 
	
	 We develop a careful charging scheme that avoids such overcharging. We handle the complex interval structure by working on a natural tree structure (\emph{interruption tree}) related to the region construction and independent of the actual schedule. 
	  Our charging scheme comprises two central routines for distributing charge: 
	  Moving charge along a sequence of consecutive jobs (\emph{Push Forward}) or to children~(\emph{Push~Down}).
	  
We show that our analysis of the region algorithm is tight up to a constant factor. 


\subsection{Previous results}

Preemptive online scheduling and admission control have been studied rigorously. There are several results regarding the impact of deadlines on online scheduling; see, e.g.,~\cite{Goldwasser1999,GarayNYZ02,DBLP:conf/rtss/BaruahHS94} and 
	references therein. Impossibility results for jobs with hard deadlines and without slack have been known for decades~\cite{DBLP:journals/rts/BaruahKMMRRSW92,DBLP:journals/tcs/KorenS94,DBLP:journals/siamcomp/KorenS95,CanettiI98,Lipton1994}.




\emph{Scheduling without commitment.} Most research on online 
scheduling does not address commitment. The only results independent of slack (or other job-dependent parameters) concern the machine utilization, i.e., weighted throughput for the special case $w_j=p_j$, where a constant competitive ratio is possible~\cite{DBLP:journals/rts/BaruahKMMRRSW92,DBLP:journals/tcs/KorenS94,DBLP:journals/siamcomp/KorenS95,Woeginger1994}. 
	In the unweighted setting, a randomized~$\OO(1)$-competitive algorithm is known~\cite{DBLP:journals/jal/KalyanasundaramP03}.
	%
	%
For instances with $\eps$-slack, Lucier et al.~\cite{DBLP:conf/spaa/LucierMNY13} give an~$\OO(\frac{1}{\eps^2})$-competitive algorithm in the most general weighted setting. To the best of our knowledge, no lower bound was known to date. 
	

\emph{Scheduling with commitment.} Much less is known for scheduling with commitment. In the most restrictive model, \emph{commitment upon job arrival}, Lucier et al.~\cite{DBLP:conf/spaa/LucierMNY13} rule out competitive online algorithms for any slack parameter~$\eps$ 
	when jobs have arbitrary weights. For {\em commitment upon job admission}, they give a heuristic that empirically performs very well but for which they cannot show a rigorous worst-case bound. In fact, later Azar et al.~\cite{AzarKLMNY15} show that no bounded competitive ratio is possible for weighted throughput maximization for small~$\eps$. For the {\em $\delta$-commitment model}, Azar et al.~\cite{AzarKLMNY15} design (in the context of truthful mechanisms) an online algorithm 
	that is $\OO(\frac{1}{\eps^2})$-competitive if the slack~$\eps$ is sufficiently large. They call an algorithm in this model {\em $\beta$-responsive algorithm}. They left open if this latter condition is an inherent property of any committed scheduler in this model and we answer this affirmatively.
	
	Again, the machine utilization variant ($w_j=p_j$) is much more tractable than weighted or unweighted throughput maximization. Simple greedy algorithms achieve the best possible competitive ratio~$\Theta(\frac{1}{\eps})$~\cite{DBLP:conf/approx/DasGuptaP00,GarayNYZ02} in all aforementioned commitment models, even commitment upon arrival.

\section{Our general framework}\label{sec:RegionAlgorithm}

\subsection{The region algorithm}

In this section we present our general algorithmic framework which we apply to scheduling with and without commitment. We assume that the slackness constant $\eps>0$ is known to the online algorithm.
In the $\delta$-commitment model it is sufficient that $0 < \delta < \eps$ is known. 

To gain some intuition for our algorithm, we first describe informally the three underlying design principles. They also explain our algorithm's parameters.
\begin{compactenum}
	\item A running job can be preempted only by significantly smaller jobs (parameter $\beta$).
	\item A job cannot start for the first time when its remaining slack is too small (constant $\delta$ which is part of the input in the $\delta$-commitment model and otherwise set to $\delta=\frac{\eps}{2}$).
	\item If a job preempts other jobs, then it has to take ``responsibility'' for a certain time interval (parameter~$\alpha$) with which it assures that the jobs it preempted can complete on time. 
\end{compactenum}

The first two design principles have been used
before~\cite{DBLP:conf/spaa/LucierMNY13}. However, to improve on
existing results we crucially need the third principle. We implement
it in the following way.


The \region has two parameters, $\alpha \geq 1$ and $0 < \beta
< 1$. 
A {\em region} is a union of time intervals associated with a job, and the size of the region
is the sum of sizes of the intervals.
We denote the region of job $j$ by $R(j)$.    Region $R(j)$ will always have size~$\alpha p_j$,
although the particular time intervals composing the region may change over time.  Regions are always disjoint, i.e., for any $i \neq j$, $R(i) \cap R(j) = \emptyset$.
Informally,  whenever our algorithm starts a job $i$ (we say $i$ is {\em admitted}) that arrives
during the region of an already admitted job $j$, then the current interval of $j$ is split into two intervals and the region~$R(j)$ as well as all later regions are delayed.

Formally speaking, at any time $t$, the region algorithm maintains two sets of jobs: \textit{admitted jobs}, which have been started before or at time $t$, and \textit{available jobs}.  A job $j$ is availabe 
if it is released before or at time~$t$, is not yet admitted, and it is not too close to its deadline, i.e., 
$r_j \leq t$ and  $d_j - t \geq (1 + \delta) p_j$.
The intelligence of the \region lies in admitting jobs and
(re)allocating regions. The actual scheduling decision then is simple
and independent of the regions: at any point in time, schedule the
shortest admitted job that has not completed its processing time,
i.e., we schedule admitted jobs in {\em Shortest Processing Time~(SPT)}
order.  The \region  never explicitly considers deadlines except when deciding whether to admit jobs.

The \region starts by admitting job $1$ at its
release date and creates the region~$R(1) := [r_1,r_1+\alpha p_1)$. 
There are two events that trigger a decision of the \region: the release of a job and the end of a region. If one of these events occurs at time~$t$, the \region invokes the {\bf region preemption} subroutine. 
This routine compares the processing time of the smallest \textit{available} job~$i$ with the processing time of the {\em admitted} job~$k$ whose region contains~$t$.  If $p_i < \beta p_k$, job~$i$ is admitted and the \region reserves the interval $[t,t + \alpha p_i)$
for processing~$i$. 
Since regions must be disjoint, the algorithm then modifies all other remaining regions, i.e., 
the parts of regions that belong to~$[t,\infty)$ of other jobs~$j$. We refer to the set of such jobs~$j$ whose regions have not yet completed by time~$t$ as~$J(t)$. %
Intuitively, we preempt the interval of the region containing $t$ and delay its remaining part as well as the remaining regions of all other jobs. Formally, this \textbf{update of all remaining regions} is defined as follows. Let $k$ be the one job whose region is interrupted at time $t$, and let $[a_k',b_k')$ be the interval of $R(k)$ containing $t$. Interval $[a_k',b_k')$ is  replaced by $[a_k',t)\cup[t+\alpha p_i, b_k'+ \alpha p_i)$. For all other jobs $j \in J(t)\setminus\{k\}$, the remaining region $[a_j', b_j')$ of~$j$ is replaced by $[a_j'+\alpha p_i, b_j'+\alpha p_i )$. 
Observe that, although the region of a job may change throughout the algorithm, the starting point of a region for a job will never be changed. We summarize the \region in Algorithm~\ref{alg:regionalg}.  

\begin{lstlisting}[float, caption={Region algorithm}, label={alg:regionalg},belowskip=-0.5 \baselineskip]
Scheduling routine: At any time $t$, run an admitted and not yet completed job with shortest processing time. @\smallskip@	
Event: Upon release of a new job at time $t$:
	Call region preemption routine. @\smallskip@	
Event: Upon ending of a region at time $t$:
	Call region preemption routine. @\smallskip@ 
Region preemption routine:
$k$ @\assign@ the job whose region contains $t$
$i$ @\assign@ a shortest available job at $t$, i.e., $i = \arg\min\{p_j \,|\, r_j \leq t \text{ and } d_j - t \geq (1 + \delta) p_j\}$
If $p_i < \beta p_k$, then
	1. admit job $i$ and reserve region $R(i)=[t,t+\alpha p_i)$,
	2. update all remaining regions $R(j)$ with $R(j)\cap [t,\infty) \neq \emptyset$ as described below.
\end{lstlisting}

%
%
%

We apply the algorithm in different commitment models with different choices of parameters~$\alpha$ and~$\beta$, which we derive in the following sections. In the $\delta$-commitment model, $\delta$ is given as part of the input. In the other models, i.e., without commitment or with commitment upon admission, we simply set~$\delta=\frac{\eps}{2}$.

\smallskip
\noindent {\bf Commitment.} The \region commits always upon admission, i.e., at its first start. 
This is possibly earlier than required in the $\delta$-commitment model. The parameter $\delta$ determines the latest possible start time of a job, which is then for our algorithm also the latest time the job can be admitted. Thus, for the analysis, the algorithm execution for commitment upon admission (with $\delta=\frac{\eps}{2}$) is a special case of $\delta$-commitment. This is true only for our algorithm, not in general. 


\subsection{Main results on the region algorithm}\label{subsec:Analysis}


In the analysis we focus on instances with small slack as they constitute the hard case. We assume for the remainder that $0<\eps\leq 1$. 
%
Notice that instances with large slack clearly satisfy a small slack assumption. We run our algorithm simply by setting $\eps=1$ and obtain constant competitive~ratios.



Our main results are as follows. Without commitment, we present an optimal online algorithm.

\begin{theorem}[Scheduling without commitment]\label{thm:UB-nocommitment}
	Let $0<\eps \leq 1$. With the choice of $\alpha=1$, $\beta = \frac{\eps}{4}$, and $\delta = \frac{\eps}{2}$, the \region is $\Theta(\frac{1}{\eps})$-competitive for scheduling without commitment. 
\end{theorem}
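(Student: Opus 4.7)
The plan is to prove both directions of the $\Theta(1/\eps)$ claim, with the upper bound structured in the two-layer manner foreshadowed in Section~\ref{sec:RegionAlgorithm}. Writing $A$ for the set of jobs admitted by the region algorithm on a given instance and $C\subseteq A$ for those that actually complete by their deadline under the SPT schedule, the target is to establish (i) $|C|=\Omega(|A|)$, i.e.\ a constant fraction of admissions pay off, and (ii) $|A|=\Omega(\eps)\cdot|\opt|$, i.e.\ the algorithm admits enough. Multiplying the two yields the $O(1/\eps)$ competitive ratio, and a matching $\Omega(1/\eps)$ lower bound is handled separately by an adversarial instance.

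For part (i), the scheduling analysis, I would exploit that with $\alpha=1$ the region $R(j)$ has total length exactly $p_j$, and its end time can only be delayed by admissions of jobs of size strictly less than $\beta p_j=\frac{\eps}{4}p_j$ that land inside $R(j)$ or inside regions nested within it. Since $d_j-r_j\ge(1+\eps)p_j$, the slack available for such shifts is at least $\eps p_j$. If $j$ misses its deadline under SPT, then the shorter admitted jobs served during $[r_j,d_j]$ must total more than $\eps p_j$; because each such preemptor has size $<\tfrac{\eps}{4}p_j$, a direct accounting that charges every missed job to a bounded number of strictly shorter admitted jobs within its availability window yields $|A\setminus C|=O(|C|)$.

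For part (ii), the admission analysis, I would build the \emph{interruption tree} whose nodes are the admitted jobs and whose parent-child relation reflects the nesting of regions at the moment of admission. Each job $j^\star$ completed by \opt but not admitted must have been blocked at every trigger event in its slack window; by the region preemption rule, at the moment $j^\star$ was released there is an admitted $k$ with $R(k)$ covering $r_{j^\star}$ and $p_k\le p_{j^\star}/\beta$. I initially charge each $j^\star$ to one such $k$, then redistribute using the two primitives \pushforward (pass surplus to the next sibling in the tree) and \pushdown (pass it to a child corresponding to a shorter nested admission). The invariant to maintain is that every admitted node ends up with at most $O(1/\eps)$ units of charge, which combined with the $\Theta(1/\eps)$ admissible size classes per node gives $|A|=\Omega(\eps)\cdot|\opt|$.

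Finally, the lower bound would follow from a tailored adversarial construction: release a sequence of $\Theta(1/\eps)$ jobs of geometrically decreasing processing time, each arriving at the last moment its predecessor could still be feasibly completed, with deadlines set tight to the $(1+\eps)$-slack bound, so that any deterministic online algorithm is forced to abandon all but $O(1)$ of them while \opt completes all. The main obstacle I anticipate is part (ii): a single admitted job's region can, after repeated extensions, span many \opt-releases, so a naive one-to-one charge wildly overcounts, and making \pushforward/\pushdown terminate with the correct per-node cap of $O(1/\eps)$ is the technical heart of the argument.
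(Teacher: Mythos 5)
Your overall decomposition exactly mirrors the paper's: show that a constant fraction of admitted jobs complete (their Theorem~\ref{thm:suff-jobs-nocommit}), show that the number admitted is within a factor $O(1/\eps)$ of \opt (their Theorem~\ref{theo:CompetitivenessRegion}), and give a separate adversarial lower bound (their Theorem~\ref{thm:detLB-w=1-anytime}). Your part (ii) is also substantively aligned with the paper's approach: you identify the interruption tree, the \pushforward/\pushdown primitives, and the correct per-node charge cap of $O(1/\eps)$, and you correctly flag the redistribution as the technical heart. You omit the Volume Lemma --- the device the paper uses to turn ``too many charges on a string of siblings'' into ``\opt's schedule around $d_x$ is overfull, contradiction'' --- but you have named the right obstacle.

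Your part (i), however, is where a genuine gap opens. You claim a ``direct accounting that charges every missed job to a bounded number of strictly shorter admitted jobs.'' This does not go through naively: the admitted jobs that delay $R(j)$ past $d_j$ are $j$'s \emph{descendants} in the interruption tree, and some of those descendants may themselves be unfinished (their own regions delayed past their deadlines). A child that delays two or more distinct unfinished ancestors and is itself unfinished cannot absorb a unit charge from each. The paper handles this with Lemma~\ref{lem:F>U}, which asserts a global surplus $|F_j| - |U_j| \ge \lfloor 4\ell/\eps \rfloor$ whenever $b_j - a_j \ge (\ell+1)p_j$; its proof is by minimal counterexample and a delicate ``merge three sibling leaves into one'' operation, after first normalizing the instance (Lemma~\ref{lem:SmallestInstance}). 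The surplus is then applied only at the maximal unfinished nodes whose ancestors all finish, whose subtrees are disjoint. Nothing this involved is visible in your sketch, and a simple per-job charge will overcount.

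Your lower-bound sketch also does not work as stated. Releasing a single chain of $\Theta(1/\eps)$ geometrically shrinking jobs, each arriving ``at the last moment its predecessor could still be completed,'' fails in the model \emph{without} commitment: with only one candidate job alive at a time, the online algorithm is not forced into a corner and can hedge across several of them. Moreover, if the jobs shrink geometrically fast enough for \opt to fit them \emph{all} into the original window (ratio $\le \eps/(1+\eps)$, roughly), only $O(\log)$ of them are even relevant in any fixed time scale, and if they shrink more slowly \opt cannot complete them all either. The paper's construction instead releases $\Theta(1/\eps)$ identical jobs \emph{per level} (with release-to-deadline windows chained end to end), escalates to the next level only once the adversary can certify the algorithm has invested more than half of some job's window, and shows \opt finishes one job from every \emph{other} level. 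That nesting and adaptivity is what defeats an arbitrary deterministic algorithm, and it is missing from your sketch.
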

This is an exponential improvement upon the previously best known upper bound~\cite{DBLP:conf/spaa/LucierMNY13} (given for weighted throughput).
For scheduling with commitment, we give the first rigorous upper bound. 
\begin{theorem}[Scheduling with commitment]\label{thm:UB-commitment}
	Let $0< \delta < \eps \leq 1$. Choosing $\alpha=\frac{8}{\delta}$, $\beta = \frac{\delta}{4}$, the \region is $\OO(\frac{\eps}{(\eps-\delta)\delta^2})$-competitive in the $\delta$-commitment model. When  the scheduler has to commit upon admission, the \region has a competitive ratio $\OO(\frac{1}{\eps^2})$ for $\alpha=\frac{4}{\eps}$ and $\beta = \frac{\eps}{8}$.
\end{theorem}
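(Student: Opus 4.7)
Following the two-part structure highlighted in Section~\ref{subsec:Analysis}, I would first show that every job admitted by the \region completes by its deadline (thus the $\delta$-commitment constraint is automatically satisfied), and then establish the competitive ratio via a charging scheme on the interruption tree. The commitment-upon-admission bound will follow as the special case $\delta = \eps/2$.

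\textbf{Part~1 (feasibility of admitted jobs).} Consider any admitted job $j$ with admission time $t_0$. The admission rule guarantees $d_j - t_0 \geq (1+\delta)p_j$, while the initial region $R(j) = [t_0, t_0 + \alpha p_j)$ has length $\alpha p_j = 8p_j/\delta$. Because the scheduling routine runs jobs in SPT order, $j$ is processed throughout $R(j)$ except when the algorithm works inside the region of a strictly smaller admitted job $i$ with $p_i < \beta p_k$, where $k$ is the job that owned the interrupted interval at the time $i$ was admitted. I would bound the cumulative time devoted to such nested jobs by a geometric series in $\beta$: first-level interruptions inside $R(j)$ together have size at most $\alpha \beta p_j / (1-\beta)$, and deeper levels contribute an additional factor that stays bounded for $\beta = \delta/4$. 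A direct computation then shows that $j$ receives at least $p_j$ units of processing inside the (possibly postponed) region, and the slack inequality $d_j - t_0 \geq (1+\delta)p_j$ is large enough to absorb the delay caused by sibling regions of even smaller jobs admitted during $R(j)$'s lifetime, so $j$ completes by $d_j$.

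\textbf{Part~2 (charging scheme).} I would define an \emph{interruption tree} $T$ whose nodes are the admitted jobs: $j$ is a child of $k$ if $j$ was admitted during an interval of $R(k)$. Each job $\nextj$ counted by \opt is charged to an admitted job as follows. If $\nextj$ itself is admitted, it is charged to itself. Otherwise, $\nextj$ was available but not admitted throughout the sub-window $[r_{\nextj}, d_{\nextj} - (1+\delta)p_{\nextj}]$ of length at least $(\eps-\delta)p_{\nextj}$, and at every decision time in this sub-window the owner $k$ of the containing region satisfied $p_{\nextj} \geq \beta p_k$ (otherwise the region preemption routine would have admitted $\nextj$). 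Now \pushforward moves the charge horizontally along the sequence of consecutive sibling regions that tile $\nextj$'s availability sub-window, while \pushdown transfers the charge to a child whenever the current region strictly contains a nested region. The key estimate is that each admitted job $j$ absorbs at most $O(\alpha/\beta)\cdot O(\eps/(\eps-\delta)) = O(\eps/((\eps-\delta)\delta^2))$ charges: at most $\alpha/\beta = 32/\delta^2$ \opt jobs of processing time at least $\beta p_j$ can fit into the length-$\alpha p_j$ region $R(j)$, and the factor $\eps/(\eps-\delta)$ captures how many \pushforward steps can traverse $R(j)$ before the charge descends or is absorbed, stemming from the ratio of the original slack $\eps p_{\nextj}$ to the admissibility window $(\eps-\delta)p_{\nextj}$. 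Finally, for commitment upon admission, the choice $\alpha = 4/\eps$, $\beta = \eps/8$ coincides with the $\delta$-commitment parameters at $\delta = \eps/2$, and since the \region commits at admission anyway, the bound specializes to $O(\eps/((\eps-\eps/2)(\eps/2)^2)) = O(1/\eps^2)$.

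\textbf{Main obstacle.} The essential difficulty, as the paper itself warns, is Part~2: controlling the interplay of \pushforward and \pushdown so that no admitted job is overcharged. A single region $R(j)$ may be traversed by feasible windows of many \opt jobs of widely varying sizes, some admitted and some not, and successive Push Forward steps could in principle revisit the same node. The proof will require a careful invariant on the charge absorbed at each node, leaning on the geometric shrinkage enforced by $\beta < 1$ together with the admissibility gap $\eps - \delta$, and on the fact that regions, although movable, have fixed starting points and disjoint total content, which is exactly the design principle that makes the interruption tree well defined independently of the schedule.
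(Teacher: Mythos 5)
Your proposal correctly mirrors the paper's two-part decomposition: show that every admitted job finishes on time (Lemma~\ref{lem:all-jobs-commit}), then bound the number of jobs of \opt charged to each admitted job (Theorem~\ref{theo:CompetitivenessRegion}, which yields a ratio $\lambda + 1$ with $\lambda = \frac{\eps}{\eps-\delta}\cdot\frac{\alpha}{\beta}$). However, both parts have genuine gaps.

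In Part~1, the claim that ``first-level interruptions inside $R(j)$ together have size at most $\alpha\beta p_j/(1-\beta)$'' is not true: $R(j)$ has length $\alpha p_j = 8p_j/\delta$ and can be fragmented into arbitrarily many pieces, between which arbitrarily many direct children of $j$ may be admitted, so the total length of their regions is not bounded by a geometric series. Lemma~\ref{lem:GeometricallyDecreasing} applies to a single root-to-leaf \emph{path}, not to all children at once. The paper's proof of Lemma~\ref{lem:all-jobs-commit} instead restricts attention to the window $[a_j,d_j)$ and splits the delaying jobs into $D_j'$ (regions entirely inside $[a_j,d_j)$) and $D_j''$ (regions straddling $d_j$). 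The set $D_j''$ forms a single path, giving $\sum_{D_j''}p_{j_i} \le \frac{\beta}{1-\beta}p_j$ via Lemma~\ref{lem:GeometricallyDecreasing}; for $D_j'$ the essential observation is that the regions are disjoint and $\big|\bigcup_{k\in D_j'}R(k)\big|=\alpha\sum_{k\in D_j'}p_k$, so only a $1/\alpha$-fraction of the time inside those regions is consumed by processing the $D_j'$ jobs and an $\frac{\alpha-1}{\alpha}$-fraction is free for $j$. This $\alpha$-excess argument is precisely what produces Condition~\eqref{cond:1}, $\frac{\alpha-1}{\alpha}\big(1+\delta-\frac{\beta}{1-\beta}\big)\ge 1$, from which the parameter choices follow. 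Your proposal never introduces it and therefore cannot actually derive the stated values of $\alpha$ and $\beta$. (Also, a small arithmetic slip: $\alpha = 8/\delta$ at $\delta=\eps/2$ is $16/\eps$, not $4/\eps$; the theorem uses a separate, smaller constant for commitment upon admission that still satisfies~\eqref{cond:1}.)

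In Part~2, you correctly name the interruption tree, \pushforward, \pushdown, and the target bound of $\lambda$ charges per admitted job, but this is exactly what you flag as the ``main obstacle,'' and it is where nearly all of the paper's technical work lies. The missing ingredient is the Volume Lemma~\ref{lem:VolLemma}: if the volume that \opt processes within a string of consecutive sibling regions by the completion time of some job $x$ exceeds $\frac{\eps}{\eps-\delta}(b_g-a_f)+p_x$, then $x$ must be big relative to the job owning the region at $b_g$. This is what converts the slackness gap $\eps-\delta$ into a counting bound, and the entire charging argument (the construction of the partition $(Y_k)$ in Lemmas~\ref{lem:SoL} and~\ref{lem:SoN}, by a double induction combining \pushforward and \pushdown) hinges on it. Without an explicit statement of this kind, the invariant you allude to is not pinned down, and the claim that each node absorbs $\OO(\lambda)$ charge is asserted rather than proved.
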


In Appendix~\ref{sec:TightAnalysis}, we show that the analysis of our framework is tight up to constants.




\subsection{Interruption trees}

To analyze the performance of the region algorithm on a given instance, we consider the final schedule and the final regions 
and investigate them retrospectively. Let~$a_j$ be the admission date of job $j$ which remained fixed throughout the execution of the algorithm. Let~$b_j$ denote the final end point of~$j$'s region.  

%

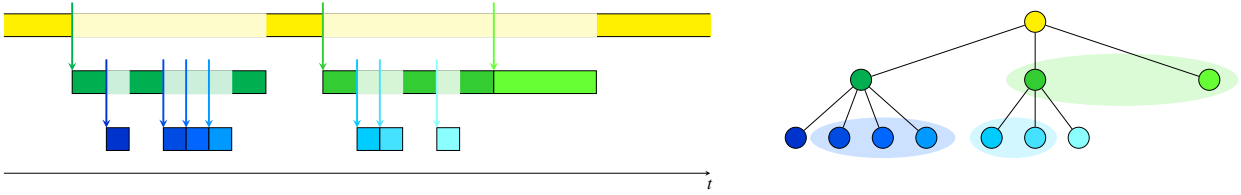
\begin{figure}[tbp]
	\centering
	\begin{subfigure}[t]{.58\textwidth}
		\resizebox{\textwidth}{!}{%
			\begin{tikzpicture} 
			
			\def \r {0}; 
			
			\node[circle, draw=none, fill=none, inner sep=0pt] at (\r, .75) {};
			\node[circle, draw=none, fill=none, inner sep=0pt] at (15.5, -3.5) {};
			
			\def \loo {1.5};			
			\def \lot {7};
			\def \loth {10.75};
			
			\def \lto {2.25};
			\def \ltt {3.5};
			\def \ltth {4};			
			\def \ltf {4.5};
			\def \ltfi {7.75};
			\def \lts {8.25};
			\def \ltse {9.5};
			
			\def \lvldist {1.25}
			\def \jobheight {.5}
			\def \arrowshift {.25}
			
			\draw[-stealth] (0,-3) to (15.5,-3) node [below] {$t$};
			
			\fill[fill=root, draw=none] (\r,0) rectangle (\r + 15.5,\jobheight);
			\draw[draw=black] (\r,0) to (\r + 15.5,0);
			\draw[draw=black] (\r,\jobheight) to (\r + 15.5,\jobheight);
			
			\foreach \job/\pos/\len in {1/\loo/4.25, 2/\lot/3.75, 3/\loth/2.25}{
				\filldraw[fill = level1\job, draw = black, ] (\pos,-\lvldist) rectangle (\pos + \len,-\lvldist + \jobheight);
				\filldraw[draw = root!20, fill = root!20] (\pos,0) rectangle (\pos + \len,0 + \jobheight);
				
				\draw [-stealth, level1\job, very thick] (\pos,\jobheight + \arrowshift ) to (\pos, -\lvldist + \jobheight);
			}
			
			\foreach \job/\pos/\parent in {1/\lto/1, 2/\ltt/1, 3/\ltth/1, 4/\ltf/1, 5/\ltfi/2, 6/\lts/2, 7/\ltse/2}{ 
				\filldraw [fill = level2\job, draw=black]  (\pos,-2*\lvldist) rectangle  (\pos + \jobheight, -2*\lvldist + \jobheight);
				\filldraw[draw = level1\parent!20, fill = level1\parent!20 ] (\pos,-\lvldist)  rectangle (\pos + \jobheight,-\lvldist + \jobheight) ;
				\draw [-stealth, level2\job, very thick] (\pos, \jobheight + \arrowshift - \lvldist) to (\pos, \jobheight - 2*\lvldist);
			}
			\end{tikzpicture}
		}
	\end{subfigure}
	\hfill
	\begin{subfigure}[t]{.38\textwidth}
		\resizebox{\textwidth}{!}{%
			\begin{tikzpicture} [grow=down, level 1/.style={sibling distance=3cm},
			level 2/.style={sibling distance=.75cm}, level distance= 1cm]	
			
			
			\node[circle, draw=none, fill=none, inner sep=0pt] at (-4.5, .5) {};
			\node[circle, draw=none, fill=none, inner sep=0pt] at (3.5, -3) {};
			
			\fill [fill=ellipse123!20] (1.5,-1) ellipse [x radius=2cm,y radius=.45cm];
			\fill [fill=ellipse256!20] (-.375,-2) ellipse [x radius=.75cm,y radius=.35cm];
			\fill [fill=ellipse2234!20] (-2.625,-2) ellipse [x radius=1.25cm,y radius=.35cm];
			
			\node [circle,draw=black,fill=root,minimum width= .1cm] {} 
			child { node [circle,draw=black,fill=level11,minimum width= .1cm] {}
				child { node [circle,draw=black,fill=level21,minimum width= .1cm] {} }
				child { node [circle,draw=black,fill=level22,minimum width= .1cm] {} }
				child { node [circle,draw=black,fill=level23,minimum width= .1cm] {} }
				child { node [circle,draw=black,fill=level24,minimum width= .1cm] {} }
			}
			child { node [circle,draw=black,fill=level12,minimum width= .1cm] {}
				child { node [circle,draw=black,fill=level25,minimum width= .1cm] {} }
				child { node [circle,draw=black,fill=level26,minimum width= .1cm] {} }
				child { node [circle,draw=black,fill=level27,minimum width= .1cm] {} }
			}
			child { node [circle,draw=black,fill=level13,minimum width= .1cm] {} 
			};
			\end{tikzpicture}
		}
	\end{subfigure}
	\caption{Gantt chart of the regions (left) and the interruption tree (right) generated by the region algorithm.}\label{fig:Region:ScheduleAsTree}
\end{figure}

Our analysis crucially relies on understanding the interleaving structure of the regions that the algorithm constructs. This structure is due to the interruption by smaller jobs and can be captured well by a 
	tree or forest. Every job is represented by one vertex. 
A job vertex is the child of another vertex if and only if the region of the latter is interrupted by the first one. The leaves 
correspond to 
jobs with non-interrupted regions. 
By adding a machine job $M$ with $p_M := \infty$ and $a_M=-\infty$, we can assume that the instance is represented by a tree which we call {\em interruption tree}. This idea is visualized in Figure \ref{fig:Region:ScheduleAsTree}, where the vertical arrows indicate the interruption of a region by another job and intervals of the same color belong to one job. Let~$\pi(j)$ denote the \textit{parent} of~$j$. Further, let $T_j$ be the subtree of the interruption tree rooted in job $j$ and let the forest $T_{-j}$ be $T_j$ without its root $j$. By slightly abusing notation, we denote the tree/forest as well as its job vertices by $T_{*}$. 

A key property of this tree is that the processing times on a path 
are geometrically decreasing.




\begin{lemma}
\label{lem:GeometricallyDecreasing}
	Let $j_1,\ldots,j_{\ell}$ be $\ell$ 	jobs on a path in the interruption (sub)tree $T_j$ rooted in $j$ such that $\pi(j_{i+1}) = j_{i}$. Then, $p_{j_\ell} \leq \beta p_{j_{\ell-1}} 
	\cdots \leq \beta^{\ell-1} p_{j_1} \leq \beta^{\ell}p_j$ and the total processing volume is $\sum_{i=1}^{\ell} p_{j_i}\le \sum_{i=1}^{\ell} \beta^i p_{j} \leq \frac{\beta}{1-\beta}\cdot p_{j}$.
\end{lemma}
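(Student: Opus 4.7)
The plan is to unpack the definition of the interruption tree and apply the region preemption rule edge by edge along the path. Recall that by construction $j_{i+1}$ is a child of $j_i$ in the interruption tree precisely because $j_{i+1}$ was admitted by the region preemption routine while $t \in R(j_i)$, i.e., during a remaining portion of $j_i$'s region. The admission condition in the routine is exactly $p_{j_{i+1}} < \beta p_{j_i}$; likewise $p_{j_1} < \beta p_j$ because $j_1$ is a child of $j$. This gives the per-edge geometric decrease with ratio $\beta$.

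I would then iterate this inequality along the path. Formally, a short induction on $i$ shows $p_{j_i} \leq \beta^{i} p_j$: the base case $i=1$ is the edge $j \to j_1$, and the inductive step uses $p_{j_{i+1}} \leq \beta p_{j_i} \leq \beta \cdot \beta^i p_j = \beta^{i+1} p_j$. This yields the chain of inequalities stated in the lemma.

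For the volume bound, I would simply sum the geometric series:
\[
\sum_{i=1}^{\ell} p_{j_i} \;\leq\; \sum_{i=1}^{\ell} \beta^{i} p_j \;\leq\; \sum_{i=1}^{\infty} \beta^{i} p_j \;=\; \frac{\beta}{1-\beta}\, p_j,
\]
which is valid because $0 < \beta < 1$ by the assumptions on the region algorithm's parameters.

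There is no real obstacle here; the content of the lemma is entirely determined by the preemption threshold $\beta$ in the admission rule, together with the fact that ancestor-descendant relationships in the interruption tree correspond to nested interruptions. The only care needed is to make sure one invokes the correct interpretation of ``child in the interruption tree'' so that the inequality $p_{\text{child}} < \beta p_{\text{parent}}$ is available at every edge of the path, including the first edge from $j$ to $j_1$.
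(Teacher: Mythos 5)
Your proof is correct and uses the same core idea as the paper: the admission rule forces $p_{\text{child}} < \beta\, p_{\text{parent}}$ at every edge, and iterating this along the path plus summing the geometric series gives the lemma. The paper's own proof is slightly more roundabout---it first re-indexes $j_1,\ldots,j_\ell$ by decreasing processing time and re-derives that consecutive jobs must be released in order and so must interrupt one another, effectively re-establishing $\pi(j_{i+1})=j_i$ from scratch---whereas you take the hypothesis $\pi(j_{i+1})=j_i$ at face value and invoke the admission threshold directly, which is the cleaner reading of the lemma as stated.
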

\begin{proof}
	Let the jobs~$j_1,\ldots,j_{\ell}$ be indexed in decreasing order of processing times, i.e.,~$p_{j_{\ell} }\leq p_{j_{\ell -1}} \leq \ldots \leq p_{j_1}$. 
	
	Observe that~$p_{j_1} < \beta p_j$ as otherwise the region of~$j$ shall not be preempted by~$j_1$. Furthermore, for any~$2\le i\le \ell$, we claim that job~${j_i}$ is released after~$j_{i-1}$. Suppose the claim is not true, then for some~$i$ job~$j_i$ is released before~$j_{i-1}$. Consider the point in time~$t$ when job~$j_{i-1}$ is admitted. The time~$t$ either belongs to the region of~$j_i$, or belongs to the region of some job~$j'$ which interrupts the region of~$j_i$, and consequently~$p_{j'} < \beta p_{j_i}$. In both cases the algorithm will 
	not admit $j_{i-1}$, and therefore the claim is true. At any point in time when the algorithm admits a job~$j_i$, then it interrupts the region of~$j_{i-1}$ and~$p_{j_i} < \beta p_{j_{i-1}}$. Thus, we have
	\[
	p_{j_\ell} < \beta p_{j_{\ell-1}} < \beta^3 p_{j_{\ell-2}} < \cdots < \beta^{\ell-1} p_{j_1} < \beta^{\ell}p_j.
	\]
	The total processing volume of the jobs $j_1,\ldots,j_\ell$ is
	\begin{eqnarray}\label{eq:1}
	\sum_{i=1}^{\ell} p_{j_i} < \sum_{i=1}^{\ell} \beta^{i}p_j 
	= \frac{\beta(1-\beta^{\ell})}{1-\beta}\cdot p_j
	\leq \frac{\beta}{1-\beta}\cdot p_j.
	\end{eqnarray}
\end{proof}

\section{Successfully completing sufficiently many admitted jobs}\label{sec:Feasibility}

We show that the \region completes sufficiently many jobs among the admitted jobs before their deadline. For scheduling without commitment, we show how to choose~$\alpha, \beta$, and~$\delta$ to ensure that {\em at least half} of all admitted jobs are completed on time. 
For scheduling with commitment, we provide a choice of~$\alpha, \beta$, and~$\delta$ such that {\em every} admitted job is guaranteed to complete on time. 

\subsection{Scheduling without commitment}\label{subsec:NoCommitment}

In this section we fix $\delta=\frac{\eps}{2}$ for $0< \eps \leq 1$. We show the following result.
\begin{theorem}\label{thm:suff-jobs-nocommit}
 	Let $\alpha = 1$ and $\beta = \frac{\eps}{4}$. Then the \region completes at least half of all admitted jobs before their deadline.
\end{theorem}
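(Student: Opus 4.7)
The plan is first to show that the SPT scheduling routine coincides with the region structure: at any time $t$, the admitted-not-yet-completed job that actually runs is precisely the one whose final region contains $t$. One argues this by induction over events. If $k$ is the current region-owner at $t$, then any other admitted job still alive at $t$ must be an ancestor of $k$ in the interruption tree---descendants have ended their regions and completed by $t$, and siblings or cousins are either not yet admitted or already completed because their regions are temporally separated from $R(k)$---and by Lemma~\ref{lem:GeometricallyDecreasing} these ancestors are strictly longer than $k$, so SPT picks $k$. Hence every admitted job $j$ receives its full $p_j$ units of processing exactly during its region intervals, completing at $b_j$, so $j$ meets its deadline iff $b_j \le d_j$. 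Combined with the admission rule $d_j - a_j \ge (1+\delta)p_j$ this yields the sufficient condition $L(j) := p_j + D(j) \le (1+\delta)p_j$, i.e., $D(j) := \sum_{i \in \mathrm{Desc}(j)} p_i \le \delta p_j$. Call such $j$ \emph{good} and otherwise \emph{bad}; the task reduces to showing that at least half of the admitted jobs are good.

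The argument is a structural induction on the interruption tree guided by three facts tailored to $\beta = \eps/4$ and $\delta = \eps/2$ (so $\delta = 2\beta$): (i) every leaf is trivially good ($D=0$); (ii) by Lemma~\ref{lem:GeometricallyDecreasing} the total processing along any single root-to-leaf path in $T_j$ is at most $\frac{\beta}{1-\beta}\,p_j < \delta p_j$, so $T_j$ must branch whenever $j$ is bad; and (iii) if $j$ is bad with exactly one child $c$, then $c$ is also bad, because otherwise $D(j) = p_c + D(c) \le (1+\delta)p_c < (1+\delta)\beta p_j < \delta p_j$ (using $(1+\delta)\beta < \delta$ for $\eps \le 1$), contradicting that $j$ is bad. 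Iterating (iii) yields a forced chain $j_0,\ldots,j_m$ of bads with a single child each, which must terminate at a bad $j_m$ with at least two children (by (i), the chain cannot end at a leaf).

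With these facts in hand I would prove by strong induction on $|T_j|$ that the surplus $\phi(T_j) := |T_j \cap \mathrm{good}| - |T_j \cap \mathrm{bad}|$ is always at least $1$. The cases $j$ leaf and $j$ good internal are immediate from the IH and the $+1$ contribution of $j$. For $j$ bad with $\ge 2$ children, the IH gives $\phi(T_j) \ge -1 + k \ge 1$. The delicate case is $j$ bad with exactly one child: here one peels off the whole forced chain $j_0,\ldots,j_m$ at once, and by iterating the quantitative bound of Lemma~\ref{lem:GeometricallyDecreasing} together with the badness inequality at every level of the chain, one shows that the subtrees hanging off $j_m$ must carry enough aggregate surplus to cover the $m+1$ debits stacked along the chain---specifically, the minimum branching forced at $j_m$ grows like $\Omega(1/\beta^m)$, which dominates the linear debit $m+1$ for every chain length.

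The main obstacle is exactly this single-child chain case: establishing the quantitative lower bound on the aggregate surplus $\sum_i \phi(T_{c_i})$ below $j_m$ as a function of $m$, and verifying that it is enough to preserve the inductive invariant $\phi \ge 1$ for every chain length. Summing $\phi \ge 1$ over the roots of the admitted-job forest then yields $|\mathrm{good}| \ge |\mathrm{bad}|$, and hence at least half of the admitted jobs complete on time.
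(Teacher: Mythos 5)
Your reduction to the combinatorial question is sound: with $\alpha=1$ the region owner is always the unique shortest admitted-and-alive job, so every admitted $j$ runs exactly in $R(j)$ and completes at $b_j$; this is precisely Observation~\ref{obs:b<=d} and its corollary in the paper. Defining ``good'' by $D(j)\le\delta p_j$ and ``bad'' otherwise, and aiming for $|{\rm good}|\ge|{\rm bad}|$, is a legitimately different route from the paper, which instead works directly with the genuine finished/unfinished sets $F_j,U_j$ and proves the quantitative Lemma~\ref{lem:F>U} ($|F_j|-|U_j|\ge\lfloor D(j)/(\beta p_j)\rfloor$, essentially) by a minimal-counterexample argument that repeatedly merges three leaf children of an unfinished node into one and adjusts release dates and deadlines. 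Your facts (i)--(iii) are all correct for $\beta=\eps/4,\ \delta=\eps/2$.

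However, there is a genuine gap, and it sits exactly where you flag it. The invariant you want, $\phi(T_j)\ge 1$, cannot be proved by the induction you set up, because in the single-child-chain case you need $\sum_{c\text{ child of }j_m}\phi(T_c)\ge m+2$, while the inductive hypothesis only supplies $\phi(T_c)\ge1$ per child. Your proposed rescue --- that the branching at $j_m$ is forced to be $\Omega(1/\beta^m)$ --- is not correct as stated: what is forced is a lower bound on the total processing \emph{volume} $D(j_m)$ relative to $p_{j_0}$, not on the number of children of $j_m$. A single bad child $c$ of $j_m$ can absorb essentially all of that volume in a deep subtree, so $j_m$ may have as few as two children, and the weak hypothesis $\phi(T_c)\ge1$ then gives nothing. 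To make a tree-induction work one must strengthen the hypothesis so that $\phi(T_c)$ scales with $D(c)/(\beta p_c)$ --- which is exactly the content of the paper's Lemma~\ref{lem:F>U}, a statement you would then be re-deriving, and which the paper establishes by a quite different (instance-modifying, minimal-counterexample) technique rather than by a structural induction on the interruption tree. As written, the chain case is an unresolved hole, not a routine detail.
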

\noindent The intuition for setting $\alpha=1$ and thus reserving regions of minimum size $|R(j)| = p_j$, for any $j$, is that in the model without commitment, a job does not need to block extra time in the future, (a  region larger than the job) 
to ensure the completion of preempted jobs.


%
The scheduling routine SPT guarantees the following. 
\begin{observation}\label{obs:b<=d}
	For $\alpha \geq 1$, the \region always prioritizes a job within its own region.
\end{observation}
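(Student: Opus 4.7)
The plan is to prove the observation by strong induction on processing time, establishing simultaneously that (a) every admitted job $i$ completes by the end of its region $R(i)$, and (b) whenever $t \in R(j)$ and $j$ is uncompleted, $j$ has the strictly smallest processing time among all admitted uncompleted jobs, so SPT schedules~$j$. The base case is the admitted job $j$ with the globally smallest $p_j$: (b) holds trivially and (a) then follows because $|R(j)| = \alpha p_j \geq p_j$ and SPT keeps $j$ running throughout $R(j)$.

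For the inductive step, fix $j$ and assume (a) and (b) hold for every admitted job with processing time less than $p_j$. To establish (b) for $j$, fix $t \in R(j)$ and suppose toward a contradiction that some admitted uncompleted job $i \neq j$ has $p_i < p_j$. Since $a_i \leq t$ and regions are pairwise disjoint, either $t$ lies in the span $[a_i, b_i]$ but outside $R(i)$ (a ``hole''), or $t > b_i$. In the first case, any such hole is necessarily filled by the region of some proper descendant of $i$ in the interruption tree; by disjointness of regions, this descendant must be $j$, and Lemma~\ref{lem:GeometricallyDecreasing} forces $p_j < p_i$, contradicting $p_i < p_j$. In the second case, the inductive hypothesis~(a) applied to $i$ says that $i$ has already completed by $b_i \leq t$, contradicting that $i$ is uncompleted at time $t$. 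Both cases yield contradictions, so (b) holds for $j$, and then (a) for $j$ follows because SPT runs $j$ throughout $R(j)$ while uncompleted, accumulating $|R(j)| = \alpha p_j \geq p_j$ units of processing.

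The main point to be careful about is the joint induction: (b) for $j$ uses (a) only for strictly shorter jobs, while (a) for $j$ in turn uses (b) for $j$ itself, so within a single inductive step (b) must be established before (a). The role of the hypothesis $\alpha \geq 1$ is confined to the single inequality $|R(i)| = \alpha p_i \geq p_i$, which is exactly what lets SPT actually finish a job inside its own region once the prioritization claim is in place; the rest is driven by the disjointness of regions, the immutability of region start times, and the geometric decay of processing times along tree paths from Lemma~\ref{lem:GeometricallyDecreasing}.
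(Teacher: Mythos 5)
The paper states Observation~\ref{obs:b<=d} without an accompanying proof, so there is no argument in the paper to compare against; your proof supplies one, and it is correct. The joint induction on processing-time rank is set up properly: you use~(a) only for jobs strictly shorter than~$j$ when establishing~(b) for~$j$, and then~(b) for~$j$ to obtain~(a) for~$j$, with $\alpha\geq 1$ entering only through $|R(j)|=\alpha p_j\geq p_j$. The one step I would spell out is the assertion that a hole of~$R(i)$ is covered exactly by regions of $i$'s descendants: it follows from the region-update rule, since when a job~$c$ preempts at time~$s$ inside the convex hull of~$R(i)$, the new region $R(c)=[s,s+\alpha p_c)$ exactly fills the freshly opened gap while everything later shifts right by $\alpha p_c$, preserving the invariant $[a_i,b_i)=\bigcup_{k\in T_i}R(k)$; and because region start times never change, the part of this tiling to the left of the present moment is already final, so the argument is valid in the dynamic view and not only retrospectively. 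Two cosmetic points: the relevant span is $[a_i,b_i)$, not $[a_i,b_i]$ (for $t=b_i$ your second case already applies); and $p_j<p_i$ for a descendant $j$ of $i$ follows directly from the admission rule $p_c<\beta p_{\pi(c)}$ with $\beta<1$, so invoking Lemma~\ref{lem:GeometricallyDecreasing} is sufficient but stronger than needed.
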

Hence, for~$\alpha=1$, every job~$j$ completes at the end of the region at~$b_j$. Thus,~$j$ completes on time if and only if the region~$R(j)$ ends before~$d_j$, i.e.,~$b_j \leq d_j$. We prove Theorem~\ref{thm:suff-jobs-nocommit} by showing that at least half of all regions end before the deadline of their respective jobs. 


\begin{restatable}{lemma}{lemSmallestInstance} 
\label{lem:SmallestInstance}
	W.l.o.g.\ we restrict to instances on which the \region admits all jobs.
\end{restatable}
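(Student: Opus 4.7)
My plan is to prove the lemma by a direct reduction. Given any instance~$I$, let $N$ denote the set of jobs that the \region never admits when run on~$I$, let $A$ be the remaining jobs, and consider the pruned instance $I^{*} := I \setminus N = A$. I will show that running the algorithm on $I^{*}$ produces exactly the same admissions, the same regions, and hence the same set of on-time completions as on~$I$; in particular, every job of $I^{*}$ is admitted. Since Theorem~\ref{thm:suff-jobs-nocommit} concerns only the ratio of completed to admitted jobs, this reduction lets us restrict attention, without loss of generality, to instances on which the algorithm admits all jobs.

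The reduction will be proved by induction on the chronologically ordered sequence of trigger events, with the invariant that at every trigger time the algorithm's state on $I$ and on $I^{*}$ coincides. For events that occur in both instances (releases of jobs in~$A$ and endings of regions), the admission decision depends only on the smallest available job~$i$ and the processing time~$p_k$ of the job whose region currently contains the trigger time. If in $I$ the smallest available job lies in~$N$, then by definition $p_i \ge \beta p_k$, so no admission occurs in $I$; after removing all $N$-jobs the next smallest available job satisfies the same bound, so no admission occurs in $I^{*}$ either. If in $I$ the smallest available job already lies in~$A$, it remains the smallest in~$I^{*}$, and the same decision is made.

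The hard part will be the trigger events that exist in $I$ but are absent in $I^{*}$, namely the releases of jobs $j \in N$. The key claim is that every such trigger is a no-op in $I$: no job gets admitted at~$r_j$. I will argue by contradiction. Suppose some $j'' \neq j$ is admitted at~$r_j$, so $j''$ is the smallest available there and $p_{j''} < \beta p_k$. Let $t''$ be the last trigger strictly before $r_j$; since no event occurs in $(t'', r_j)$, the current region $k$ at $r_j$ coincides with the current region immediately after the event at $t''$, $j$ is not yet released at $t''$, and $j''$ is already available at~$t''$. A short case analysis at~$t''$ then gives the contradiction: either $j''$ was already the smallest there, in which case the same $\beta$-check would have admitted it already at~$t''$; or a strictly smaller job $j^{\#}$ was available at $t''$, and then either $j^{\#}$ was not admitted at $t''$, forcing $p_{j''} > p_{j^{\#}} \ge \beta p_k$, or $j^{\#}$ was admitted at~$t''$, in which case the current region at $r_j$ is $j^{\#}$'s region and $p_{j''} > p_{j^{\#}} = p_k > \beta p_k$. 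In every subcase $p_{j''} < \beta p_k$ is violated.

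Once this no-op claim is in hand, the inductive invariant is preserved through both event types, the two runs produce identical admissions and completions, and the reduction is complete. I expect the main subtlety when writing this up rigorously to be the case analysis at~$t''$, since it must account for jobs that become unavailable in the open interval $(t'', r_j)$ due to deadline expiration, as well as for ties at~$r_j$ coming from a region of $j^{\#}$ that ends exactly at~$r_j$.
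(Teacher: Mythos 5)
Your proposal is correct as a proof of the lemma as literally stated, but it takes a genuinely different route from the paper, and the difference matters. The paper's proof is local to a single subtree: given a job $j$ and its subtree $T_j$, it constructs a fresh instance $\I'$ that contains \emph{only} the jobs of $T_j$ plus possibly one auxiliary ``blocking'' job (needed because, once siblings and ancestors of $j$ are gone, nothing stops the algorithm from admitting $j$ earlier than $a_j$; the auxiliary job, or a shifted release date $r_j':=a_j$, fixes this), and then shows via an event-by-event comparison that the regions in $[a_j,b_j)$ and the tree $T_j$ are reproduced. You instead prune globally: delete the set $N$ of jobs the algorithm never admits and argue that the entire run on $I\setminus N$ is identical, with the key claim that a release of $j\in N$ is a no-op trigger. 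Your invariant and the case analysis at $t''$ are sound (with the edge-case care you already flag for coinciding release and region-end events, and for jobs expiring in $(t'',r_j)$), and your reduction is in some ways cleaner since it avoids the auxiliary-job construction entirely.

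The gap is in what the reduction buys you downstream. Your pruned instance $I^*$ admits every job, but it can still contain admitted jobs that lie \emph{outside} $T_j$: siblings and ancestors of $j$, and, more problematically, jobs released inside $[a_j,b_j)$ that fail the $\beta$-check there but get admitted after $b_j$. The proof of Lemma~\ref{lem:F>U} invokes Lemma~\ref{lem:SmallestInstance} precisely to assume that \emph{no} such jobs exist (``recall that we restrict to jobs in the interruption tree $T_j$''), e.g.\ to conclude that every job released in $[a_l,b_i)$ satisfies $p_k\geq\beta p_i$. That conclusion does not follow from ``all jobs are admitted'' alone. So your argument establishes the lemma as phrased but not the stronger localized version the paper actually relies on; to close that gap you would also need to remove admitted jobs outside $T_j$, which brings you back to the paper's construction and its auxiliary job.
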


We want to show that the existence of a late job~$j$ implies that the subtree $T_j$ rooted in $j$ contains more finished than unfinished jobs. 
We fix a job~$j \in J$ that was admitted by the \region at~$a_j$ and whose region completes at~$b_j$. We want to analyze the structure of all regions~$\mathcal{R}$ in~$[a_j,b_j)$, i.e., the regions of all jobs in $T_j$. 
Let~$F_j$ denote the set of jobs in~$T_j$ that {\em finish on time}. Similarly, we denote the set of jobs in~$T_j$ that complete after their deadlines, i.e., that are {\em unfinished at their deadline}, by~$U_j$. 

\begin{restatable}{lemma}{lemFU}\label{lem:F>U}
	Let $\alpha =1$ and $\beta = \frac{\eps}{4}$, with $\eps >0$. If~$b_j - a_j \geq (\ell + 1) p_j$ for~$\ell > 0$, then~$|F_j| - |U_j| \geq \lfloor \frac{4\ell}{\eps}\rfloor$. 
\end{restatable}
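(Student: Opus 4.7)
The plan is to proceed by strong induction on the interruption subtree $T_j$, based on an exact accounting of how regions grow. First, a direct induction on the sequence of region-preemption events yields the identity
\[ b_j - a_j \;=\; p_j + \sum_{i \in T_{-j}} p_i, \]
because (since $\alpha = 1$) $j$'s region starts with length $p_j$ and every admission of a descendant $i$ of $j$ happens while $j$'s remaining region is still present and therefore extends $b_j$ by exactly $\alpha p_i = p_i$. Combined with the hypothesis $b_j - a_j \ge (\ell+1)\,p_j$, this gives $\sum_{i \in T_{-j}} p_i \ge \ell\,p_j$; and since every descendant of $j$ satisfies $p_i < \beta p_j = (\eps/4)\,p_j$ (a strict bound from the admission rule), we obtain $|T_{-j}| > 4\ell/\eps$, and hence $|T_{-j}| \ge \lfloor 4\ell/\eps\rfloor + 1$.

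For the induction itself, the base case is a leaf: then $\ell = 0$, the admission condition $d_j \ge a_j + (1+\delta)p_j$ together with $\alpha = 1$ ensures $b_j = a_j + p_j \le d_j$, so $|F_j| - |U_j| = 1 \ge 0$. For the inductive step, let $c_1, \ldots, c_k$ be the children of $j$ in $T_j$ and write $b_{c_m} - a_{c_m} = (1+\ell_{c_m})\,p_{c_m}$. Applying the identity to each child and plugging back gives $\ell\,p_j = \sum_m (1+\ell_{c_m})\,p_{c_m}$, and the strict bound $p_{c_m} < \beta p_j$ turns this into the central recursive relation
\[ \sum_m (1+\ell_{c_m}) \;>\; 4\ell/\eps. \]
The IH provides $s_{c_m} := |F_{c_m}| - |U_{c_m}| \ge \lfloor 4\ell_{c_m}/\eps\rfloor$; to absorb the up-to-one unit lost by each floor, I would strengthen the IH to additionally guarantee $s_{c_m} \ge 1$ for every subtree (easily verified by the same induction, since every subtree contains at least one leaf and leaves are always finished). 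With this strengthening, $s_{c_m} \ge \max\{1, \lfloor 4\ell_{c_m}/\eps\rfloor\}$ when $c_m$ is finished, whereas any unfinished child satisfies $\ell_{c_m} > \delta = \eps/2$ and hence $s_{c_m} \ge \lfloor 4\ell_{c_m}/\eps\rfloor \ge 2$.

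The main obstacle is the case when $j$ itself is unfinished: then the sign $\sigma_j := [j\in F_j]-[j\in U_j]$ equals $-1$, so we need $\sum_m s_{c_m} \ge \lfloor 4\ell_j/\eps\rfloor + 1$, and the accumulated floor loss across children is exactly what must be paid by the extra ``$+1$'' supplied by finished subtrees. I would handle this by splitting on the number of children. For $k = 1$ the recursive relation forces $\ell_{c_1} > 4\ell_j/\eps - 1$, and since $j$ being unfinished implies $\ell_j > \eps/2$, a short calculation (using $\eps \le 1$) gives $\lfloor 4\ell_{c_1}/\eps\rfloor \ge \lfloor 4\ell_j/\eps\rfloor + 1$ directly. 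For $k \ge 2$, I would classify each child as \emph{small} (contributing $s_{c_m} \ge 1$ via the strengthened IH) or \emph{large} ($\ell_{c_m} \ge \eps/4$, contributing $s_{c_m} \ge 4\ell_{c_m}/\eps - 1$), and combine the two resulting lower bounds $\sum_m s_{c_m} \ge k$ and $\sum_m s_{c_m} \ge (4/\eps)\sum_m \ell_{c_m} - k$; the recursive relation lets one interpolate between these two bounds so that their maximum meets $\lfloor 4\ell_j/\eps\rfloor + 1$, closing the induction.
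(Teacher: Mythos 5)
Your opening identity $b_j - a_j = p_j + \sum_{i \in T_{-j}} p_i$ (for $\alpha = 1$) is correct, and the resulting count $|T_{-j}| > 4\ell/\eps$ is a nice observation. The $k=1$ case and the base case also check out. However, the heart of the argument---the $k \ge 2$ case when $j$ is unfinished---does not close, and the gap is not a matter of bookkeeping.

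The problem is that the inductive hypothesis you carry is too weak to survive being applied once per child. You propose combining $\sum_m s_{c_m} \ge k$ with $\sum_m s_{c_m} \ge (4/\eps)\sum_m \ell_{c_m} - k$ and taking the max, but the recursive relation $\sum_m(1 + \ell_{c_m}) > 4\ell_j/\eps =: A$ only gives $\sum_m \ell_{c_m} > A - k$, so the second bound becomes $4A/\eps - k(4/\eps + 1)$. The maximum of $k$ and $4A/\eps - k(4/\eps + 1)$, minimized over $k$, is attained at the crossover $k^\star = 2A/(\eps+2)$ and equals $2A/(\eps+2)$, which is \emph{strictly less than} $A$ for every $\eps > 0$ --- far short of the target $\lfloor A\rfloor + 1$. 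Concretely, with $\eps = 1$ and $\ell_j$ just above $1/2$, the target is $3$ while your bound guarantees only about $8\ell_j/3 < 2$. The same shortfall persists even if one uses the tighter pointwise bound $\sum_m \max\{1,\,4\ell_{c_m}/\eps - 1\}$: by convexity of $x \mapsto \max\{1, 4x/\eps - 1\}$ its infimum under the constraint $\sum_m(1+\ell_{c_m}) = C$ is again of order $2C/(\eps + 2) < C$. Children with $\ell_{c_m}$ slightly above $\eps/4$ are the culprit: each contributes about $1 + \eps/4$ to the recursive relation but only the floor $\lfloor 4\ell_{c_m}/\eps\rfloor = 1$ to the IH, and this per-child loss compounds. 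Strengthening the IH to $s_{c_m} \ge 1$ does not repair this, and moreover that strengthening is itself not ``easily verified by the same induction'' as you claim (the unfinished-root case of the strengthening is exactly as hard as what you are trying to prove, so there is a circularity to sort out, though it could in principle be folded into the main induction).

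The paper proves the lemma by a completely different, non-recursive argument: assume a minimal counterexample (a subtree $T_j$ with $b_j - a_j \ge (\ell+1)p_j$ and $|F_j| - |U_j| < \lfloor 4\ell/\eps\rfloor$ using the fewest total jobs) and then perform a sequence of instance modifications --- replacing last children, reordering children into a string, and ultimately merging three leaf children of an unfinished height-one node into a single larger sibling --- to produce a strictly smaller counterexample while preserving $b_j - a_j$, feasibility, and the quantity $|F_j| - |U_j|$ (or decreasing both $|F_j|$ and $|U_j|$ by one). This avoids ever invoking the lemma's conclusion as an IH on a sub-subtree, which is precisely what makes your inductive route lose too much.
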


\begin{proof}[Proof sketch.]
	Assume for the sake of contradiction that there is an instance such that the interruption tree generated by the \region contains a subtree~$T_j$ with~$b_j - a_j \geq (\ell + 1) p_j$ and~$|F_j| - |U_j| < \lfloor \frac{4\ell}{\eps}\rfloor$. Let~$\I$ be such an instance that uses a minimal number of jobs in total. 
	The goal is to construct an instance~$\I'$ that satisfies $b_j - a_j \geq (\ell + 1) p_j$ and~$|F_j| - |U_j| < \lfloor \frac{4\ell}{\eps}\rfloor$ although it uses less jobs than~$\I$. 
	
	To this end, we modify~$\I$ in several steps such that we can merge three jobs to one larger job without violating~$b_j - a_j \geq (\ell +1) p_j$, changing~$|F_j|$ or~$|U_j|$, or making the instance infeasible. The three jobs will be leaves with the same parent~$\ihat$ in~$T_j$. 
	If~$i$ is an unfinished job that has children which are all leaves, then 
	there have to be at least three jobs that interrupt~$\ihat$.		
	After merging the three jobs, we adapt the release date and deadline of~$\ihat$ to guarantee that the modified instance remains feasible. For all these modification steps, it is crucial that we can restrict to instances in which all jobs appear in the interruption tree (Lemma~\ref{lem:SmallestInstance}).
	
	However, this modification might lead to~$b_{\ihat} \leq d_{\ihat}'$ which implies that~$\ihat$ finishes on time. This changes the values of~$|F_j|$ and~$|U_j|$. Clearly, in this case,~$|U_j'| = |U_j| -1$. By a careful analysis, we see that the number of finished jobs decreases by one as well because the three children of~$\ihat$ are replaced by only one finished job. Hence, $|F_j'| - |U_j'| = |F_j| - |U_j|$. If~$\ihat$ does not finish by~$d_{\ihat}'$, then $|F_j'| - |U_j'| = (|F_j| - 2) - |U_j|$. Thus, the modified instance~$\I'$ also violates $|F_j'| - |U_j'| \geq \lfloor \frac{4\ell}{\eps}\rfloor$ but uses less jobs than~$\I$ does; a contradiction. 
\end{proof}


\begin{proof}[Proof of Theorem~\ref{thm:suff-jobs-nocommit}]
	Let $U$ be the set of jobs that are unfinished by their deadline but whose ancestors~(except machine job $M$) have all completed on time. Every job $j\in U$ was admitted by the algorithm at some time $a_j$ 
	with $d_j - a_j \geq (1+\delta) p_j$. With $\delta = \frac{\eps}{2}$ this implies $b_j - a_j > d_j - a_j \geq (1+\frac{\eps}{2}) p_j$. By Lemma \ref{lem:F>U} follows that  $|F_j| - |U_j| \geq \lfloor \frac{4\cdot \eps/2}{\eps} \rfloor = 2 $. Then, $
		|T_j| = |F_j| + |U_j| \leq 2|F_j| - 2 < 2|F_j|.
	$
 	This completes~the~proof. 
\end{proof}
	
\subsection{Scheduling with commitment}\label{subsec:ComAtAdmission}

We analyze the \region for scheduling with commitment. For both models, commitment at admission and $\delta$-commitment, we show that there is a choice of $\alpha$ and $\beta$ such that every job that has started processing will be completed before its deadline.  Recall that we can restrict to analyzing the algorithm in the $\delta$-commitment model since it runs with $\delta=\frac{\eps}{2}$ for commitment~at~admission.

\begin{restatable}{theorem}{lemalljobscommit}\label{lem:all-jobs-commit}
Let $\eps,\delta>0$ be fixed with $\delta <\eps$. If $\alpha\geq 1$ and $0<\beta < 1$ satisfy the condition that 	\vspace*{-1ex}
\begin{equation}\label{cond:1} 
	\frac{\alpha-1}{\alpha} \cdot \bigg( 1+ \delta - \frac{\beta}{1-\beta} \bigg) \geq 1, \vspace*{-1ex}
\end{equation}
then any job $j$ that is admitted by the algorithm at time $a_j\leq d_j -(1+\delta)p_j$ will be finished by $d_j$.
\end{restatable}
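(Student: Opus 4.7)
The plan is to prove this by strong induction on the subtree $T_j$ of the interruption tree, under the stronger hypothesis that every admitted descendant $d$ of $j$ completes by time $a_d + (1+\delta)p_d$ (and hence by $d_d$). The base case, when $j$ is a leaf of its interruption subtree, is immediate: $R(j)$ is never interrupted, and by the Observation $j$ is processed continuously from $a_j$, finishing at $a_j + p_j$.

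For the inductive step I argue by contradiction, assuming $j$'s accumulated processing $W(d_j)$ at time $d_j$ satisfies $W(d_j) < p_j$. First, a structural claim: the descendants of $j$ that are admitted but still incomplete at $d_j$ form a single chain $j_1, j_2, \ldots, j_k$ in the interruption tree, with $j_i$ a child of $j_{i-1}$. This is because the extents $[a_d, b_d)$ of two unrelated descendants are disjoint, so $d_j$ can lie in at most one such extent at each level.

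Next I decompose $[a_j, d_j)$ into three parts: (i) the intervals of $R(j)$ itself, of total length $L$; (ii) the extents of the ``complete'' children of $j$ (those $c$ with $b_c \leq d_j$), of total length $\alpha A$ where $A = \sum_{c\text{ complete}}\sum_{d\in T_c} p_d$; and (iii) the partial extent $[a_{j_1}, d_j)$ of the chain child $j_1$, of length $\tau_1 = d_j - a_{j_1}$. The central insight, stemming from the Observation and the SPT rule, is that during any interval of $R(j)$ it is $j$ that runs; inside a complete child $c$'s extent, the schedule first runs $T_c$ for $\sum_{d\in T_c}p_d$ time and then spends the remaining $(\alpha-1)\sum_{d\in T_c}p_d$ slack on the shortest admitted incomplete job, which is $j$; in contrast, inside $j_1$'s partial extent the analogous slack is captured by $j_1$ (still incomplete and strictly shorter than $j$), so $j$ receives no processing there. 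Using $L \geq 0$ together with $L + \alpha A + \tau_1 = d_j - a_j$, this yields
\[
W(d_j) \;=\; L + (\alpha-1)A \;\geq\; \frac{\alpha-1}{\alpha}\bigl((d_j - a_j) - \tau_1\bigr).
\]

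The main obstacle is bounding $\tau_1$ tightly enough to invoke condition~\conditionref{cond:1}. Applying the very same decomposition recursively inside each $j_i$'s partial extent and using that every $j_i$ is incomplete at $d_j$ (so has been processed for less than $p_{j_i}$ time), I obtain $\tau_i - \tau_{i+1} < \tfrac{\alpha}{\alpha-1} p_{j_i}$, which telescopes to $\tau_1 < \tfrac{\alpha}{\alpha-1}\sum_{i=1}^{k} p_{j_i}$; Lemma~\ref{lem:GeometricallyDecreasing} then bounds the chain sum by $\tfrac{\beta}{1-\beta}p_j$. Substituting back, using $d_j - a_j \geq (1+\delta)p_j$, and invoking condition~\conditionref{cond:1} yields $W(d_j) \geq p_j$, contradicting the assumption and completing the induction.
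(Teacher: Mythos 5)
Your proof follows the same basic architecture as the paper's — decompose $[a_j,d_j)$ into $j$'s own region, fully contained descendant regions, and a ``chain'' of descendants straddling $d_j$, then bound the chain via Lemma~\ref{lem:GeometricallyDecreasing} — but the final accounting does not close under condition~\conditionref{cond:1}. Combining your two displayed inequalities gives
\[
W(d_j) \;\geq\; \frac{\alpha-1}{\alpha}\bigl((d_j-a_j) - \tau_1\bigr) \;>\; \frac{\alpha-1}{\alpha}(d_j-a_j) - \frac{\alpha-1}{\alpha}\cdot\frac{\alpha}{\alpha-1}\cdot\frac{\beta}{1-\beta}\,p_j \;=\; \frac{\alpha-1}{\alpha}(1+\delta)p_j - \frac{\beta}{1-\beta}\,p_j,
\]
so you need $\frac{\alpha-1}{\alpha}(1+\delta) - \frac{\beta}{1-\beta} \geq 1$, which is \emph{strictly stronger} than \conditionref{cond:1} (since $\frac{\alpha-1}{\alpha} < 1$, condition~\conditionref{cond:1} only gives $\frac{\alpha-1}{\alpha}(1+\delta) - \frac{\alpha-1}{\alpha}\cdot\frac{\beta}{1-\beta} \geq 1$). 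One can pick $\alpha,\beta,\delta$ satisfying~\conditionref{cond:1} but failing your inequality, so the argument does not prove the theorem as stated. The extra $\frac{\alpha}{\alpha-1}$ factor enters through the telescoping bound $\tau_i-\tau_{i+1} < \frac{\alpha}{\alpha-1}p_{j_i}$, which converts ``$j_i$ processed for $< p_{j_i}$'' into a bound on \emph{elapsed time} $L_i+\alpha A_i$ at a loss; the paper avoids this by never bounding $\tau_1$ directly. Instead it subtracts the chain jobs' total \emph{processing} $\sum_{j_i\in D_j''} p_{j_i}$ (bounded by Lemma~\ref{lem:GeometricallyDecreasing}) from $d_j-a_j$ and then takes an $\frac{\alpha-1}{\alpha}$-fraction of what remains, giving $\frac{\alpha-1}{\alpha}\bigl((d_j-a_j) - \sum p_{j_i}\bigr) \geq \frac{\alpha-1}{\alpha}\bigl(1+\delta-\frac{\beta}{1-\beta}\bigr)p_j$, which is precisely condition~\conditionref{cond:1}.

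A secondary issue: you define the chain as the descendants of $j$ that are \emph{incomplete at $d_j$}, but the set you actually need (the paper's $D_j''$) is the descendants whose region extent $[a_k,b_k)$ \emph{contains} $d_j$. These differ when $\alpha > 1$: a child $c$ of $j$ can satisfy $a_c + p_c \leq d_j < b_c$, i.e.\ be complete at $d_j$ yet have its extent straddle $d_j$. Such a $c$ is neither a ``complete child'' (since $b_c > d_j$) nor in your chain, so $[a_c,d_j)$ is missing from the partition $L + \alpha A + \tau_1 = d_j-a_j$, and the identity $W(d_j) = L + (\alpha-1)A$ no longer holds. (Also, the induction hypothesis on descendants you announce at the outset is never actually invoked; the paper's proof is correctly phrased as a direct volume argument that is ``independent of the actual schedule.'')
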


\begin{proof}
	Consider a job $j$ that is admitted (and simultaneously accepted for completion) by time $a_j$. It holds that $d_j-a_j\ge (1+\delta) p_j$. We show that~$j$ receives at least~$p_j$ units of time within~$[a_j,d_j)$. 
	Let~$|R(k)|$ denote the total length of intervals in~$R(k)$, the region of job $k$. 
	
	Let~$D_{j}\subseteq T_{-j}$ be the set of jobs whose region delays the region of job~$j$, and has nonempty intersection with~$[a_j,d_j)$. Notice that a job~$k\in D_{j}$ can only be released after time~$a_j$. Let~$D_{j}'\subseteq D_{j}$ be the subset of jobs whose region is completely contained in~$[a_j,d_j)$ and~$D_{j}''=D_{j}\setminus D_{j}'$.  
	
	Consider~$D_{j}'$. Notice that~$\big| \bigcup_{k\in D_{j}'} R(k) \big| = \alpha \sum_{k\in D_{j}'} p_k.$
	Thus, within regions~$R(k)$ of jobs~$k\in D_{j}'$, an~$\frac{\alpha-1}{\alpha}$-fraction of the total time is available for processing job~$j$.
	
	Consider~$D_{j}''=\{j_1,j_2,\cdots,j_\ell\}$ and assume that~$p_{j_1}\ge p_{j_2}\ge\cdots\ge p_{j_\ell}$. Any interval $[a_{j_i},b_{j_i})$ 
	of such a job $j_i$ in~$D_{j}''$ contains~$d_j$. This implies that~$\pi(j_{i+1}) = j_i$ for $0 \leq i < \ell$ where~$j_0 := j$ for simplicity. Applying Lemma~\ref{lem:GeometricallyDecreasing} gives an upper bound on the total processing volume of jobs in~$ D_{j}''$, i.e., 
	%
	$
	\sum_{i=1}^{\ell} p_{j_i}
	\leq \frac{\beta}{1-\beta}\cdot p_j.
	$
	
	To determine the amount of time for processing~$j$ within~$[a_j,d_j)$, we first subtract the total processing time for jobs in~$D_{j}''$. 
	The remaining interval may be covered with regions of~$D_{j}'$ within which we can use an~$\frac{\alpha-1}{\alpha}$-fraction as shown above. Thus, the amount of time that we can process job~$j$ within~$[a_j,d_j)$ is at least
	\begin{equation*}
	\frac{\alpha-1}{\alpha} \cdot \bigg( \left(d_j-a_j\right) -
	\sum_{j_i\in D_{j}''} p_{j_i} \bigg) 
	\geq \frac{\alpha-1}{\alpha} \cdot \left( 1+ \delta - \frac{\beta}{1-\beta} \right) \cdot p_j,
	\end{equation*}
	where we also use the fact that 
	$d_j-a_j \ge (1+\delta) p_j$. This bound is now independent of the actual schedule. We can conclude, if~$\alpha$ and~$\beta$ satisfy Condition~\eqref{cond:1}, then job~$j$ can process for~$p_j$ units of time within~$[a_j,d_j)$ and completes before its deadline.	
\end{proof}

\section{Competitiveness: admission of sufficiently many jobs}
\label{sec:comp-ratio}

We show that the \region admits sufficiently many jobs, independently of the commitment model. 

\begin{theorem}\label{theo:CompetitivenessRegion}
	The number of jobs that an optimal (offline) algorithm can complete 
	on time is at most a factor $\lambda + 1$ larger than the number of jobs admitted by the \region,~where~$\lambda:= \frac{\eps}{\eps -\delta}\frac{\alpha}{\beta}$, for $0<\delta<\eps\leq 1$.
\end{theorem}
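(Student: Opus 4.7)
The plan is to compare $\opt$ to the set of jobs the \region admits through a charging scheme. Every job $o\in\opt$ that the \region also admits contributes trivially to the admitted count, yielding the ``$+1$'' in $\lambda+1$. It then suffices to charge each $o\in\opt$ that was \emph{not} admitted to some admitted job $j$ so that every admitted job receives at most $\lambda$ charges. Combining the two bounds gives $|\opt|\le(\lambda+1)\cdot(\text{number of admitted jobs})$.

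First I would extract the key non-admission property: an unadmitted $o\in\opt$ is algorithmically available throughout $[r_o,d_o-(1+\delta)p_o]$, whose length is at least $(\eps-\delta)p_o$. At every algorithmic event in this window the region preemption routine was invoked but chose not to admit $o$, so the admitted job $k$ whose region was then active must satisfy $p_k\le p_o/\beta$; equivalently, $o$'s potential ``non-admission witnesses'' are admitted jobs at least a factor $1/\beta$ smaller than $o$ itself. A natural first attempt is to charge $o$ to the admitted job whose final region contains $r_o$, or, more symmetrically, to distribute $o$'s unit charge fractionally across its admission window. A direct calculation then yields the weaker per-region bound $\alpha/((\eps-\delta)\beta)$, which misses the target $\lambda$ by a factor~$\eps$.

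To recover the tight factor $\lambda = \frac{\eps}{\eps-\delta}\cdot\frac{\alpha}{\beta}$, I would move charges along the interruption tree using the two routines suggested by the techniques overview. \pushdown{} transfers charge from $j$ to a child $j'\in T_{-j}$ whenever the offending release dates naturally lie in the time-span of the subtree $T_{j'}$; Lemma~\ref{lem:GeometricallyDecreasing} is essential here, since it bounds the total processing volume along any root-to-leaf path by $\frac{\beta}{1-\beta}p_j$ and therefore controls how deeply the pushes can cascade. \pushforward{} balances excess charge along a chain of time-consecutive regions at the same level of the tree (sibling regions), shifting from overloaded regions to neighbouring underloaded ones so that no single node accumulates charges from an arbitrarily long stretch of time.

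Finally I would prove the per-node bound: after all redistributions each admitted $j$ carries at most $\lambda$ charges. The factor $\alpha/\beta$ accounts for how many $\opt$ jobs of size $\ge \beta p_j$ can fit release-wise into a region of length $\alpha p_j$, while the factor $\eps/(\eps-\delta)$ captures the gap between $\opt$'s starting window of length $\eps p_o$ and the algorithm's strictly shorter admission window of length $(\eps-\delta)p_o$. I expect the hard part to be exactly this final counting step: a single region $R(j)$ is fragmented into many sub-intervals by descendants, and $\opt$ jobs of widely varying sizes can have release dates scattered anywhere in $[a_j,b_j)$. Showing that \pushforward{} and \pushdown{} together account for every unadmitted $o$ exactly once, do not cycle, and preserve the invariant $p_j\le p_o/\beta$ on every charged pair will require careful bookkeeping that exploits the geometric decay along tree paths (Lemma~\ref{lem:GeometricallyDecreasing}) together with the feasibility of~$\opt$.
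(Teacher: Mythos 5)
Your high-level plan matches the paper's architecture: decompose $\opt$ into admitted and unadmitted jobs (the ``$+1$'' versus the $\lambda$), charge unadmitted $\opt$-jobs to admitted jobs via the interruption tree, and rebalance with \pushforward and \pushdown. You also correctly identify the interpretation of the two factors in $\lambda$. However, there are genuine gaps that keep this from being a proof rather than a sketch of one.

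First and most importantly, the crucial technical engine is missing. The paper's charging rests on the Volume Lemma~\ref{lem:VolLemma}: if for a string of sibling regions $f,\ldots,g$ and some $x\in\bigcup_{j=f}^g X_j$ the accumulated $\opt$-volume of $X$-jobs finishing before $x$ reaches $\frac{\eps}{\eps-\delta}(b_g-a_f)+p_x$, then $x$ must be big relative to the job $\nextj$ owning $b_g$. The proof crucially uses the \emph{feasibility of~$\opt$}: if $p_x<\beta p_{\nextj}$, then $x$ must have been unavailable at $b_g$, and the accumulated volume forces $C_x^*>d_x$, a contradiction. You only gesture at ``feasibility of~$\opt$'' in your last sentence; without isolating something like the Volume Lemma, the ``careful bookkeeping'' you defer to has no foundation, and the asserted naive per-region bound of $\alpha/((\eps-\delta)\beta)$ is not justified either (a direct fractional charge does not obviously close without already invoking $\opt$'s feasibility).

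Second, the role of Lemma~\ref{lem:GeometricallyDecreasing} is misattributed. Geometric decay along a path is used in Section~\ref{sec:Feasibility} (to prove admitted jobs complete on time, Theorem~\ref{lem:all-jobs-commit}), not to control \pushdown. In the competitiveness argument the relevant accounting is combinatorial: the target is $|X_j^S|\leq\lambda\tau_j$ where $\tau_j=|T_{-j}|$ counts descendants, and \pushdown distributes excess charge so each descendant carries at most $\lambda$; the geometric decay is not what prevents cascading.

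Third, your ``non-admission witness'' characterization --- that throughout $o$'s availability window the active region's owner $k$ satisfies $p_k\leq p_o/\beta$ --- is not correct as stated: at a given event, $o$ can fail to be admitted not because $p_o\geq\beta p_k$ but because a \emph{different, shorter} available job $i$ is the one the routine considers. The paper handles this through the careful definition of $X_j$ (requiring both $r_x\in[a_j,b_j)$ and $p_x<\beta p_{\pi(j)}$) and Observation~\ref{obs:XandXj}, which you would need to reconstruct before the charging can be carried out.
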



To prove the theorem, we fix an instance and an optimal algorithm \opt. We can assume that an optimal offline algorithm does not schedule any job it cannot complete before its deadline. Let $X$ be the set of jobs that \opt scheduled and the \region did not admit. Let $J$ denote the jobs that the region algorithm admitted. Then, $X\cup J$, the union of the jobs scheduled only by \opt and the jobs admitted by our algorithm, is a superset of the jobs scheduled by \opt. Thus, showing $|X|\leq \lambda |J|$ implies Theorem~\ref{theo:CompetitivenessRegion}.

To this end, we develop a charging procedure that assigns each job in $X$ to a unique job in $J$ such that each job $j\in J$ is assigned at most $\lambda = \frac{\eps}{\eps -\delta}\frac{\alpha}{\beta}$ jobs. For a job~\mbox{$j\in J$} admitted by the \region we define the subset $X_j \subset X$ based on release dates. Then, we inductively transform the laminar family $(X_j)_{j\in J}$ into a partition $(Y_j)_{j \in J}$ of $X$ with $|Y_j| \leq \lambda$ for all $j\in J$ in the proof of Lemma~\ref{lem:SoN}, starting with the leaves in the interruption tree as base case (Lemma~\ref{lem:SoL}). For the construction of $(Y_j)_{j \in J}$, we heavily rely on the key property (Volume Lemma~\ref{lem:VolLemma}) and Corollary~\ref{cor:IsolatedJobs}.

More precisely, for a job $j\in J$ let $X_j$ be the set of jobs $x \in X$ that were released in the interval $ [a_j,b_j)$ and 
satisfy $p_x < \beta p_{\pi(j)}$. 
Let $X_j^S := \{x\in X_j: p_x < \beta p_j\}$ and $X_j^B := X_j \setminus X_j^S$ denote the \textit{small} and the \textit{big} jobs, respectively, in~$X_j$. Recall that $[a_j,b_j)$ is the convex hull of the region~$R(j)$ of job~$j$ and includes the convex hulls of all descendants of~$j$ in the interruption tree, i.e., jobs in $T_j$. In particular,~$X_k \subset X_j$ if~$k\in T_j$.  

\begin{observation}\label{obs:XandXj}
	\mbox{ } 
	\begin{compactenum}
		\item\label{obs:XandXj:X} 
		Any job~$x\in X$ that is scheduled successfully by \opt and that is not admitted by the \region is released within the region of some job~$j\in J$, that is,~$\bigcup_{j \in J} X_j = X$. 
		\item\label{obs:XandXj:Xj} As the \region admits any job that is small w.r.t.~$j$ and released in $R(j)$, $X_j^S = \bigcup_{k: \pi(k) = j} X_k$.
	\end{compactenum}
\end{observation}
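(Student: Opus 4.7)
My plan is to derive both parts from a single technical claim:

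$(\star)$ \emph{If $x \in X$, $h \in J \cup \{M\}$, $r_x \in R(h)$, and $p_x < \beta p_h$, then we reach a contradiction.}

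The argument for $(\star)$ is the heart of the proof. Since \opt completes $x$ on time, $\eps$-slackness gives $d_x - r_x \geq (1+\eps) p_x > (1+\delta) p_x$, so $x$ is available at time $r_x$ in the sense of the algorithm. The main obstacle, and the step I plan to record as a preliminary lemma, is a monotonicity property of the region construction: each interruption at some time $t$ only modifies the portion of a region lying in $[t,\infty)$, so the interval of $R(h)$ that in the final schedule contains $r_x$ was already present in the snapshot of $R(h)$ at time $r_x$. Granting this, the release event of $x$ triggers the region preemption routine with $k = h$; the shortest available job $i$ satisfies $p_i \leq p_x < \beta p_h$ and is therefore admitted. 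If $i = x$, then $x$ was admitted, contradicting $x \in X$. Otherwise $i$ becomes a new child of $h$ in the interruption tree with $r_x \in R(i)$ in the final schedule, and since regions of different jobs are disjoint this contradicts $r_x \in R(h)$.

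Part \itref{obs:XandXj:X} then follows by applying $(\star)$ with $h := M$: since $p_x < \infty = \beta p_M$ trivially, the contrapositive of $(\star)$ yields $r_x \notin R(M)$. Hence $r_x$ lies in the convex hull $[a_j, b_j)$ of some child $j$ of $M$ in the interruption tree, and for this $j$ we have $\pi(j) = M$ and $p_x < \beta p_{\pi(j)} = \infty$, so $x \in X_j$. This proves $X \subseteq \bigcup_{j \in J} X_j$; the reverse inclusion holds by definition of $X_j \subseteq X$.

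Part \itref{obs:XandXj:Xj} has the easy direction ``$\supseteq$'': if $x \in X_k$ for a child $k$ of $j$, then $r_x \in [a_k,b_k) \subseteq [a_j,b_j)$ and $p_x < \beta p_{\pi(k)} = \beta p_j < \beta p_{\pi(j)}$, whence $x \in X_j^S$. For the direction ``$\subseteq$'', take $x \in X_j^S$, so $r_x \in [a_j,b_j)$ and $p_x < \beta p_j$. Applying $(\star)$ with $h := j$ rules out $r_x \in R(j)$, so $r_x$ must lie in the convex hull $[a_k, b_k)$ of some child $k$ of $j$. Combined with $p_x < \beta p_j = \beta p_{\pi(k)}$, this gives $x \in X_k$, completing the proof.
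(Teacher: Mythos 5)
The paper states this as an observation without proof, and your claim $(\star)$ formalizes exactly the one-clause hint given in part~\itref{obs:XandXj:Xj}, so your route is the intended one and the reduction of both parts to $(\star)$ via $h=M$ and $h=j$ (using that $[a_j,b_j)\setminus R(j)$ is precisely the union of the children's convex hulls) is sound. The one genuine issue is in the preliminary monotonicity lemma: as a standalone statement about the region dynamics it is false in the edge case $a_h = r_x$ where $h$ is admitted at an event at time $r_x$ processed \emph{after} the release of $x$, since then the interval of $R(h)$ that will finally contain $r_x$ does not yet exist when $x$'s release is handled. Fortunately the hypothesis $p_x < \beta p_h$ of $(\star)$ rules this case out: at $h$'s admission the job $x$ is available (because $d_x - r_x \geq (1+\eps)p_x > (1+\delta)p_x$) and not yet admitted, so the shortest-available rule would force $p_h \leq p_x < \beta p_h$, a contradiction. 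You should either state the lemma conditionally on $p_x < \beta p_h$ or dispose of this case separately. A smaller cosmetic remark: after $i$ is admitted at $r_x$ you assert $r_x \in R(i)$ in the \emph{final} schedule, which can fail if yet another job is admitted at the same instant $r_x$ afterwards; what you in fact need, and do get, is only that $r_x \notin R(h)$ in the final schedule, because the interruption at $r_x$ removes $r_x$ from $R(h)$ permanently.
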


As observed above, to prove the main Theorem \ref{theo:CompetitivenessRegion}, it suffices to show that~$|X| \leq \lambda |J|$. By Observation \ref{obs:XandXj}, $X = X_M^S$ and, thus, it is sufficient to show that $|X_M^S| \leq \lambda |J|$. In fact, we show a stronger statement. We consider each job~$j\in J$ individually and prove that the number of small jobs in~$X_j$ 
is bounded, i.e., $|X_j^S| \leq \lambda \tau_j$, 
where~$\tau_j$ is the number of descendants~of~$j$, i.e., $\tau_j := |T_{-j}|$.

More precisely, the fine-grained definition of the sets~$X_j$ in terms of the release dates 
and the processing times 
allows us to show that any job~$j$ with~$|X_j| > \lambda (\tau_j +1) $ has {\em siblings}~$j_1,\ldots,j_k$ such that~$|X_j| + \sum_{i=1}^k|X_{j_i}| \leq \lambda ( \tau_j +1 + \sum_{i=1}^k (\tau_{j_i}+1))$. We call~$i$ and~$j$ siblings if they have the same parent in the interruption tree. Simultaneously applying this charging idea to {\em all} descendants of a job~$h$ already proves $|X_h^S| \leq \lambda \tau_h $ as $X_h^S = \bigcup_{j: \pi(j) = h} X_j$ by Observation \ref{obs:XandXj}.

We prove that this ``balancing" of~$X_j$ between jobs only happens between siblings $j_1,\ldots,j_k$ with the property that $b_{j_i} = a_{j_{i+1}}$ for $1\leq i < k $. We call such a set of jobs a {\em string} of jobs. The ellipses in Figure \ref{fig:Region:ScheduleAsTree} visualize the maximal strings of jobs. A job~$j$ is called {\em isolated} if it has the property that  $b_i \neq a_j$ and $b_j \neq a_i$ holds for all children $i\neq j$ of $\pi(j)$. 

The next lemma is a key ingredient for the proof of Theorem \ref{theo:CompetitivenessRegion}. When we talk about a subset of $J$, we index the jobs in the subset in order of increasing admission points~$a_j$. Conversely, 
for a subset of~$X$, we order the jobs in increasing order of completion times, $C_x^*$, in the optimal schedule.

\begin{lemma}[Volume Lemma]\label{lem:VolLemma} 
Let $f,\ldots,g \in J$ be jobs with a common parent in the interruption tree. Let $x\in \bigcup_{j=f}^g X_{j}$ such that \vspace*{-2ex} \begin{equation}\label{eq:VolumeCondition}\tag{V}
	\sum_{j=f}^g \sum_{\substack{y\in X_j: \\ C_y^* \leq C_x^*}} p_y \geq \frac{\eps}{\eps-\delta}(b_{g} - a_{f}) + p_x. \vspace*{-1ex}
	\end{equation}
	Then, $p_x \geq \beta p_{\nextj}$, where $\nextj\in J\cup\{M\}$ is the job whose region contains $b_{g}$, i.e., $b_{g}\in R(\nextj)$. 
\end{lemma}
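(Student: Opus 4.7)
The plan is to use condition~\eqref{eq:VolumeCondition} to certify that $x$ is still available at the trigger event $t=b_g$, and then to observe that the \region's region-preemption rule must admit \emph{some} job at $b_g$, which forces the desired inequality almost immediately. Every $y$ in the left-hand sum of~\eqref{eq:VolumeCondition} lies in $X_j$ for some $f\le j\le g$, so $r_y\ge a_f$; combined with $C_y^*\le C_x^*$ and the fact that OPT uses a single machine, this yields $C_x^*-a_f\ge\frac{\eps}{\eps-\delta}(b_g-a_f)+p_x$. Using $C_x^*\le d_x$ and rearranging gives
\[
d_x-b_g \ \ge\ \frac{\delta}{\eps-\delta}(b_g-a_f)+p_x.
\]

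Next, I would certify that $x$ is still available at $b_g$ by a case-split on $b_g-r_x$. If $b_g-r_x\le(\eps-\delta)p_x$, then the $\eps$-slackness $d_x\ge r_x+(1+\eps)p_x$ of $x$ alone gives $d_x-b_g\ge(1+\eps)p_x-(\eps-\delta)p_x=(1+\delta)p_x$. Otherwise $b_g-a_f\ge b_g-r_x>(\eps-\delta)p_x$ (using $r_x\ge a_f$), and the display above gives $d_x-b_g>\delta\,p_x+p_x=(1+\delta)p_x$. In either case, together with $r_x<b_g$, the job $x$ is available at the trigger event $b_g$.

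Finally, write $h=\pi(j)$ for the common parent of $f,\ldots,g$. Just after $R(g)$ ends at $b_g$, the innermost active region is a region of $h$, so the region-preemption subroutine is invoked with $k=h$. Since $x$ is available and $p_x<\beta p_h$, the shortest available job $i^*$ at $b_g$ satisfies $p_{i^*}\le p_x<\beta p_h$, so the subroutine admits $i^*$ at $b_g$ with $R(i^*)=[b_g,b_g+\alpha p_{i^*})$; by disjointness of regions this new region contains $b_g$, so $\nextj=i^*$. Because $x\in X$ is not admitted we must have $i^*\ne x$, whence $p_{i^*}\le p_x$, and using $\beta\le 1$,
\[
p_x \ \ge\ p_{i^*} \ \ge\ \beta\,p_{i^*} \ =\ \beta\,p_{\nextj}.
\]

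\textbf{Main obstacle.} The central point is the case-split: the multiplier $\frac{\eps}{\eps-\delta}$ in~\eqref{eq:VolumeCondition} is precisely tuned so that the deadline slack $d_x-b_g$ exceeds $(1+\delta)p_x$ regardless of how $r_x$ sits inside $[a_f,b_g)$; the two cases cover, respectively, ``$x$ released late inside $[a_f,b_g)$'' (handled by $x$'s own slackness) and ``$x$ released early'' (handled by the volume hypothesis). Once availability at $b_g$ is certified, the conclusion reads off the region-preemption rule, using only that $p_x<\beta p_h$ forces an admission at $b_g$ of a job no longer than $x$.
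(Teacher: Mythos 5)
Your proof is correct, and it is a direct (contrapositive-free) reorganization of the paper's argument. The paper argues by contradiction: assuming $p_x<\beta p_{\nextj}$, it deduces that $x$ was unavailable at $b_g$, which combined with $\eps$-slackness yields $b_g-a_f>(\eps-\delta)p_x$; feeding this into \eqref{eq:VolumeCondition} then forces $C_x^*>d_x$, contradicting feasibility of \opt. You instead prove availability of $x$ at $b_g$ directly from \eqref{eq:VolumeCondition}, $C_x^*\le d_x$, and slackness, with the case split on $b_g-r_x$ serving exactly the role that the paper's non-availability assumption plays (it is the contrapositive of the step $b_g-a_f>(\eps-\delta)p_x$); and you then explicitly trace the region-preemption routine at $b_g$, identifying $\nextj$ with the job $i^*$ admitted there. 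One small clarification worth noting: your final chain $p_x\ge p_{i^*}\ge\beta p_{i^*}$ in fact gives the stronger conclusion $p_x\ge p_{\nextj}$ whenever some job is admitted at $b_g$, which is automatic under your hypotheses since an available $x$ with $p_x<\beta p_h$ forces an admission; the paper states only the $\beta$-weakened form because its contradiction argument does not need to distinguish whether $\nextj$ is the newly admitted job or the job $k=h$ whose region spans $b_g$ in the no-admission case. Both arguments rely on the same three facts: the single-machine volume bound forcing $d_x-b_g\ge\frac{\delta}{\eps-\delta}(b_g-a_f)+p_x$, the $\eps$-slackness of $x$, and the observation that $r_x\ge a_f$ for $x\in\bigcup_{j=f}^g X_j$.
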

\begin{proof}[Proof of the Volume Lemma]
	Let $f,\ldots,g$, $x$, and $\nextj$ as in the lemma. Since $x \in X$, the region algorithm did not accept $x$ at time $b_{g}$. There are two possible reasons for this behavior: either $p_x \geq \beta p_{\nextj}$ or $x$ was not available for admission at time $b_{g}$ anymore. 

	Assume for the sake of contradiction that $p_x < \beta p_{\nextj}$ and, thus, $d_x - b_{g} < (1 + \delta) p_x$. By assumption, $r_x \geq a_{f}$ and $d_x - r_x \geq (1 + \eps) p_x$. Hence, \[
		b_{g} - a_{f} \geq b_{g} - d_x + d_x - r_x 
		> -(1+\delta) p_x + (1 + \eps) p_x 
		 = (\eps - \delta)p_x.
	\]
	
	By \eqref{eq:VolumeCondition}, the volume \opt processes between $b_{g}$ and $C_x^*$ is at least $
		\frac{\delta}{\eps-\delta}  (b_{g} - a_{f}) + p_x $. By applying the above calculated lower bound, we get that \[ \frac{\delta}{\eps-\delta}  (b_{g} - a_{f}) + p_x \geq \delta p_x + p_x =  (1 + \delta) p_x
	\] and, hence, that 
	$C_x^* \geq b_{g} + (1 + \delta) p_x > d_x$, which contradicts that \opt is a feasible schedule. 
\end{proof}

%
%
%

The next corollary follows directly from the Volume Lemma applied to a string of jobs or to a single job~$j \in J$ (let $f=j=g$). To see this, recall that $X_j$ contains only jobs that are small w.r.t.~$\pi(j)$, i.e., all $x\in X_j$ satisfy $p_x < \beta p_{\pi(j)}$.  We will use the corollary repeatedly to generate~a~contradiction.
\begin{corollary}\label{cor:IsolatedJobs}
	Let $\{ f,\ldots, g \} \subset J$ be a string of jobs and  let $x\in \bigcup_{j=f}^g X_j$ satisfy the Volume Condition \eqref{eq:VolumeCondition}. Then, there exists a sibling~$\nextj\in J$ of $g$ in the interruption tree with $b_g = a_{\nextj}$.
\end{corollary}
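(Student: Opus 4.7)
The plan is to apply the Volume Lemma directly and then extract the structural claim via a case analysis on the interruption tree. First, the Volume Lemma provides a job $\nextj \in J \cup \{M\}$ with $b_g \in R(\nextj)$ and $p_x \geq \beta p_{\nextj}$. Since $x \in X_j$ for some $j \in \{f,\dots,g\}$, the definition of $X_j$ gives $p_x < \beta p_{\pi(j)} = \beta p_\pi$, where $\pi$ denotes the common parent of $f,\ldots,g$. Combining the two bounds yields $p_{\nextj} < p_\pi$. Lemma~\ref{lem:GeometricallyDecreasing}, together with $p_M = \infty$, then rules out $\nextj = M$, $\nextj = \pi$, and any ancestor of $\pi$, since processing times strictly increase toward the root of the interruption tree.

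Next, I would argue that $b_g$ lies strictly inside $\pi$'s convex hull $[a_\pi, b_\pi)$. When $g$ was admitted at $a_g$ inside an interval $[a', b')$ of $R(\pi)$, the region-update rule split that interval into $[a', a_g) \cup [a_g + \alpha p_g, b' + \alpha p_g)$, so the second part of $R(\pi)$ has positive length $b' - a_g > 0$. Any later admission shifts $b_g$ and the left endpoint of this remaining $R(\pi)$ piece by the same amount, so this piece persists throughout the execution. Hence $b_\pi > b_g$, and since $\nextj$ is neither $\pi$ nor an ancestor of $\pi$, $\nextj$ must be a proper descendant of $\pi$, contained in some subtree $T_c$ where $c$ is a child of $\pi$.

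To finish, I would locate $c$. The case $c = g$ is impossible because $R(\nextj) \subseteq [a_g, b_g)$ cannot contain its right endpoint $b_g$. For $c \in \{f,\ldots,g-1\}$, the string property gives $b_c = a_{c+1} \leq a_g < b_g$, so $R(\nextj) \subseteq [a_c, b_c)$ again excludes $b_g$. Therefore $c$ is a sibling of $g$ outside $\{f,\ldots,g\}$. Disjointness of the convex hulls of children of $\pi$ forces $[a_c, b_c) \cap [a_f, b_g) = \emptyset$, and combined with $b_g \in [a_c, b_c)$ this pins down $a_c = b_g$. Finally, a proper descendant of $c$ would have admission time strictly greater than $a_c = b_g$, contradicting $b_g \in R(\nextj)$, so $\nextj = c$ is the desired sibling.

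The main technical obstacle is the intermediate step, where one must carefully exploit the half-open interval convention and the way later admissions ripple delays through $R(\pi)$ and $R(g)$ symmetrically, in order to guarantee $b_g < b_\pi$; once this invariant is in hand, the remaining case analysis is essentially forced by the disjointness of regions across different branches of the interruption tree.
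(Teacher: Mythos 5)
Your proof is correct and follows the same route the paper sketches: apply the Volume Lemma, use $p_x < \beta p_{\pi(j)}$ to force $p_{\nextj} < p_\pi$ and thereby rule out $M$, $\pi$, and all ancestors. The paper leaves the remaining interval-structure argument implicit with a one-line remark ("follows directly"); you supply it carefully — showing $b_g < b_\pi$ via the region-update invariant, then excluding $T_f,\ldots,T_g$ and descendants of the target sibling — and that gap-filling is sound.
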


A main part of our proof is to show 
\eqref{eq:VolumeCondition} only relying on 
$|X_j| > \lambda (\tau_j +1) $. The relation between processing volume and cardinality of job sets is possible due to the definition of $X_j$ based on $T_j$.
The following lemma serves as base case for strings of leaves as well as role model for non-isolated nodes in the interruption~tree.

\begin{restatable}{lemma}{lemStringOfLeaves}
	\label{lem:SoL} Let $\{f, \ldots, g\}\subset J$ be jobs at maximal distance from~$M$ such that $\sum_{j=f}^i |X_{j}| > \lambda (i+1 - f) $ holds for all $f\leq i\leq g$. If~$g$ is the last such job, there is a sibling $\nextj$ of $g$ with $b_{g} = a_{\nextj}$ and $\sum_{j = f }^{\nextj} |X_{j}| \leq \lambda (\nextj + 1 - f)$.
\end{restatable}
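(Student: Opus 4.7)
The plan is to establish the sibling $j^*$ by invoking Corollary~\ref{cor:IsolatedJobs} on the string $\{f,\ldots,g\}$; once $j^*$ is produced, the bound $\sum_{j=f}^{j^*}|X_j|\le\lambda(j^*+1-f)$ drops out of the maximality hypothesis ``$g$ is the last such job'', since otherwise $j^*$ would itself extend the prefix-violation list, contradicting that we stopped at~$g$.

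The first step is to collect structural consequences of the leaf assumption. For a leaf~$j$, no job $y$ with $r_y\in R(j)$ and $p_y<\beta p_j$ can exist: at time $r_y$ such a $y$ is available (slack $\ge(1+\eps)p_y\ge(1+\delta)p_y$), so the region preemption routine would admit the shortest available small job and produce a child of~$j$, contradicting that $j$ is a leaf. Hence $X_j^S=\emptyset$, $X_j=X_j^B$, every $y\in X_j$ satisfies $\beta p_j\le p_y<\beta p_{\pi(g)}$, and since leaves have uninterrupted regions of length $\alpha p_j$ and $\{f,\ldots,g\}$ is contiguous, we obtain $b_g-a_f=\alpha\sum_{j=f}^g p_j$, so the threshold in~\eqref{eq:VolumeCondition} becomes $T:=\frac{\eps}{\eps-\delta}(b_g-a_f)=\lambda\beta\sum_{j=f}^g p_j$ and the leaf bound yields $S:=\sum_{j=f}^g\sum_{y\in X_j}p_y\ge\beta\sum_{j=f}^g |X_j|\,p_j$.

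Next I would verify~\eqref{eq:VolumeCondition}. Enumerate $\bigcup_{j=f}^g X_j$ as $y_1,\ldots,y_n$ in non-decreasing order of $C^*_{y_i}$ and let $W(k)=\sum_{i\le k}p_{y_i}$. Because each $y_i$ belongs to some $X_j$ we have $p_{y_i}<\beta p_{\pi(g)}$, so the first $k$ at which $W(k)\ge T+\beta p_{\pi(g)}$ automatically satisfies $W(k)\ge T+p_{y_k}$, witnessing~\eqref{eq:VolumeCondition} for $x:=y_k$. It therefore suffices to show $S\ge T+\beta p_{\pi(g)}$, which by the leaf bound reduces to $\sum_{j=f}^g(|X_j|-\lambda)\,p_j\ge p_{\pi(g)}$. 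I would prove this by contradiction: if~\eqref{eq:VolumeCondition} failed for every~$x$, Corollary~\ref{cor:IsolatedJobs} would imply that no sibling of~$g$ is admitted at~$b_g$; but then no $y\in\bigcup_j X_j$ can be available at time~$b_g$ (otherwise the region preemption routine would admit the shortest such and create a sibling of~$g$). This forces $d_y<b_g+(1+\delta)p_y$, and combined with $d_y-r_y\ge(1+\eps)p_y$ and $r_y\ge a_f$ yields $p_y<(b_g-a_f)/(\eps-\delta)$ for every such $y$. OPT must process the entire volume $S$ inside the single-machine window $[a_f,\,b_g+(1+\delta)\beta p_{\pi(g)}]$, and pairing the resulting upper bound on $S$ with the leaf lower bound $S\ge\beta\sum_j|X_j|p_j$ contradicts the cardinality premise $\sum_{j=f}^i|X_j|>\lambda(i+1-f)$.

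The hard part is precisely this closing contradiction. A uniform pointwise use of the premise is insufficient, because the job $y_n$ completed last by OPT could have size close to $\beta p_{\pi(g)}$ and soak up the entire excess volume. To close the gap I would combine Abel summation on $\sum_{j=f}^g(|X_j|-\lambda)p_j=A_gp_g+\sum_{i=f}^{g-1}A_i(p_i-p_{i+1})$, using the positivity of every partial sum $A_i=\sum_{j=f}^i(|X_j|-\lambda)>0$, with the uniform upper bound $p_j<\beta p_{\pi(g)}$. This is exactly the step where the premise ``$\sum_{j=f}^i|X_j|>\lambda(i+1-f)$ for \emph{every} $i$''—not merely for $i=g$—enters in an essential way. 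Once the weighted inequality is secured, Corollary~\ref{cor:IsolatedJobs} supplies~$j^*$ with $b_g=a_{j^*}$ and, as noted, the second conclusion is immediate from maximality.
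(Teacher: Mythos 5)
Your plan diverges from the paper's and, as it stands, contains two genuine gaps.

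\textbf{The volume lower bound fails.} Your closing step needs $\sum_{j=f}^g (|X_j|-\lambda)\,p_j \ge p_{\pi(g)}$ (or at least that it is positive), and you hope to extract it from the prefix-sum cardinality hypothesis via Abel summation. But Abel summation gives
$\sum_{j=f}^g (|X_j|-\lambda) p_j = A_g p_g - \sum_{i=f}^{g-1} A_i(p_{i+1}-p_i)$ with $A_i>0$, and this is only usable if $p_i \ge p_{i+1}$ along the string. No such monotonicity is available: siblings in the interruption tree are ordered by admission time, not by processing time. In fact the cardinality hypothesis alone does not even force positivity. Take $\lambda=2$, $g=f+1$, $|X_f|=10$, $|X_{f+1}|=1$, $p_f=1$, $p_{f+1}=10$: then $\sum_{j=f}^i|X_j|>\lambda(i+1-f)$ for $i\in\{f,g\}$, yet $\sum_j(|X_j|-\lambda)p_j = 8-10<0$. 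So the global conversion from job counts to job volume, on which the whole sufficiency argument rests, cannot be established this way.

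\textbf{The ``contradiction'' uses the corollary backwards.} Corollary~\ref{cor:IsolatedJobs} states: \emph{if} some $x$ satisfies \eqref{eq:VolumeCondition}, \emph{then} a sibling $\nextj$ with $b_g=a_{\nextj}$ exists. You argue ``if \eqref{eq:VolumeCondition} fails for every $x$, the corollary implies no sibling exists at $b_g$.'' That is the converse, not the contrapositive, and is unsupported. Consequently the subsequent deductions (``no $y$ is available at $b_g$,'' the tight deadline bound, the bounded processing window for \opt) are not licensed. Even granting them, the arithmetic you would be left with does not contradict the hypothesis, for the same reason as above: the cardinality bound does not control volume without control of the $p_j$'s.

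For comparison, the paper avoids both problems by a local, inductive construction (\pushforward): it builds sets $Y_f,\ldots,Y_g$ of exactly $\lambda$ jobs each, all big w.r.t.\ the corresponding leaf, and shows that every leftover job $x\in L_j$ is big w.r.t.\ $j+1$ by applying the Volume Lemma to a carefully chosen sub-string $[\ihat(x),j]$ rather than the whole string. This per-job, per-prefix bookkeeping is exactly what substitutes for the missing monotonicity; a single global volume inequality does not suffice. Your preliminary observations (for a leaf $j$, $X_j^S=\emptyset$, so $X_j=X_j^B$; the string gives $b_g-a_f=\alpha\sum_{j=f}^g p_j$; choosing $x$ with largest $C_x^*$ once the volume threshold is reached) are all correct and also appear in the paper, but the bridge from the cardinality hypothesis to the Volume Condition is where the real work lies and where the proposal does not go through.
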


\begin{proof}[Proof sketch]
	Observe that $[a_{f}, b_{g}) = \bigcup_{j=f}^g R(j)$ because the leaves $f, \ldots, g$ form a string of jobs. Thus, by showing that there is a job $x \in  X_f^g := \bigcup_{j=f}^g X_{j}$ that satisfies \eqref{eq:VolumeCondition}, we prove the statement with the Volume Lemma. To this end, we show that for every job $f \leq j \leq g$ there exists a set~$Y_j$ such that the processing volume within $Y_j$ is sufficient to cover the interval $[a_{j},b_{j})$ at least $\epsfrac$ times. More precisely, $Y_f,\ldots,Y_g$~satisfy  
	\begin{center}
	\begin{enumerate*}[label=(\roman*)]
		\item\label{enum:proofSoL:YSubsetX} $\bigcup_{j=f}^g Y_j \subset X_f^g$, 
		\item\label{enum:proofSoL:Y=lambda}  $|Y_j| = \lambda$, and 
		\item\label{enum:proofSoL:YBigJobs}  $Y_j \subset\{x \in X_f^g: p_x \geq  \beta p_{j}\}$ for every $f\leq j \leq g$. 
	\end{enumerate*}
	\end{center}
	Then, \ref{enum:proofSoL:Y=lambda} and \ref{enum:proofSoL:YBigJobs} imply $\sum_{y \in Y_j} p_y \geq \lambda \beta p_{j} = \epsfrac (b_{j} - a_{j})$. 
	Thus, if we choose~$x$ among those jobs in~$X_f^g$ that \opt completes last and guarantee that $x \notin \bigcup_{j=f}^g Y_j$, the Volume Condition \eqref{eq:VolumeCondition} is satisfied. We first describe
	 how to find $Y_f,\ldots,Y_g$ before we show that these sets satisfy \ref{enum:proofSoL:YSubsetX} to \ref{enum:proofSoL:YBigJobs}. 
	
	By assumption, $|X_{f}| > \lambda$. Index the jobs in $X_{f} = \{x_1,\ldots,x_\lambda,x_{\lambda+1},\ldots\}$ in increasing completion times $C_x^*$. Define $Y_f :=  \{x_1,\ldots,x_\lambda\}$ and $L_f := X_f \setminus Y_f $, i.e., $Y_f$ contains the $\lambda$ jobs in~$X_{f}$ that \opt completes first and~$L_f$ contains the last jobs. Let $Y_f,\ldots,Y_j$ and $L_j$ be defined for $f < j+1\leq g$. By assumption, $|X_{j+1 } \cup L_j| > \lambda$ since $|Y_i| = \lambda$ for $f\leq i \leq j$. We again index the jobs in $X_{j +1} \cup L_j = \{x_1, \ldots, x_\lambda, x_{\lambda+1}, \ldots\}$ in increasing optimal completion times. Then, $Y_{j+1} := \{x_1,\ldots,x_\lambda\}$ and $L_{j+1} :=  \{x_{\lambda+1},\ldots\}$. Since we move jobs only horizontally to later siblings, we call this procedure \pushforward.
	
	By definition, \ref{enum:proofSoL:YSubsetX} and \ref{enum:proofSoL:Y=lambda} are satisfied. Since $f,\ldots,g$ are leaves, the jobs in $Y_j \cap X_j$ are big w.r.t.~$j$. Thus, it remains to show that the jobs in~$L_j$ are big w.r.t.\ the next job~$j+1$. To this end, we assume that the jobs in $Y_f,\ldots,Y_j$ are big w.r.t.\ $f,\ldots,j$, respectively. If we find an index~$f\leq \ihat(x) \leq j$ such that~$x$ as well as the jobs in $\bigcup_{i=\ihat(x)}^j Y_i$ are released after~$a_{\ihat(x)}$
	and~$x$ is completed after every $y \in \bigcup_{i = \ihat(x)}^j Y_i$, 
	then the Volume Lemma \ref{lem:VolLemma} implies that $x \in L_j$ is big w.r.t.~$j+1$.  Indeed, then 	
	\(
		\sum_{i = \ihat(x)}^j \sum_{\substack{y\in X_{i}:  C_y^* \leq C_x^*}} p_y 		
		\geq p_x + \sum_{i = \ihat(x)}^j \sum_{y\in Y_i} p_y 	\geq  \epsfrac (b_{j} - a_{\ihat(x)}) + p_x .
	\)	
	We show by induction that such an index $\ihat(x)$ exists for every $x \in L_j$.

	As the same argumentation holds for $j = g$, Corollary \ref{cor:IsolatedJobs} implies the lemma. 
\end{proof}

\begin{restatable}{lemma}{lemSoN}\label{lem:SoN}
	For all $j\in J\cup\{M\}$, $|X_j^S| \leq \tau_j \lambda$. 
\end{restatable}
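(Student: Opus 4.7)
The proof proceeds by structural induction on the interruption tree, from leaves to root: assuming $|X_k^S| \le \tau_k \lambda$ for every strict descendant $k \in T_{-j}$, I show $|X_j^S| \le \tau_j \lambda$. The base case is $j$ a leaf, where $\tau_j = 0$ and $X_j^S = \emptyset$ by Observation~\ref{obs:XandXj}, so the statement is immediate.

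For the inductive step, partition the children of $j$ into maximal strings $\{f,\dots,g\}$, i.e., consecutive children in admission order with $b_i = a_{i+1}$ for $f \le i < g$ that cannot be extended on either side. Since $X_j^S = \bigcup_{i:\pi(i)=j} X_i$ is a disjoint union by Observation~\ref{obs:XandXj}, it suffices to prove, for each such string,
\[
\sum_{i=f}^{g} |X_i| \;\le\; \lambda \sum_{i=f}^{g} (\tau_i + 1),
\]
as summing over strings and using $\sum_{i:\pi(i)=j}(\tau_i+1) = \tau_j$ then yields $|X_j^S| \le \lambda \tau_j$. Since the inductive hypothesis already supplies $|X_i^S| \le \lambda \tau_i$ for each $i$ in the string, the per-string bound reduces to the ``big-job'' statement
\[
\sum_{i=f}^{g} |X_i^B| \;\le\; \lambda (g+1-f).
\]

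This big-job inequality is formally analogous to Lemma~\ref{lem:SoL} (which is precisely the special case when every $i$ is a leaf, so that $X_i = X_i^B$). I prove it by contradiction: assuming the bound fails at a minimal index $h \in \{f,\dots,g\}$, I order each $X_i^B$ by optimal completion time $C_y^*$ and run \pushforward over $i=f,\dots,h$, producing sets $Y_f,\dots,Y_h$ of size $\lambda$ each and leftovers $L_f,\dots,L_h$, exactly as in Lemma~\ref{lem:SoL}. The same inductive argument shows that every $x \in L_i$ satisfies $p_x \ge \beta p_{i+1}$: for each such $x$ there is an ancestor index $\ihat(x) \in \{f,\dots,i\}$ such that the Volume Condition \eqref{eq:VolumeCondition} on $\{\ihat(x),\dots,i\}$ is met and thus, by the Volume Lemma~\ref{lem:VolLemma}, $p_x \ge \beta p_{i+1}$. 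Picking $x \in L_h$ that \opt completes last and invoking the Volume Lemma on $\{f,\dots,h\}$, Corollary~\ref{cor:IsolatedJobs} produces a sibling $\nextj$ of $h$ with $b_h = a_{\nextj}$; this either contradicts maximality of the string (if $\nextj$ lies outside $\{f,\dots,g\}$) or the minimality of $h$ (if $\nextj = h+1$, by absorbing it into the prefix as in Lemma~\ref{lem:SoL}).

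The main obstacle, and where the proof genuinely extends Lemma~\ref{lem:SoL}, is handling the enlarged span $b_i - a_i = \alpha \sum_{k \in T_i} p_k$, which strictly exceeds $\alpha p_i$ whenever $i$ has descendants. The \pushforward on big jobs alone supplies volume $\lambda\beta p_i = \epsfrac \cdot \alpha p_i$ per child in the string, which accounts only for the $R(i)$-portion of $[a_i,b_i)$. The additional length contributed by the regions of descendants of $i$ is absorbed via the induction itself: any surplus of $X$-jobs released inside $[a_i,b_i)$ but falling in $R(k)$ for some descendant $k$ belongs to $X_i^S$, which is already bounded by $\lambda \tau_i$ by the inductive hypothesis, so these jobs are implicitly \pushdown-charged to $T_i$ rather than to $j$. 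This clean separation between the horizontal \pushforward along siblings and the vertical induction/\pushdown into subtrees is what lets the Volume Condition be verified at the string scale without double-counting.
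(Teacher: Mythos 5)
Your high-level architecture (structural induction, decomposition of $X_j^S$ by the children of $j$ grouped into maximal strings, reduction to a big-job inequality per string, \pushforward to generate contradictions via the Volume Lemma) matches the paper's, and the arithmetic reduction to $\sum_{i=f}^{g}|X_i^B|\le\lambda(g+1-f)$ is correct. However, there is a genuine gap in how you verify the Volume Condition when a string member $i$ has descendants.

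The gap is in your inductive invariant. You carry only the cardinality bound $|X_i^S|\le\lambda\tau_i$, and then claim the \pushforward on $X_i^B$ alone lets you verify~\eqref{eq:VolumeCondition} on substrings. But~\eqref{eq:VolumeCondition} asks for a \emph{volume} of at least $\epsfrac(b_g-a_f)+p_x$, and $b_i-a_i=\alpha\sum_{k\in T_i}p_k$, not $\alpha p_i$. Your \pushforward sets $Y_i\subset X_i^B$ of size $\lambda$ supply volume $\ge\lambda\beta p_i=\epsfrac\alpha p_i$, which covers only the $\alpha p_i$ slice of the span and falls short by $\epsfrac\alpha\sum_{k\in T_{-i}}p_k$. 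You suggest this missing volume is ``implicitly \pushdown-charged'' via the cardinality bound on $X_i^S$, but a cardinality bound on $X_i^S$ gives no lower bound on $\sum_{y\in X_i^S}p_y$ whatsoever (the jobs in $X_i^S$ are only known to be \emph{small}, $p_y<\beta p_i$). So the Volume Condition on $\{\ihat(x),\dots,i\}$ simply cannot be checked from your invariant, and the contradiction step collapses.

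This is exactly why the paper carries a \emph{stronger} inductive invariant: not merely $|X_j^S|\le\lambda\tau_j$, but the existence of a partition $(Y_k)_{k\in T_{-j}}$ of $X_j^S$ with $|Y_k|\le\lambda$ and, crucially, $Y_k\subset\{x:p_x\ge\beta p_k\}$. That last clause is a \emph{lower} bound on processing times of the jobs routed to each descendant, so that $\sum_{y\in Y_k}p_y\ge|Y_k|\beta p_k$. In Case~2 the paper builds sets $Z_i$ of size $\lambda(\tau_i+1)$ via \pushforward and then uses an explicit \pushdown to refill each $Y_k$, $k\in T_i$, to exactly $\lambda$ jobs that are big w.r.t.~$k$; only then does $\sum_{k\in T_i}\sum_{y\in Y_k}p_y\ge\lambda\beta\sum_{k\in T_i}p_k=\epsfrac(b_i-a_i)$ hold, so that the Volume Condition can be verified on the enlarged spans. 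To repair your proof you would need to strengthen your inductive hypothesis to this partition form and make the \pushdown explicit rather than implicit; the cardinality-only hypothesis is not self-sufficient.
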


\begin{proof}[Proof sketch]
	We show that for every $j\in J\cup \{M\}$, there exists a partition $(Y_k)_{k\in T_{-j}}$ with 
	\begin{center}
		\begin{enumerate*}[label=(\roman*)]
		\item\label{enum:proofSoN:union} $\bigcup_{k\in T_{-j}} Y_k = X_j^S$,
		\item\label{enum:proofSoN:big} $Y_k \subset \{ x \in X_j: p_x \geq \beta p_k \}$, and
		\item\label{enum:proofSoN:size} $|Y_k| \leq \lambda$ for every $k\in T_{-j}$. 
		\end{enumerate*} 		
	\end{center}
	Then, it holds that $|X_j^S | = |\bigcup_{k\in T_{-j}} Y_k | = \sum_{k \in T_{-j}} |Y_k| \leq \lambda \tau_j $ and, thus, the lemma follows.
	
	The proof consists of an outer and an inner induction. The outer induction is on the distance~$\dist(j)$ of a job~$j$ from machine job~$M$, i.e., $\dist(M) := 0$ and $\dist(j) := \dist(\pi(j)) + 1$ for $j\in J$. The inner induction uses the idea about pushing jobs $x\in X_j$ to some later sibling of $j$ in the same string of jobs (see proof of Lemma~\ref{lem:SoL}).
	
	Let $j \in J$ with $\dist(j) = \dist_{\max} -1 := \max\{\dist(i): i \in J\} -1$. By Observation \ref{obs:XandXj}, $X_j^S = \bigcup_{k: \pi(k) = j} X_k$, where all $k \in T_{-j}$ are leaves at maximal distance from~$M$. We distinguish three cases for $k\in T_{-j}$:
	
	\begin{description}[labelindent=0em ,labelwidth=0cm, labelsep*=1em, leftmargin =\parindent , itemindent = 0pt, style = sameline,nosep]
		\item[Case 1.] If $k\in T_{-j}$ is isolated, $|X_k| \leq \lambda$ follows directly from the Volume Lemma as otherwise $\sum_{x \in X_k} p_x \geq \lambda \beta p_k + p_x = \tfrac{\eps}{\eps - \delta} (b_k - a_k) + p_x$ contradicts Corollary~\ref{cor:IsolatedJobs}, where~$x\in X_k$ is the last job that \opt completes from the set~$X_k$. Since all jobs in $X_k$ are big w.r.t.~$k$, we set $Y_k := X_k$.  
		\item[Case 2.] If $k\in T_{-j}$ with $|X_k| > \lambda$ is part of a string, let $f ,\ldots, g$ be the {\em maximal} string satisfying Lemma \ref{lem:SoL} with $k\in \{f,\ldots,g\}.$ With this lemma, we find $Y_f,\ldots,Y_g$ and set $Y_{g+1} := X_{g+1} \cup L_g$. 
		\item[Case 3.] We have not yet considered jobs~$k$ in a string with $|X_k| \leq \lambda$ that 
		do not have siblings~$f,\ldots,g$ in the same string 
		with~$b_g = a_k$ and $\sum_{i=f}^{g} |X_j| > (g - f) \lambda$. This means that such jobs do not receive jobs~$x \in X_i$ for~$i\neq k$ by the \pushforward procedure in Case~2. For such~$k\in T_{-j}$ we define~$Y_k := X_k$. 
	\end{description}	
	Then, $X_j^S = \bigcup_{k: \pi(k) = j} X_k = \bigcup_{k \in T_{-j}} X_k =  \bigcup_{k \in T_{-j}} Y_k$ and, thus, \ref{enum:proofSoN:union} to \ref{enum:proofSoN:size} are satisfied. 

	Let $\dist < \dist_{\max}$ such that $(Y_k)_{k\in T_{-j}}$ satisfying \ref{enum:proofSoN:union} to \ref{enum:proofSoN:size} exists for all $j\in J$ with $\dist(j) \ge \dist$. Fix~$j \in J$ with $\dist(j) = \dist -1$. By induction and Observation \ref{obs:XandXj}, it holds that $X_j^S =  \bigcup_{k: \pi(k) = j} \left(X_k^B \cup \bigcup_{i \in T_{-k}} Y_i \right) $.
	Now, we use the partitions $(Y_i)_{i \in T_{-k}}$ for $k$ with $\pi(k) = j$ as starting point to find the partition $(Y_k)_{k \in T_{-j}}$. 
	Fix $k$ with $\pi(k)= j$ and distinguish again the same three cases as before.
	
	\begin{description}[labelindent=0em ,labelwidth=0cm, labelsep*=1em, leftmargin = \parindent , itemindent = 0pt, style = sameline,nosep]
		\item[Case 1.] If $k$ is isolated, we show that $|X_k| \leq \lambda (\tau_k+1) $ and develop a procedure to find $(Y_i)_{i \in T_{k}}$. 
		By induction, $|X_k^S| \leq \lambda \tau_k $. In \ref{subsec:ProofofLem:SoN} we prove that $|X_k^B| \leq 	\lambda + (\lambda\tau_k  - |X_k^S|)$. To construct $(Y_i)_{i \in T_{k} }$, we assign $\min \{\lambda, |X_k^B|\}$ jobs from $X_k^B$ to $Y_k$. If $|X_k^B| > \lambda$, distribute the remaining jobs according to $\lambda - |Y_i|$ among the descendants of $k$. Then, $X_k = \bigcup_{i \in T_{k}} Y_i$. Because a job that is big w.r.t job $k$ is also big w.r.t.\ all descendants of $k$, 
		every (new) set $Y_i$ satisfies \ref{enum:proofSoN:big} and \ref{enum:proofSoN:size}. We refer to this procedure as \pushdown since jobs are shifted vertically to descendants.
		
		\item[Case 2.] If $|X_k| > \lambda (\tau_k+1) $, $k$ must belong to a string with similar properties as described in Lemma~\ref{lem:SoL}. This means, there is a maximal string of jobs $f,\ldots,g$ containing $k$ such that $\sum_{j=f}^i |X_{j}| > \lambda\sum_{j=f}^i \tau_j $ holds for all $f\leq i\leq g$ and $b_{j} = a_{j+1}$ for all $f\leq j < g$. 
		
		If the Volume Condition~\eqref{eq:VolumeCondition} is satisfied, there exists another sibling $g+1$ that balances the sets $X_f,\ldots,X_g,X_{g+1}$ due to Corollary \ref{cor:IsolatedJobs}. 
		This is shown by using \pushdown within a generalization of the \pushforward procedure. As the jobs $f,\ldots,g$ may have descendants, we use \pushforward to construct the sets $Z_f,\ldots,Z_g$ and $L_f,\ldots, L_g$ with $|Z_k| = \lambda (\tau_k+1)$. Then, we apply \pushdown to $Z_k$ and $(Y_i)_{i\in T_{-k}}$ in order to obtain $(Y_i)_{i \in T_{k} }$ such that they will satisfy $Z_k = \bigcup_{i\in T_{k}} Y_i$,
		$Y_i \subset \{ x \in X_j: p_x \geq \beta p_i \}$ and $|Y_i| = \lambda$ for every $i\in T_{k}$. Thus, the sets~$X_k$ with~$f\leq k\leq g$ satisfy~\eqref{eq:VolumeCondition} and we can apply~Corollary~\ref{cor:IsolatedJobs}.  

		\item[Case 3.] Any job $k$ with $\pi(k) = j$ that is part of a string and was not yet considered must satisfy $|X_{k} | \leq \lambda (\tau_k+1) $. 
			We use the \pushdown procedure for isolated jobs to get the partition~$(Y_{i})_{i \in T_k }$.
\end{description}
	Hence, we have found $(Y_k)_{k\in T_{-j}}$ with the properties \ref{enum:proofSoN:union} to \ref{enum:proofSoN:size}.	
\end{proof}

\begin{proof}[Proof of Theorem~\ref{theo:CompetitivenessRegion}]
	As explained before, the job set scheduled by \opt clearly is a subset of $X\cup J$, the union of jobs only scheduled by \opt and the jobs admitted by the \region. Thus, it suffices to prove that~$|X| \leq \lambda |J|$. By Observation \ref{obs:XandXj}, $X = X_M^S$ and, hence, $|X_M^S| \leq \lambda |J|$ implies~$|X| \leq \lambda |J|$. This is true as Lemma \ref{lem:SoN} also holds for the machine job~$M$.
\end{proof}


\paragraph{Finalizing the proofs of Theorems \ref{thm:UB-nocommitment} and \ref{thm:UB-commitment} }

\begin{proof}[Proof of Theorem \ref{thm:UB-nocommitment}]
	Set~$\alpha =1$ and $\beta = \frac{\eps}{4}$. Theorem~\ref{thm:suff-jobs-nocommit} shows that our algorithm completes at least half of all admitted jobs on time. 
	Theorem~\ref{theo:CompetitivenessRegion} implies that the \region is $\frac{16}{\eps}$-competitive.
\end{proof}

\begin{proof}[Proof of Theorem \ref{thm:UB-commitment}]
	By Lemma \ref{lem:all-jobs-commit}, the choice~$\alpha = \frac{8}{\delta}$ and~$\beta = \frac{\delta}{4}$ implies that the \region completes all admitted jobs. Theorem~\ref{theo:CompetitivenessRegion} implies that our algorithm is  ($\frac{32}{(\eps -\delta)\delta^2}+1$)-competitive.
\end{proof}

\section{Lower bounds on the competitive ratio}\label{sec:lower-bounds}

In this section we give a collection of lower bounds on the competitive ratio in the different commitment models and for different problem settings. To simplify notation, we formally introduce the notion of {\em laxity}. Let~$j$ be a job with processing time~$p_j$, deadline~$d_j$, and~$r_j$. The laxity~$\ell_j$ is defined as $d_j - r_j - p_j$. 


\paragraph{Scheduling without commitment.} 	We give a lower bound matching our upper bound 
in Theorem~\ref{thm:UB-commitment}. 

\begin{theorem}\label{thm:detLB-w=1-anytime}
	Every deterministic online algorithm has a competitive ratio $\Omega(\frac{1}{\eps})$. 
\end{theorem}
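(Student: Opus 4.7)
The plan is, for every deterministic online algorithm $A$, to construct---adaptively against $A$---an $\eps$-slack instance on which $\opt$'s throughput is a factor $\Omega(1/\eps)$ larger than $A$'s.  Because $A$ is deterministic and online, an adaptive adversary (which chooses later release dates based on $A$'s past decisions) is enough: no single static instance needs to fool every algorithm simultaneously.

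The construction begins by releasing, at time $0$, a single \emph{heavy} job $J_0$ with $p_0=1$ and $d_0=1+\eps$, so that $J_0$ has slack exactly $\eps$. The adversary then observes how $A$ treats $J_0$ and branches into two cases. If $A$ does not invest enough processing into $J_0$ early on to keep it alive, the adversary simply issues no further job; then $\opt$ completes $J_0$ while $A$ does not, giving an unbounded ratio. Otherwise, once $A$ has committed enough processing to $J_0$ to be ``locked in'' to finishing it, the adversary releases a cascade of $k=\Theta(1/\eps)$ short jobs $S_1,\dots,S_k$ of size $\Theta(\eps)$, each with tight individual slack $\Theta(\eps^2)$ and release dates staggered by $\Theta(\eps)$. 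These are arranged so that $\opt$, by discarding $J_0$, can process the $S_i$'s back-to-back in the remaining window $[t^\ast,1+\eps]$ and obtain throughput $k=\Theta(1/\eps)$. The algorithm, by contrast, has only $O(\eps)$ of remaining slack on $J_0$ after its commitment phase, which accommodates at most $O(1)$ preemptions for short jobs without losing $J_0$; so in this branch $A$ finishes $O(1)$ jobs while $\opt$ finishes $\Omega(1/\eps)$.

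The main obstacle will be ruling out \emph{hedging} strategies that neither fully commit to $J_0$ nor fully abandon it, but partially process $J_0$ and opportunistically pick up short jobs. A direct case analysis against such strategies only yields a constant-factor bound. My plan is therefore to make the cascade itself adaptive: each $S_i$ is released only if, at its scheduled release time, $A$ is observed processing $J_0$ (rather than a previously released $S_j$). This cuts off the supply of short jobs as soon as $A$ abandons $J_0$, so that $A$ can never both accumulate many $S_i$'s and finish $J_0$. A potential-function argument tracking $J_0$'s remaining slack---which decreases by $\Theta(\eps)$ on every preemption for a short job and by any idle time not spent on $J_0$---then bounds $A$'s throughput in all three regimes (no commitment, full commitment, hedging) by $O(1)$, while $\opt$'s throughput on the constructed instance is always at least $\Omega(1/\eps)$. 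Calibrating the commitment threshold and the $S_i$ parameters precisely so that all three regimes contribute the claimed bound is the most delicate step of the argument.
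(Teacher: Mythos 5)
There is a genuine gap: a single-level construction (one big job $J_0$ plus a cascade of $\Theta(1/\eps)$ tight short jobs) cannot yield an $\Omega(1/\eps)$ lower bound, no matter how the adaptive release rule and commitment-threshold are calibrated. The issue is that the adversary's only leverage is to stop releasing short jobs once $A$ ``abandons'' $J_0$ --- but $A$ can preempt this by abandoning $J_0$ as soon as the \emph{first} short job arrives. At that moment only one short job has been released, so $\opt$ can finish at most $J_0$ and that one short job, and $A$ finishes the short job (and possibly $J_0$ too, since it only gave up a tiny amount of slack). Both throughputs are $O(1)$, so the ratio is $O(1)$, not $\Omega(1/\eps)$. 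Your potential function tracks how much slack $J_0$ has left, but that only constrains $A$ if $A$ \emph{wants} to finish $J_0$; $A$ can simply give up on $J_0$ cheaply, and then the adversary has no further moves. The hedging problem you flagged is therefore not a calibration nuisance but a fatal obstruction for a one-shot construction.

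The paper avoids this by making the construction \emph{recursive with $\Theta(1/\eps)$ levels}: whenever $A$ has worked on some level-$(i+1)$ job $j^\star$ long enough that the level-$i$ job $j_i$ is irrevocably lost (this is the adaptively chosen moment $t_{i+1}$), the adversary designates $j^\star=j_{i+1}$ and opens a fresh round of level-$(i+2)$ jobs inside $j_{i+1}$'s window, with processing time $2\eps\cdot p^{(i+1)}$ chosen strictly larger than $j_{i+1}$'s laxity $\eps\cdot p^{(i+1)}$ so that committing to any of them in turn kills $j_{i+1}$. Crucially, the argument shows $A$ \emph{must} commit at every level (otherwise $\opt$'s $\frac{1}{8\eps}$ jobs in that level already beat $A$'s single survivor by more than $c$), so after $\Theta(1/\eps)$ rounds $A$ has exactly one job left while $\opt$ finishes one job from every other level, and the geometric decay $(2\eps)^i$ makes those $\Theta(1/\eps)$ chosen jobs pairwise compatible. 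Your branch for ``$A$ commits to $J_0$'' corresponds to one round of this recursion; the missing ingredient is to iterate the same attack against the short job that $A$ switches to, which is exactly what resolves the hedging issue cleanly.
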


The proof idea is as follows: We release $\Omega(\frac{1}{\eps})$ 
	\emph{levels} of jobs. In each level, the release date of any but the first job is the deadline of the previous job. Whenever an online algorithm decides to complete a job from level~$i$ (provided no further jobs are released), then the release of jobs in level~$i$ stops and a sequence of $\OO(\frac{1}{\eps})$ jobs in level~$i+1$ is released. Jobs in level~$i$ have processing time~$p_i=2\eps\cdot p_{i-1}$ which is too large to fit in the slack of the already started job. Thus, an algorithm has to discard the job started at level $i$ to run a job of level $i+1$, meaning that it can only finish one job, while the optimum can finish a job from every other~level.
	
	We now give the formal proof.

\begin{proof}
	Let $\eps<\frac1{10}$ such that $\frac{1}{8\eps} \in \N$ and suppose there is an online algorithm with competitive ratio $c<\frac{1}{8\eps}$, from which it is sufficient to deduce a contradiction.
	We construct an adversarial instance in which each job $j$ belongs to one of $2\cdot\lceil c+1\rceil$ \emph{levels} and fulfills $d_j=r_j+(1+\eps)\cdot p_j$.
	The processing times are identical across any level, but they are decreasing by a factor of $2\eps$ when going from any level to the next.
	This (along with the interval structure) makes sure that no two jobs from consecutive levels can both be completed by a single schedule, which we will use to show that the online algorithm can only complete a single job throughout the entire instance.
	The decrease in processing times between levels, however, also makes sure that the optimum can finish a job from every other level, resulting in an objective value of $\lceil c+1\rceil$, which is a contradiction to the algorithm being $c$-competitive.
			
	The sequence starts in level $0$ at time $0$ with the release of one job~$j$ with processing time $p^{(0)}=1$ and, thus, deadline $d_{j}=1+\eps$. 
	We will show inductively that, for each level $i$, there is a time $t_i$ when there is only a single job $j_i$ left that the algorithm can still finish, and this job is from the current level $i$ (and, thus, $p_{j_i} = p^{(i)} = (2\eps)^{i}$).
	We will also make sure that at $t_i$ at most a $(\frac23)$-fraction of the time window of $j_i$ has passed.
	From $t_i$ on, no further jobs from level $i$ are released, and jobs from level $i+1$ start being released (or, if $i=2\cdot\lceil c+1\rceil-1$, we stop releasing jobs altogether). 
	It is clear that $t_0$ exists.
	
	Consider some time $t_i$, and we will release jobs from level $i+1$ so as to create time $t_{i+1}$. 
	The first job $j$ from level $i+1$ has release date $t_i$ and, by the above constraints, $d_j=t_i+(1+\eps)\cdot p_j$ where $p_j= p^{(i+1)}=(2\eps)^{i+1}$.
	As long as no situation occurs that fits the above description of $t_{i+1}$, we release an additional job of level $i+1$ at the deadline of the previous job from this level (with identical time-window length and processing time).
	We show that we can find time $t_{i+1}$ before $\frac{1}{8\eps}$ jobs from level $i+1$ have been released.
	Note that the deadline of the $\frac{1}{8\eps}$-th job from level $i+1$ is $t_i+\frac{1}{8\eps} \cdot (1+\eps) \cdot 2\eps\cdot p^{(i)}$, which is smaller than the deadline of $d_{j_i}$ since by induction $d_{j_i}-t_i\geq \frac23 \cdot p^{(i)}$ and $\eps<\frac1{10}$. This shows that, unless more than $\frac{1}{8\eps}$ jobs from level $i+1$ are released (which will not happen as we will show), all time windows of jobs from layer $i+1$ are contained in that of $j_i$.
	
	Note that there must be a job $j^\star$ among the $\frac{1}{8\eps}$ first ones in level $i+1$ that the algorithm completes if no further jobs are released within the time window of $j^\star$: By induction, the algorithm can only hope to finish a single job released before time $t_i$ and the optimum could complete $\frac{1}{8\eps}$ jobs from level $i+1$, so $j^\star$ must exist for the algorithm to be $c$-competitive. Now we can define $j_{i+1}$ to be the first such job $j^\star$ and find $t_{i+1}$ within its time window: At the release date of $j^\star$, the algorithm could only complete $j_i$. However, since the algorithm finishes $j_{i+1}$ if there are no further jobs released, and $\eps<\frac1{10}$, it must have worked on $j_{i+1}$ for more than $\frac{p^{(i+1)}}2$ units of time until $r_{i+1}+\frac23\cdot p^{(i+1)}=:t_{i+1}$. This quantity, however, exceeds the laxity of $j_i$, meaning that the algorithm cannot finish $j_i$ any more. (Recall that the laxity of~$j_i$ is $\eps p^{(i)} = 2^i \eps^{i+1}$.) So $t_{i+1}$ has the desired properties.
			
	This defines $t_{2\cdot\lceil c+1\rceil}$, and indeed the algorithm will only finish a single job. We verify that an optimal algorithm can schedule a job from every other level. Note that, among levels of either parity, processing times are decreasing by a factor of $4\eps^2$ between consecutive levels. So, for any job $j$, the total processing time of jobs other than $j$ that need to be processed within the time window of $j$ adds up to less than 
	$$\sum_{i=1}^\infty (4\eps^2)^i\cdot p_j = 4\eps^2 \cdot \sum_{i=0}^\infty (4\eps^2)^i \cdot p_j = 4\eps^2 \cdot \frac1{1-4\eps^2} \cdot p_j \leq \eps \cdot \frac{4}{10} \cdot \frac1{1-\frac4{100} } \cdot p_j <  \eps\cdot p_j=\ell_j.$$
	 This completes the proof.
\end{proof}

\noindent {\bf Commitment upon arrival.} We substantially strengthen earlier results for weighted jobs~\cite{DBLP:conf/spaa/LucierMNY13,Yaniv-Thesis2017} and show that the model is hopeless even in the unweighted setting and even for randomized algorithms.

\begin{theorem}\label{thm:LBCommUponReleasewj=1}
	No randomized online algorithm has a bounded competitive ratio 
	for commitment upon arrival.
\end{theorem}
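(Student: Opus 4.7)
The plan is to invoke Yao's minimax principle: to show that no randomized online algorithm has a bounded competitive ratio, it suffices to construct a distribution $\mu$ over instances such that, for every deterministic online algorithm, $E_\mu[\opt]/E_\mu[\alg]$ is arbitrarily large. This is because a randomized algorithm is a distribution over deterministic ones, so if every deterministic algorithm fares poorly against $\mu$, no randomized algorithm can be competitive on the worst-case instance.

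I will build a recursive random instance $G_L$ parameterized by a target depth $L \in \mathbb{N}$ (and auxiliary parameter $N$, chosen large). The base instance releases a single \emph{primary} job $j_0$ at time $0$ with $p_0 = 1$ and $d_0 = 1+\eps$. For $L \geq 1$, with probability $q$ (to be tuned, e.g.\ $q = 1/N$), an \emph{inner burst} is released within $j_0$'s time window: $N$ scaled copies of $G_{L-1}$, each occupying a disjoint sub-window of length $(1+\eps)/N$, with all internal sizes scaled down by the factor $1/N$ so that each sub-instance is itself an $\eps$-slack instance of the same recursive form. The algorithm receives no signal distinguishing a burst from a primary job until after the commitment decision on $j_0$ has been made.

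Next, I would analyze both $\opt$ and $\alg$ on the random instance. On the $\opt$ side, a straightforward recursion of the form $v_L = (1-q)\cdot 1 + q\cdot N\cdot v_{L-1}$ (taking the better of committing to $j_0$ versus recursing into the $N$ sub-instances) makes $E[\opt(G_L)]$ grow with $L$. On the $\alg$ side, the key point is that commitment upon arrival forces a binary decision on $j_0$ at time $0$ without any information about whether the burst will occur. Committing consumes $1$ unit of processing time and leaves only $\eps$ slack available for any inner sub-instance work, severely restricting what can be accomplished at all deeper levels; not committing forgoes $j_0$'s guaranteed contribution when no burst materializes. Tuning $q$ so that at each level both choices are simultaneously bad in expectation against the conditional distribution, and compounding the loss across the $L$ recursive levels, should yield an $E[\alg(G_L)]$ that is bounded by a function of $\eps$ alone.

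The main obstacle I anticipate is the rigorous upper bound on $E[\alg(G_L)]$ uniformly in $L$: unlike the weighted lower bounds of \cite{DBLP:conf/spaa/LucierMNY13,Yaniv-Thesis2017}, where a single job of huge weight suffices to dwarf the algorithm's earlier commitment, in the unweighted setting the algorithm can in principle exploit many tiny deep-level jobs. The analysis must therefore carefully track, at each recursion level, the interaction between the algorithm's budget (remaining slack after prior commitments) and the tight nested time windows of inner jobs, showing by induction that the algorithm's expected throughput at each level is geometrically smaller than $\opt$'s, so that the total contribution telescopes to $O(f(\eps))$. Once this is established, choosing $L$ large enough forces $E_\mu[\opt]/E_\mu[\alg]$ past any prescribed constant, and Yao's principle concludes the proof.
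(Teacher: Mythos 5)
Your proposal is incomplete, and you flag the decisive step yourself as an open ``main obstacle.'' More importantly, the sketch you give does not obviously close that obstacle. If the algorithm commits to $j_0$ it still has $\eps$ units of free processing time inside $j_0$'s window, and each of the $N$ scaled sub-instances has a primary of size $1/N$; so after committing to $j_0$ the algorithm can still commit to roughly $\eps N$ of the $N$ inner primaries and, by the same reasoning inside each of those, keep a constant fraction of what remains at every deeper level. Without a quantitative argument that this retained fraction decays with recursion depth---which is precisely the induction you leave unproven---the ratio $E[\opt]/E[\alg]$ may plateau at $O(1/\eps)$ rather than growing unboundedly in $L$. The nested construction also carries a feasibility wrinkle: packing $N$ disjoint sub-windows of length $(1+\eps)/N$ into $j_0$'s window of length $1+\eps$ uses the whole window and leaves no room for the burst to start strictly after time~$0$, so the ``no signal before the commitment decision'' property needs repair.

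The paper's proof is much leaner and needs neither recursion nor an explicit invocation of Yao's principle. It releases up to $k$ \emph{waves} of jobs into a single time window $[0,1]$: wave $i$ arrives at $\frac{i}{k}\gamma$ and consists of $2^i$ identical jobs with processing time proportional to $2^{-i}$; the adversary may stop after any wave. Letting $n_i$ denote the expected total processing volume the (randomized) algorithm commits to from wave $i$, feasibility gives the linear constraint $\sum_i n_i \leq 1$, and $c$-competitiveness against ``stop after wave $j$'' gives $\sum_{i\le j} n_i 2^{i-1} \geq 2^{j-1}/c$ for every $j$. The short algebraic Lemma~\ref{lem:algebraic-fact} shows these constraints force $c \geq (k+1)/2$, so taking $k$ large contradicts boundedness. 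The crucial structural difference from your approach is that all the paper's jobs compete for the \emph{same} window, so the total committed volume is capped at $1$---a simple constraint that your disjoint nested sub-windows do not provide, since the algorithm can spread commitments across them and evade any volume bound of that form.
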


In the proof of the theorem, we use the following algebraic fact.

\begin{lemma}\label{lem:algebraic-fact}
	Consider positive numbers $n_1,\dots,n_k, c\in\mathbb{R}_+$ with  the following properties:
	\begin{compactenum}[(i)]
		\item\label{item:algebraic-fact-ub} $\sum_{i=1}^k n_i\leq 1$,
		\item\label{item:algebraic-fact-lb} $\sum_{i=1}^j n_i\cdot 2^{i-1}\geq \frac{2^{j-1}}{c}$ for all $j=1,\dots,k$.
	\end{compactenum}
	Then $c\geq\frac{k+1}{2}$.
\end{lemma}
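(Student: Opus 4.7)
The plan is to reduce everything to a single estimate on $\sum_{i=1}^k n_i$. Introduce the partial sums $a_j := \sum_{i=1}^j n_i\cdot 2^{i-1}$ for $j=0,1,\dots,k$ (with $a_0=0$), so that hypothesis~\conditionref{item:algebraic-fact-lb} becomes exactly $a_j\geq 2^{j-1}/c$, and each $n_j$ can be recovered as $n_j=(a_j-a_{j-1})/2^{j-1}$. The goal is then to show $\sum_{i=1}^k n_i \geq \frac{k+1}{2c}$; combined with \conditionref{item:algebraic-fact-ub}, this immediately yields $c\geq (k+1)/2$.

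The main step is a summation-by-parts (Abel) manipulation: rewriting
\[
\sum_{j=1}^{k} n_j \;=\; \sum_{j=1}^{k} \frac{a_j-a_{j-1}}{2^{j-1}} \;=\; \frac{a_k}{2^{k-1}} + \sum_{j=1}^{k-1} a_j\left(\frac{1}{2^{j-1}}-\frac{1}{2^j}\right) \;=\; \frac{a_k}{2^{k-1}} + \sum_{j=1}^{k-1}\frac{a_j}{2^j}.
\]
Now plug in the lower bound $a_j\geq 2^{j-1}/c$ term by term: the first summand contributes $\frac{1}{c}$, and each of the remaining $k-1$ summands contributes $\frac{1}{2c}$. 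This totals $\frac{1}{c}+\frac{k-1}{2c}=\frac{k+1}{2c}$, which is exactly the bound needed.

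Conceptually, the intuition (and sanity check) is that the tight instance for \conditionref{item:algebraic-fact-lb} is $n_1=1/c$ and $n_j=1/(2c)$ for $j\geq 2$; for that choice one finds $\sum n_i = (k+1)/(2c)$, matching the extremal case. So the proof really amounts to recognizing that the convex-combination coefficients implicit in Abel's trick place weight $1/2^{k-1}$ on $a_k$ and weight $1/2^j$ on each intermediate $a_j$, and this weighting is in precise resonance with the geometric growth built into \conditionref{item:algebraic-fact-lb}. There is no real obstacle beyond setting up the summation-by-parts cleanly; the rest is arithmetic.
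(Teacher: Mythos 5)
Your proof is correct and takes essentially the same route as the paper's: the paper forms a weighted sum of the constraints in~\conditionref{item:algebraic-fact-lb} with weight $2^{k-j-1}$ on the $j$-th inequality for $j<k$ and weight $1$ on the $k$-th, while your Abel-summation rearrangement applies those same constraints with coefficients $1/2^j$ (for $j<k$) and $1/2^{k-1}$ (for $j=k$), which are exactly the paper's weights scaled by $2^{-(k-1)}$. The two computations coincide; the summation-by-parts reformulation and the extremal-instance sanity check are presentational variations, not a genuinely different argument.
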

\begin{proof}
	We take a weighted sum  over all inequalities in~(\ref{item:algebraic-fact-lb}), where the weight of the inequality corresponding to $j<k$ is $2^{k-j-1}$ and the weight of the inequality corresponding to $j=k$ is $1$. The result is $$\sum_{i=1}^k n_i\cdot 2^{k-1}\geq\frac{(k+1)\cdot 2^{k-2}}{c}\;\Leftrightarrow\;\sum_{i=1}^k n_i\geq\frac{(k+1)}{2c}.$$ If $c<\frac{k+1}{2}$, this contradicts~(\ref{item:algebraic-fact-ub}).
\end{proof}

We proceed to the proof of the theorem.

\begin{proof}[Proof of Theorem~\ref{thm:LBCommUponReleasewj=1}]
	Consider any $\eps>0$ and arbitrary $\gamma\in(0,1)$.
	Suppose there is a (possibly randomized) $c$-competitive algorithm, where $c$ may depend on $\eps$.
	
	We will choose some $k\in\mathbb{N}$ later. The adversary releases at most $k$ \emph{waves} of jobs, but the instance may end after any wave.
	Wave $i$ has $2^i$ jobs. Each job from the $i$-th wave has release date $\frac i k\cdot\gamma$, deadline $1$, and processing time $\frac{1}{2^i}\cdot\frac{1-\gamma}{1+\eps}$.
	Note that choosing $p_j\leq\frac{1-\gamma}{1+\eps}$ for all jobs $j$ makes sure that indeed $\ell_j\geq\eps\cdot p_j$, and observe that the total volume of jobs in wave $i$ adds up to no more than $1-\gamma$.
	
Define $n_i$ to be the expected total processing time of jobs that the algorithm accepts from wave $i$. We observe:
\begin{compactenum}[(i)]
	\item Since all accepted jobs have to be scheduled within the interval $[0,1]$, we must have $\sum_{i=1}^k n_i\leq 1$.
	\item For each $i$, possibly no further jobs are released after wave $i$. Since, in this case, the optimum schedules all jobs from wave $i$ and the jobs' processing times decrease by a factor of $2$ from wave to wave, it must hold that $\sum_{i=1}^j n_i\cdot 2^{i-1}\geq \frac{2^{j-1}}{c}$.
\end{compactenum}
 This establishes the conditions necessary to apply Lemma~\ref{lem:algebraic-fact} to $n_1,\dots,n_k$, which shows that choosing $k\geq 2c$ yields a contradiction.
\end{proof}


\noindent {\bf Commitment on job admission and $\delta$-commitment.} Since these models are more restrictive than scheduling without commitment, the lower bound $\Omega(\frac{1}{\eps})$ from Theorem~\ref{thm:detLB-w=1-anytime} holds. 
	In the present setting we can provide a much simpler (but asymptotically equally strong) lower bound.

{\em Commitment upon admission.} For scheduling with arbitrary weights, Azar et al.~\cite{AzarKLMNY15} 
rule out any bounded competitive ratio for deterministic algorithms. 
Thus, our bounded competitive ratio for the unweighted setting (Theorem~\ref{thm:UB-commitment}) gives a clear separation between the  weighted and the unweighted setting. 

{\em Scheduling with $\delta$-commitment.} We give a lower bound depending on parameters $\eps$ and $\delta$.

\begin{theorem}\label{thm:LB-dcommit}
	Consider scheduling weighted jobs in the $\delta$-commitment model. For any $\delta>0$ and any $\eps$ with $\delta \leq \eps < 1+\delta$, no deterministic online algorithm has a bounded competitive ratio.
\end{theorem}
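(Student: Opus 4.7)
I plan to construct an adaptive adversary that, for every prescribed constant $c$, produces an instance on which any deterministic online algorithm $\alg$ achieves at most $\opt/c$. The essential role of $\eps<1+\delta$ is that the commit window $(\eps-\delta)p_j$ of any job is strictly shorter than its processing time $p_j$, so $\alg$ cannot complete $j$ before the commit deadline: every commitment reserves strictly positive future machine time, and this reservation cannot be dissolved by processing uncommitted jobs during their commit windows.

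At time $0$ the adversary releases $j_1$ with $p_1=1$, $d_1=1+\eps$ and weight $1$. If $\alg$ does not commit by time $\eps-\delta$, the instance ends: $\opt$ takes $j_1$ and the ratio is infinite. Otherwise $\alg$ commits at some $t_1\le \eps-\delta<1$, leaving at least $1+\delta-\eps>0$ units of $j_1$ still to schedule in $[t_1,1+\eps]$. The adversary now adaptively releases $j_2,j_3,\dots$, each $j_{i+1}$ at time $t_i+\mu$ (for arbitrarily small $\mu$), with weights $w_i=W^i$ for a large constant $W$ chosen depending on $c$. At every step the processing time $p_{i+1}$ is chosen on the feasibility boundary of the single-machine instance $\{j_1,\dots,j_{i+1}\}$. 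Writing $P_i:=\sum_{k\le i}p_k$ and $D_i:=\max_{k\le i}d_k$ and using the EDF demand bound, one checks that the free-time invariant $F_i=D_i-P_i$ is independent of how $\alg$ pre-processes (since any $\tau$ units of machine time spent on committed jobs leaves residual work exactly $P_i-\tau$ within the window of length $D_i-\tau$). A short calculation gives a contracting recurrence for the normalized slack $g_i:=F_i/P_i$:
\[
g_{i+1}\;=\;\frac{\eps\,g_i-1}{2+\eps+g_i},
\]
which is strictly decreasing since $(g_i+1)^2>0$ and whose unique fixed point is $-1$. Hence after a finite number $k=k(\eps)$ of steps, $g_k$ drops below $1/\eps$.

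At that moment the demand bound opens an infeasibility gap $(F_k,P_k/\eps)$ of processing-time values: for any $p_{k+1}$ in this interval, $\{j_1,\dots,j_{k+1}\}$ admits no feasible single-machine schedule. The adversary releases $j_{k+1}$ with $p_{k+1}$ in the gap and weight $W^{k+1}$; committing would violate a prior commitment, so $\alg$ must reject. Since $\opt$ is unconstrained and can schedule $j_{k+1}$ alone, $\opt \ge W^{k+1}$, while $\alg \le \sum_{i=1}^k W^i \le 2W^k$, giving a competitive ratio of at least $W/2$. Taking $W\ge 2c+2$ beats $c$. The same $\Omega(W)$ bound is attained if $\alg$ rejects some intermediate $j_i$, since $\opt$ can then take that $j_i$ instead.

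The main obstacles are (i) controlling the drift of the release dates $r_{i+1}=t_i+\mu$ when $\alg$ delays its commitments, which is handled by letting $\mu\downarrow 0$ so that the recurrence for $g_i$ still applies with an arbitrarily small additive error; and (ii) checking that the invariance of $F_i$ under pre-processing genuinely follows from $\eps<1+\delta$. This last point is what separates the regime from $\eps\ge 1+\delta$, where an arriving job can be processed in full within its commit window so that the commitment carries no real future cost, consistent with the upper bound of Azar et al.\ when the slack is sufficiently large.
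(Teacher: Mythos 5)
Your construction differs substantially from the paper's, and as written it has several gaps. The paper's adversary is \emph{oblivious}: it releases a fixed sequence of tight jobs that all share the same deadline $d=1+\eps$. Job $j$ has weight $(c+1)^j$, and job $j+1$ is released at $r_{j+1}=d-(1+\delta)p_j$, the last instant at which a $\delta$-committing algorithm could still commit to $j$. This forces $p_{j+1}=\tfrac{1+\delta}{1+\eps}p_j$, so all processing times decrease geometrically and, because every deadline is $d$, feasibility of $\{1,\dots,n\}$ reduces to the single condition $\sum_{j}p_j\leq 1+\eps$. As $n\to\infty$ the sum tends to $\tfrac{1+\eps}{\eps-\delta}$, which exceeds $1+\eps$ exactly when $\eps<1+\delta$; a $c$-competitive algorithm must nevertheless commit to every job, a contradiction. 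Because all deadlines coincide, no demand-bound machinery is needed.

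Your version is adaptive (the release of $j_{i+1}$ is triggered by the algorithm's commitment time $t_i$), and because you let the deadlines grow, you need the EDF demand bound and a contraction argument on $g_i=F_i/P_i$. Several steps there are asserted rather than shown: the recurrence $g_{i+1}=\tfrac{\eps g_i-1}{2+\eps+g_i}$ is never derived from the chosen $p_{i+1}$ and $r_{i+1}$; the claim that $F_i$ is invariant under pre-processing is stated without proof; and the stopping criterion ``$g_k$ drops below $1/\eps$'' is confusing, since $g_1=\eps<1/\eps$ already (for $\eps\leq 1$) while the recurrence's unique (semi-stable) fixed point is $-1$, so the relevant threshold must be something else entirely and the ``infeasibility gap $(F_k,P_k/\eps)$'' needs to be derived from the multi-deadline demand bound, which you do not do. Finally, the case where the algorithm rejects an intermediate $j_i$ relies implicitly on the adversary stopping at that point, which should be stated. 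The ideas are directionally reasonable, but you would save yourself all of this by making the deadlines coincide as in the paper's construction, which collapses feasibility to a single geometric-series inequality and makes the role of $\eps<1+\delta$ transparent.
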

\begin{proof}
	 We reuse the idea of \cite{AzarKLMNY15} to release the next job upon admission of the previous one while heavily increasing the weights of subsequent jobs. However, the scheduling models differ in the fact that the~$\delta$-commitment model allows for processing before commitment which is not allowed in the commitment-upon-admission model.  
	 
	Assume for the sake of contradiction, that there is a~$c$-competitive algorithm. We consider the following instance that consists of~$n$ tight jobs with the same deadline~$d := 1+\eps$. Job~$j$ has a weight of~$w_j := (c+1)^j$ which implies that any~$c$-competitive algorithm has to admit job~$j$ at some point even if all jobs~$1,\ldots,j-1$ are admitted. In the~$\delta$-commitment model, the admission cannot happen later than~$d-(1+\delta) p_j$ which is the point in time when job~$j+1$ is released. 
	
	More precisely, the first job is released at~$r_1=0$ with processing time~$p_1=1$. If jobs~$1,\ldots,j$ have been released, job~$j+1$ is released at~$r_{j+1} = d-(1+\delta) p_j$ and has processing time 
	\begin{equation*}
		\frac{d - r_{j+1} }{1+\eps} = \frac{d - (
		d-(1+\delta) p_j)}{1+\eps} = \frac{1+\delta}{1+\eps} p_j = \ldots = \left( \frac{1+\delta}{1+\eps} \right)^{j-1}.
	\end{equation*}
	An instance with~$n$ such jobs has a total processing volume of 
	\begin{equation*}
		\sum_{j=1}^n p_j = \sum_{j=0}^{n-1}\left( \frac{1+\delta}{1+\eps} \right)^{j} = \frac{1 -  \left(\frac{ 1+\delta }{1+\eps}\right)^{n} } { 1 - \frac{ 1+\delta }{1+\eps} }.
	\end{equation*}
	Any~$c$-competitive algorithm has to complete the~$n$ jobs before~$d = 1+\eps$. This also holds for~$n \rightarrow \infty$ and, thus, $\frac{1 + \eps } { \eps - \delta} \leq 1+ \eps$ is implied. This is equivalent to~$\eps \geq 1+ \delta$. In other words, if~$\eps < 1 + \delta$, there is no deterministic~$c$-competitive online algorithm. 
\end{proof}

 In particular, there is no bounded competitive ratio possible for $\eps\in(0,1)$. A restriction for $\eps$ appears to be necessary 
	as Azar et al.~\cite{AzarKLMNY15} provide 
	such a bound when the slackness is sufficiently large, i.e, $\eps>3$. In fact, our bound answers affirmatively the open question in~\cite{AzarKLMNY15} if high slackness is indeed required. Again, this strong impossibility result gives a clear separation between the weighted and the unweighted problem as we show in the unweighted setting a bounded competitive ratio for any $\eps>0$ (Theorem~\ref{thm:UB-commitment}).
	
\paragraph{Further lower bounds.} In Appendix~\ref{app:lbs}, we provide a simple proof of Theorem~\ref{thm:detLB-w=1-anytime} for commitment on job admission and $\delta$-commitment and additional lower bounds for the setting of proportional weights ($p_j=w_j$).

\section{Concluding remarks}

We provide a general framework for online deadline-sensitive scheduling with and without commitment. This is the first unifying approach and we believe that it captures well (using parameters) the key design principles needed when scheduling {\em online, deadline-sensitive} and {\em with commitment}. 

We give the first rigorous bounds on the competitive ratio for maximizing throughput in different commitment models. Some gaps between upper and lower bounds remain and, clearly, it would be interesting to close them. In fact, the lower bound comes from scheduling without commitment and it is unclear, if scheduling with commitment is truly harder than without.
It is somewhat surprising that essentially the same algorithm (with the same parameters and commitment upon admission) performs well for both, commitment upon admission and $\delta$-commitment, whereas a close relation between the models does not seem immediate. It remains open, if 
an algorithm can exploit the seemingly greater flexibility~of~$\delta$-commitment.

We restrict our investigations to unit-weight jobs which is justified by strong impossibility results 
(Theorems \ref{thm:LBCommUponReleasewj=1}, \ref{thm:LB-dcommit}, \cite{Yaniv-Thesis2017,DBLP:conf/spaa/LucierMNY13,AzarKLMNY15}). Thus, for weighted throughput a rethinking of the model is needed. A major difficulty is the interleaving structure of time intervals which makes the problem intractable in combination with weights. However, practically relevant restrictions to special structures such as laminar or agreeable intervals 
have been proven to be substantially better tractable in related online deadline scheduling research~\cite{ChenMS16,ChenMS16b}.

Furthermore, we close the problem of scheduling unweighted jobs without commitment with a best-achievable competitive ratio $\Theta(\frac{1}{\eps})$. It remains open if the weighted setting is indeed harder than the unweighted setting or if the upper bound $\OO(\frac{1}{\eps^2})$ in~\cite{DBLP:conf/spaa/LucierMNY13} can be improved.
%
%
%
Future research on generalizations to multi-processors seems highly relevant. 
We believe that our general framework is a promising~starting~point. 


\newpage
\appendix
\section*{APPENDICES}

\section{Summary State-of-the-Art}

For convenience we summarize the state of the art regarding competitive analysis for online throughput maximization with and without commitment.

\begin{table}[h]
	\centering
	\def\arraystretch{1.1}
	\begin{tabular}{ccccc}
		\toprule
		&no commitment&commit at admission& $\delta$-commitment & commit at arrival\\
		\midrule
		\multirow{2}{*}{$w_j\equiv 1$}& $\Theta(\frac 1\eps)$ & $\Omega(\frac 1\eps),\mathcal{O}(\frac 1{\eps^2})$ & $\Omega(\frac 1\eps),\mathcal{O}(\frac{\eps}{(\eps-\delta)\delta^2})$ & no $f(\eps)$ \\
		&  Theorems \ref{thm:UB-nocommitment} and \ref{thm:detLB-w=1-anytime} & Theorems \ref{thm:UB-commitment} and \ref{thm:detLB-w=1-anytime}  &   Theorems  \ref{thm:UB-commitment} and \ref{thm:detLB-w=1-anytime}   & Theorem \ref{thm:LBCommUponReleasewj=1}\\[1ex]  
		\midrule
		\multirow{2}{*}{$w_j=p_j$}& $\mathcal{O}(1)$ & $\Theta(\frac1\eps)$ & $\Theta(\frac{1}{\eps})$ & $\Theta(\frac1\eps)$ \\
		&  \cite{DBLP:journals/siamcomp/KorenS95} & \cite{GarayNYZ02,
		DBLP:conf/approx/DasGuptaP00} & Theorem~\ref{thm:wj=pj:admission:LB} ,\cite{GarayNYZ02,DBLP:conf/approx/DasGuptaP00}& \cite{GarayNYZ02,DBLP:conf/approx/DasGuptaP00} \\[1ex]  
		\midrule
		\multirow{2}{*}{general $w_j$}&$\Omega(\frac{1}{\eps}),\mathcal{O}(\frac{1}{\eps^2})$ & no $f(\eps)$ & no $f(\eps$) & no $f(\eps)$\\
		&  Theorem \ref{thm:detLB-w=1-anytime}, \cite{DBLP:conf/spaa/LucierMNY13} & \cite{AzarKLMNY15} & Theorem \ref{thm:LB-dcommit} & \cite{DBLP:conf/spaa/LucierMNY13} \\[1ex]  		                     
		\bottomrule
	\end{tabular}
\end{table}




\section{Proofs of Section~\ref{sec:Feasibility} (Completing sufficiently many jobs)
}

\subsection{Proofs of Section~\ref{subsec:NoCommitment}}

\lemSmallestInstance*

More formally, we show the following lemma.

\begin{lemma}
	For any instance~$\I$ and some job~$j$, for which the \region generates an interruption tree~$T_j$ with regions in $[a_j, b_j)$, there is an instance~$\I'$ with at most~$|T_j|+1$ jobs such that the regions in~$[a_j,b_j)$ and the tree~$T_j$ are identical. 
\end{lemma}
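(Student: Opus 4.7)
The plan is to construct $\I'$ directly from the subtree $T_j$, using the admission times $a_x$ and the original deadlines from the execution of the \region on $\I$. The instance $\I'$ will consist of the $|T_j|$ jobs of $T_j$ plus one auxiliary ``dummy'' job $j_0$ that absorbs the role of $\pi(j)$ together with all higher ancestors and blocking jobs of $\I$, giving exactly $|T_j|+1$ jobs. For each $x\in T_j$ I keep the original processing time $p_x$, set the new release date $r_x':=a_x$, and reset the deadline to $d_x':=a_x+(1+\eps)p_x$; these choices preserve the $\eps$-slack requirement. For $j_0$ I fix some $r_{j_0}<a_j$, choose $p_{j_0}$ large enough that both $p_j<\beta p_{j_0}$ and $\alpha p_{j_0}>a_j-r_{j_0}$ hold, and set $d_{j_0}:=r_{j_0}+(1+\eps)p_{j_0}$.

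The heart of the argument is an induction on the admission order in $\I$, showing that running the \region on $\I'$ admits the jobs of $T_j$ at exactly the same times and reserves exactly the same region intervals inside $[a_j,b_j)$. For the base case, the dummy is the first job released in $\I'$ (since $r_{j_0}<a_j\le a_x=r_x'$ for every $x\in T_j$), so the preemption routine admits it and creates $R(j_0)=[r_{j_0},r_{j_0}+\alpha p_{j_0})$; at time $r_j'=a_j$ we have $a_j\in R(j_0)$ and $p_j<\beta p_{j_0}$, so $j$ is admitted and receives the same region $[a_j,a_j+\alpha p_j)$ as in $\I$. For the inductive step, consider $x\in T_{-j}$ with parent $y:=\pi(x)$, and assume that the algorithm on $\I'$ and the algorithm on $\I$ (restricted to $T_j\cup\{j_0\}$) agree up to time $a_x$. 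Then $y$'s current region contains $a_x$, so the release of $x$ at $a_x$ triggers the preemption routine with $p_x<\beta p_y$ (inherited from $\I$) and $d_x'-a_x=(1+\eps)p_x\ge(1+\delta)p_x$; hence $x$ is admitted and all remaining regions are updated exactly as in $\I$.

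The main obstacle is to argue that these modifications to release dates and deadlines do not alter any decision of the algorithm inside $[a_j,b_j)$. The key observation is that the \region consults a job's deadline only through the availability predicate $d_j-t\ge(1+\delta)p_j$ of the preemption routine, and consults a job's release date only to trigger events, while the scheduling routine depends solely on SPT among admitted jobs and is independent of deadlines. Because in $\I$ each $x\in T_j$ is admitted at $a_x$ under $d_x-a_x\ge(1+\delta)p_x$, the stronger condition $d_x'-a_x=(1+\eps)p_x\ge(1+\delta)p_x$ trivially holds in $\I'$. Moreover, advancing the release date from the original $r_x$ to $a_x$ merely suppresses release events in $[r_x,a_x)$ that triggered no admission in $\I$, and hence no region change. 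Thus the full event sequence inside $[a_j,b_j)$ is reproduced verbatim in $\I'$, yielding the desired instance of size $|T_j|+1$ with the same tree $T_j$ and the same regions.
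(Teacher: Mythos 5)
Your construction reproduces the region structure and the subtree $T_j$, and the induction over admission times is sound: resetting $r_x$ to $a_x$ suppresses only release events that triggered no admission inside $[a_j,b_j)$, the large dummy job $j_0$ emulates the role of $\pi(j)$ and all higher ancestors, and the availability test is met because $d_x'-a_x=(1+\eps)p_x\ge(1+\delta)p_x$. In that sense you prove the displayed statement. The approach differs from the paper's, which changes only $j$'s release date (and possibly adds one \emph{tight} auxiliary job that blocks the machine exactly up to $a_j$), and then argues by contradiction that the two region schedules cannot first differ inside $(a_j,b_j)$. Your version is a little more uniform (no case split on whether $d_j-a_j\geq(1+\eps)p_j$, and no reliance on $\alpha=1$ to make the auxiliary job's region end at $a_j$).

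However, there is a consequential divergence that I would flag. The paper leaves the deadlines of \emph{every} $x\in T_{-j}$ untouched, whereas you reset $d_x':=a_x+(1+\eps)p_x$ for all $x\in T_j$. Since the only relation guaranteed for an admitted job is $d_x-a_x\ge(1+\delta)p_x$, the original $d_x$ may be either larger or smaller than your $d_x'$. Hence the on-time/late status of a job (whether $b_x\le d_x$) is not preserved in your $\I'$. The formal lemma does not mention deadlines, so your proof is technically valid for the statement as written; but the lemma is invoked in the proof of Lemma~\ref{lem:F>U} as a reduction to a minimal counterexample, and that argument compares $|F_j|$ and $|U_j|$, the counts of finished and unfinished jobs in $T_j$. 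For that reduction to go through, the modified instance must preserve $|F_j|-|U_j|$, which the paper's construction accomplishes precisely because it keeps all deadlines of $T_{-j}$ fixed. Your construction does not provide this stronger guarantee, so while it proves the lemma, it would not support the way the lemma is actually used.
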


\begin{proof}
	Consider an instance~$\I$ of the non-committed scheduling problem, and let~$T_j$ be the interruption tree constructed by the \region with its root in~$j$. Let the interval~$[a_j, b_j)$ be the convex hull of the subintervals belonging to the region of~$j$. Our goal is to modify the instance~$\I$ such that we can remove jobs outside of~$T_j$ without changing the interruption tree of the algorithm. We do so by setting~$\I'$ to contain the set of jobs in~$T_j$ and changing only parameters of job~$j$ (and possibly adding one auxiliary job). Note that~$j$ is, by definition, the largest job in~$\I'$. 
		
	If~$d_j - a_j \geq (1+\eps) p_j$ then we set~$r_j':= a_j$. Otherwise, 
	we add an auxiliary job~$0$ to~$\I'$ that is tight and blocks the machine until~$a_j$. This means~$r_0 = d_j - (1+\eps) p_j$, $p_0 = (1+\eps) p_j - (d_j - a_j)$, and~$d_0 = r_0 + (1+\eps) p_0$. Moreover, we modify the release date of~$j$ to~$r_j':= r_0$. Since the auxiliary job is the smallest job in instance~$\I'$ at time~$r_0$, the \region admits this job and delays job~$j$. Let~$\mathcal{R}$ and~$\mathcal{R}'$ be the schedule of regions in $[a_j,b_j)$ generated by the region algorithm when applied to~$\I$ and~$\I'$, respectively. We show that~$\mathcal{R}$ and~$\mathcal{R}'$ are identical in~$[a_j, b_j)$. 
	
	Consider the time~$t = a_j$. Clearly,~$j$'s region starts in~$\mathcal{R}$ by assumption. If no auxiliary job was used, job~$j$ is the only available job in~$\I'$. Thus, the \region admits~$j$. In contrast, if~$0 \in \I'$, it finishes at~$a_j$ by definition. Since~$j$ is admitted in~$\mathcal{R}$, it must hold that~$d_j - a_j \geq (1+\delta)p_j$. Thus, its regions also begins in~$\mathcal{R}'$ at~$a_j$. 
	
	Let~$a_j < t < b_j$ be the first time when the two region schedules~$\mathcal{R}$ and~$\mathcal{R}'$ are different. Since both schedules are generated by the \region, any change in the structure of the regions is due to one of the two decision events of the \region. Recall that these events where the end of a job's region and the release of a new job. We distinguish two cases based on the job~$k$ that caused the difference in~$\mathcal{R}$ and~$\mathcal{R}'$:~$k\in T_j$ or~$k\notin T_j$. 	
	
	By definition, the region of any job outside of~$T_j$ has empty intersection with~$[a_j, b_j)$. Thus, the release of such a job can neither change~$\mathcal{R}$ nor~$\mathcal{R}'$. Of course, the region of such job cannot end within~$[a_j, b_j)$. Thus, a job~$k\in T_j$ is the reason for the difference in~$\mathcal{R}$ and~$\mathcal{R}'$. Let~$k$ be the job that owns time~$t$ in~$\mathcal{R}$. If the processor is idle in~$\mathcal{R}$ after~$t$ let~$k$ be the job that owns~$t$ in~$\mathcal{R}'$. As the two schedules are identical in~$[a_j,t)$, let~$i \in T_j$ be the unique job that owns the time right before~$t$. 
	
	Consider the event that the region of~$i$ ended at~$t$. If~$k$ is an ancestor of~$i$, then there is no sufficiently small job available in~$\I$ that prevents~$k$ from being restarted at~$t$. Additionally, the amount of time that belongs to~$R(k)$ in $[a_j,t)$ is identical in~$\mathcal{R}$ and~$\mathcal{R}'$. Thus,~$k$ also resumes processing in~$\mathcal{R}'$. If~$k$ is admitted at~$t$ in~$\mathcal{R}$, it is sufficiently small to (further) preempt the ancestor of~$i$ and it is available for admission. Hence, these two properties are also satisfied in~$\I'$ and~$k$ is admitted at~$t$ in~$\mathcal{R}'$ as well. Therefore~$\mathcal{R}$ contains idle time at~$t$ while the region of~$k$ is scheduled in~$\mathcal{R}'$. Since the jobs in~$\I'$ are a subset of the jobs in~$\I$ (except for $0$), job~$k$ is also admitted and unfinished or available at~$t$ in~$\I$. This is a contradiction.	
\end{proof}


\lemFU*

\begin{proof}
	
	Assume for the sake of contradiction that there is an instance such that the interruption tree generated by the \region contains a subtree~$T_j$ with~$b_j - a_j \geq (\ell + 1) p_j$ and~$|F_j| - |U_j| < \lfloor \frac{4\ell}{\eps}\rfloor$. Let~$\I$ be such an instance with minimal number of jobs in total. By Lemma \ref{lem:SmallestInstance}, we restrict to the jobs contained in~$T_j$. The goal is to construct an instance~$\I'$ that satisfies $b_j - a_j \geq (\ell + 1) p_j$ and~$|F_j| - |U_j| < \lfloor \frac{4\ell}{\eps}\rfloor$ although it uses less jobs than~$\I$. To this end, we modify~$\I$ in several steps such that we can merge three jobs to one larger job without violating~$b_j - a_j \geq (\ell +1) p_j$, changing~$|F_j|$ or~$|U_j|$, or making the instance infeasible. The three jobs will be leaves with the same parent~$\ihat$ in~$T_j$. In fact, if~$\ihat$ is an unfinished job, then $b_{\ihat} - a_{\ihat} \geq (1 +\frac{\eps}{2}) p_{\ihat}$. Any job~$k$ that may postpone~$\ihat$ satisfies~$p_k < \beta p_{\ihat} = \frac{\eps}{4} p_{\ihat}$. Hence, if the children of~$i$ are all leaves, there have to be at least three jobs that interrupt~$\ihat$. 

	
	In the following argumentation we use the position of jobs in the interruption tree. To that end, we define the height of an interruption tree to be the length of a longest path from root to leaf and the height of the node~$j$ in the tree to be the height~of~$T_j$.

	The roadmap of the proof is as follows: First, we show that there is an instance~$\I'$ with an unfinished job of height one by proving the following facts.
	\begin{enumerate}
		\item\label{enum:FU:HeightofTj} The height of the interruption tree~$T_j$ is at least two. 
		\item Any finished job of height one can be replaced by a leaf. 
	\end{enumerate}
	Let~$\ihat$ be an unfinished job of height one. We show that three of its children can be merged and become a sibling of their former ancestor. To this end, we prove that we can assume w.l.o.g. the following two properties of the children of~$\ihat$. 
	\begin{enumerate}[resume*]
		\item No job is completely scheduled in~$[d_{\ihat}, b_{\ihat})$. 
		\item The children of~$\ihat$ form a string of jobs.
	\end{enumerate}
	
	\begin{enumerate}[label=Ad \arabic*.]
		\item We show that the height of~$T_j$ is at least two. Assume for the sake of contradiction that~$T_j$ is a star centered at~$j$. Since any leaf finishes by definition of the \region, the root~$j$ is the only job that could possibly be unfinished. As~$|F_j| - |U_j|  \leq \lfloor \frac{4\ell}{\eps}\rfloor - 1$, this implies that there are at most~$\lfloor \frac{4\ell}{\eps}\rfloor$ leaves in~$T_j$. Then, \begin{equation*}
		b_j - a_j = \sum_{i \in T_j} p_i < p_j + \left\lfloor \frac{4\ell}{\eps}\right\rfloor \cdot \frac{\eps}{4}p_j \leq p_j + \ell p_j,
		\end{equation*}
		where we used~$p_i < \beta p_j$ for each leaf~$i \in T_j$. This contradicts the assumption~$b_j - a_j \geq (\ell +1) p_j$. 
		\item 	We show that w.l.o.g. any region of height one ends after the corresponding deadline. Let~$i$ be a finished job of height one, i.e.,~$b_i \leq d_i$, let~$l$ be the last completing child of~$i$, and let~$\pi(i)$ denote the parent of~$i$. This job~$\pi(i) \in T_j$ must exist because the height of~$T_j$ is at least two by \ref{enum:FU:HeightofTj}. We create a new instance by replacing~$l$ by a new job~$l'$ that is released at~$r_{l'} := b_i - p_l$ and that has the same processing time as~$l$, i.e.,~$p_{l'}:= p_l$. The deadline is~$d_{l'} := r_{l'}+(1+\eps)p_{l'}$. We argue that~$i$ finishes at~$r_{l'}$ in the new instance and that~$l'$ finishes at~$b_i$. 
		
		As~$l$ is not interrupted, $r_{l'} - a_l = b_i - p_l - a_l = b_i - b_l $, which is the remaining processing time of~$i$ at~$a_l$. If we can show that~$i$ is not preempted in ~$[a_l,r_{l'})$ in the new instance,~$i$ completes at~$a_l + b_i - b_l = r_{l'}$. Since~$l$ is the last child of~$i$, any job~$k$ released within~$[a_l,b_i)$ is scheduled later than~$b_i$. (Recall that we restrict to jobs in the interruption tree~$T_j$ by Lemma~\ref{lem:SmallestInstance}.) Thus,~$p_{k} \geq \beta p_i > p_l$. Hence,~$i$ is not interrupted in~$[a_l,r_{l'})$ and completes at~$r_{l'} < b_i \leq d_i$. At time~$r_{l'}$, job~$l'$ is the smallest available job and satisfies~$p_{l'} < (\frac{\eps}{4}) p_i < (\frac{\eps}{4})^2 p_{\pi(i)}$. Thus,~$l'$ is admitted at~$r_{l'}$ and is not interrupted until~$r_{l'} + p_l = b_i$ by the same argumentation about the jobs~$k$ that are released in~$[a_l,b_i)$. Hence, its region ends at~$r_{l'}+ p_{l'} < d_{l'}$. Moreover, outside the interval~$[a_l,b_i)$ neither the instance nor the schedule of the regions are changed. Since~$l'$ is now released outside of the region of~$i$,~$l'$ becomes a child of~$\pi(i)$, i.e.,~$l'$ is directly appended to~$\pi(i)$. This modification does not alter the length of~$[a_j,b_j)$ or the number of finished and unfinished jobs. Inductively applying this modification to any finished job of height one proves the claim. 
	\end{enumerate}

	Next, we prove the simplifying assumptions on jobs of height one. Because of the just proven statements,~$T_j$ must contain at least one unfinished job of height one. Let~$\ihat$ be such a job. Since~$\ihat$ is unfinished, it must hold that~$d_{\ihat} < b_{\ihat}$. For simplicity, let~$T := T_{-\ihat}$ denote the set of children of~$\ihat$ and let~$\tau:= |T|$.	
	\begin{enumerate}[resume*]
		\item We start by showing that no region of child of~$\ihat$ is completely contained in~$[d_{\ihat},b_{\ihat})$. If there is a child~$c$ with~$d_{\ihat} \leq a_c < b_c < b_{\ihat}$, it does not prevent the algorithm from finishing~${\ihat}$. Hence, it could be appended to~$\pi(\ihat)$ in the same way as we appended the last child of an finished job in the previous claim. That is, we can create a new instance in which~$c$ is appended to~$\pi(\ihat)$ and~$\ihat$ is still unfinished.
		\item We show that the regions of the children of~$\ihat$ form an interval with endpoint~$\max\{d_{\ihat}, b_{\max}\}$ where~$b_{\max} := \max_{c \in T} b_c$.  We further prove that they are released and admitted in increasing order of their processing times. More formally, we index the children in increasing order of their processing times, i.e,~$p_{c_1} \leq p_{c_2} \leq \ldots \leq p_{c_\tau}$. Then, we create a new instance with modified release dates such that one child is released upon completion of the previous child. This means~$r_{c_\tau'}:=\max\{d_i, b_{\max}\} - p_{c_\tau}$ and~$r_{c_{h-1}'} := r_{c_{h}'} - p_{c_{h-1}'}$ for~$1< h \leq \tau$ where the processing times are not changed, i.e.,~$p_{c_h'} = p_{c_h}$. In order to ensure that the modified instance is still feasible, we adapt the deadlines $d_{c_h'} := r_{c_h'} + (1+\eps) p_{c_h'}$.
		
		It is left to show that the modifications did not affect the number of finished or unfinished jobs. Obviously, the \region still admits every job in~$T'$. A job~$k \notin T'$ released in~$[a_{\ihat},b_{\ihat})$ satisfies~$p_{k} \geq \beta p_{\ihat} > p_c > \beta p_c$ for all~$c\in T$. Hence, these jobs do not interrupt either~$\ihat$ or any of the children. They are still scheduled after~$b_{\ihat}$ and every child~$c\in T$ completes before its deadline. We also need to prove that~$\ihat$ still cannot finish on time. 
		If~$b_{\max} \leq d_{\ihat}$, the region of every child is completely contained in~$[a_{\ihat}, d_{\ihat})$. Hence, job~$\ihat$ is still interrupted for the same amount of time before~$d_{\ihat}$ in~$\I'$ as it is in~$\I$. Thus, $b_{\ihat}'= a_{\ihat} + p_{\ihat} + \sum_{h=1}^{\tau} p_{c_h'} =  a_{\ihat} + p_{\ihat} + \sum_{c \in T} p_{c} = b_{\ihat} > d_{\ihat}$. 
		If~$b_{\max} > d_{\ihat}$, let $l$ denote the child in~$\I$ with~$b_l = b_{\max}$. Then,~$r_{c_\tau'} = b_{\max} - p_{c_\tau'} \leq b_l - p_l < d_{\ihat} $, where we used that no child is completely processed in~$[d_{\ihat}, b_{\ihat})$ and that~$c_\tau'$ is the child of~$\ihat$ with the largest processing time. Thus, the delay of~$\ihat$ in~$[a_{\ihat}, d_{\ihat})$ is identical to~$\sum_{c \in T} p_c - (b_l - d_{\ihat})$. Hence,~$\ihat$ still cannot finish on time. In this case,~$b_{\ihat}'= b_{\ihat}$ holds as well. Hence, the modified jobs in~$\I'$ still cover the same interval~$[a_{\ihat},b_{\ihat})$.
	\end{enumerate}
	
	So far we have modified $\I$ such that it remains an instance which achieves $|F_j|-|U_j| < \lfloor \frac{4\ell}{\eps}\rfloor$ with a minimum total number of jobs. In the following, we show that the considered instance does not use a minimal number of jobs in total which implies a contradiction and thus the lemma is proved. We do so by modifying the instance in three steps. In the first step, we merge three jobs in~$T_{-\ihat}$ where~$\ihat \in T_j$ is an unfinished job of height one such that its children satisfy the Assumptions~3 and~4. In the second step, we replace~$\ihat$ by a similar job~$\ihat'$ to ensure that the instance is still feasible. In the third step, we adapt jobs~$k \notin T_{-\ihat}$ to guarantee that~$\ihat'$ is admitted at the right point in time. Then, we we show that the resulting instance covers the same interval~$[a_{\ihat}, b_{\ihat})$

 	Since~$\ihat$ is admitted at~$a_{\ihat} \leq d_{\ihat} + (1+\frac{\eps}{2}) p_{\ihat}$ and not finished by the region algorithm on time,~$b_{\ihat}- a_{\ihat}\geq (1+\frac{\eps}{2}) p_{\ihat}$. As any job that may postpone the region of~$\ihat$ satisfies~$p_k < \beta p_{\ihat} = \eps /4 p_{\ihat}$, there have to be at least three jobs that interrupt~${\ihat}$. Among these, consider the first three jobs~$c_1,c_2,$ and~$c_3$ (when ordered in increasing release dates). We create a new instance by deleting $c_1,c_2,$ and~$c_3$ and adding a new job $c'$ such that~$c'$ is released at the admission date of~$\ihat$ in $\I$ and it merges~$c_1,c_2$ and~$c_3$, i.e.,~$r_{c'} := a_{\ihat}$,~$p_{c'} := p_{c_1} + p_{c_2} + p_{c_3}$, and~$d_{c'} := r_{c'}+(1+\eps)p_{c'}$. In exchange, we remove the jobs ~$c_1,c_2,$ and~$c_3$ from the instance. This completes our first step of modification. 
 	
 	Second, we replace~$\ihat$ by a new job~$\ihat'$ that is released at~$r_{\ihat'}:= a_{\ihat} + p_{c'}$, has the same processing time, i.e., ~$p_{\ihat'}= p_{\ihat}$, and has a deadline~$d_{\ihat'}:= \max \{d_{\ihat}, r_{\ihat'}+ (1+\eps)p_{\ihat'} \}$. 
 	
 	In the third step of our modification, we replace every job~$k$ with~$r_k \in [a_{\ihat},r_{\ihat}]$ and~$p_k \leq p_{\ihat}$ by a new job~$k'$ that is released slightly after~${\ihat'}$, i.e.,~$r_{k'}:= r_{\ihat'}+\varrho$ for~$\varrho>0$ sufficiently small. It is important to note that we do not change the processing time or the deadline of~$k'$, i.e.,~$p_{k'}=p_k$ and~$d_{k'} =d_k$. This ensures that~$k'$ finishes on time if and only if~$k$ finishes on time. This modification is feasible, i.e.,~$d_{k'} - r_{k'}\geq(1+\eps)p_{k'}$, because of two reasons. First, \[b_{\ihat} - r_{\ihat'} = b_{\ihat} - (a_{\ihat} + p_{c'}) = b_{\ihat} - a_{\ihat} - (p_{c_1} + p_{c_2} + p_{c_3}) \geq p_{\ihat}\] as~$c_1,c_2,$ and~$c_3$ postponed the region of~$\ihat$ by their processing times in $\I$. Second,~$d_k - b_{\ihat} \geq (1+\frac{\eps}{2}) p_k$ because we only consider jobs that were admitted at some point later than~$b_{\ihat}$ by the region algorithm. Then, 
 	\begin{equation*}	
 	d_{k'} - r_{k'} = d_k - b_{\ihat} + b_{\ihat} - r_{\ihat'}-\varrho \geq (1+\frac{\eps}{2}) p_k + p_{\ihat} - \varrho \geq  (2+\frac{\eps}{2}) p_k -\varrho \geq (1 + \eps) p_{k'},\end{equation*}
 	where the last but one inequality follows from the fact that only jobs with~$p_k \leq p_{\ihat}$ were affected by the modification and the last inequality is due to the sufficiently small choice of~$\varrho$. 
 	
 	So far, we have already seen that the resulting instance is still feasible. It is left to show that~$c'$ completes at~$r_{c'} + p_{c'}$ as well as that~$\ihat'$ is admitted at~$r_{\ihat'}$ and its region ends at~$b_{\ihat}$.  
 	
 	Since it holds that~$p_{c'} < \frac{3\eps}{4} p_{\ihat} < p_{\ihat}$, at~$a_{\ihat}=r_{c'}$ the new job~$c'$ is the smallest available job and any job that was interrupted by~${\ihat}$ is preempted by~$c'$ as well. The jobs in~$T_{-\ihat}$ are released one after the other by Assumption 4 and~$r_{c_1} > a_{\ihat}$. Thus, if~$\ihat'$ has at least one child left after the modification, it holds that~$r_{c_4} = r_{c_1}  + p_{c_1} + p_{c_2} + p_{c_3} = a_{\ihat} + p_{c'} + (r_{c_1} - a_{\ihat})  > r_{\ihat'}$. Hence, no remaining child is released in~$[r_{c'},r_{\ihat'}]$ in the modified instance. Any other job~$k\in T_j$ released in~$[r_{c'},r_{\ihat'}]$ satisfies $p_{k} \geq \frac{\eps}{4} p_{\ihat}$ as $k\notin T_{-\ihat}$. Because~$p_{c'} < p_{\ihat}$, this implies that $p_{k} \geq \frac{\eps}{4} p_{c'}$ holds as well, i.e., no such job~$k$ interrupts~$c'$. Therefore,~$c'$ completes at~$r_i'$.
 	
 	Job~$\ihat'$ is admitted at~$r_{\ihat'}$ if it is the smallest available job at that time. 
 	We have already proven that none of the remaining children of~$\ihat$ is released in~$[a_{\ihat},r_{\ihat'}]$ that might prevent the \region from admitting~$\ihat$ at~$r_{\ihat'}$. Furthermore, the third step of our modification guarantees that any job~$k \in T_j\setminus T_{\ihat}$ that is smaller than~$p_{\ihat}$ is released after~$r_{\ihat'}$. Therefore,~${\ihat'}$ is the smallest available job at time~$r_{\ihat'}$ by construction, and it is admitted. 
	As argued above, the modified instance is still feasible and the interval~$[a_{\pi(\ihat)}, b_{\pi(\ihat)})$ is still completely covered by regions of jobs in~$T_{\pi(\ihat)}$. 
	
	However, the second step of our modification might lead to~$b_{\ihat'} \leq d_{\ihat'}$ which implies that~$\ihat'$ finishes on time while~$\ihat$ does not finish on time. This changes the values of~$|F_j|$ and~$|U_j|$. Clearly, in the case that~$\ihat'$ completes before~$d_{\ihat'}$,~$|U_j'| = |U_j| -1$. By a careful analysis, we see that in this case the number of finished jobs decreases by one as well because the three (on time) jobs~$c_1,c_2$ and~$c_3$ are replaced by only one job that finishes before its deadline. Formally, we charge the completion of~$c'$ to~$c_1$, and the completion of~$\ihat'$ to~$c_2$ which leaves~$c_3$ to account for the decreasing number of finished jobs. Hence, $|F_j'| - |U_j'| = |F_j| - |U_j|$. If~$\ihat'$ does not finish by~$d_{\ihat'}$, then $|F_j'| - |U_j'| = (|F_j| - 2) - |U_j|$. Thus, the modified instance~$\I'$ also satisfies $|F_j'| - |U_j'| < \lfloor \frac{4\ell}{\eps}\rfloor$ but uses less jobs than~$\I$ does. This is a contradiction. 
\end{proof}

%
%

\section{Proofs of Section \ref{sec:comp-ratio} (Competitiveness)}



\subsection{Proof of Lemma \ref{lem:SoL}}

\lemStringOfLeaves*

\begin{proof}
	Observe that $[a_{f}, b_{g}) = \bigcup_{j=1}^k R(j)$ because the leaves $f, \ldots, g$ form a string of jobs. Thus, by showing that there is a job $x \in  X_f^g := \bigcup_{j=f}^g X_{j}$ that satisfies \eqref{eq:VolumeCondition}, we prove the statement with the Volume Lemma. To this end, we show that for every job $f \leq j \leq g$ there exists a set~$Y_j$ such that the processing volume within $Y_j$ is sufficient to cover the interval $[a_{j},b_{j})$ at least $\epsfrac$ times. More precisely, $Y_f,\ldots,Y_g$ will satisfy  
		\begin{enumerate}[label=(\roman*)]
			\item $\bigcup_{j=f}^g Y_j \subset X_f^g$, 
			\item $|Y_j| = \lambda$, and 
			\item $Y_j \subset\{x \in X_f^g: p_x \geq  \beta p_{j}\}$ for every $f\leq j \leq g$. 
		\end{enumerate}
	Indeed, by our choice of $\lambda$, $\sum_{y \in Y_j} p_y \geq \lambda \beta p_{j} = \epsfrac (b_{j} - a_{j})$ if \ref{enum:proofSoL:YSubsetX} to \ref{enum:proofSoL:YBigJobs} are satisfied. Thus, if we choose~$x$ among those jobs in~$X_f^g$ that \opt completes last and guarantee that $x \notin \bigcup_{j=f}^g Y_j$, the Volume Condition \eqref{eq:VolumeCondition} is satisfied. We first give the procedure to find the $Y_f,\ldots,Y_g$ before we show that the constructed sets satisfy \ref{enum:proofSoL:YSubsetX} to \ref{enum:proofSoL:YBigJobs}. 
	
	By assumption, $|X_{f}| > \lambda$. Let $X_{f} = \{x_1,\ldots,x_\lambda,x_{\lambda+1},\ldots\}$ be indexed in increasing completion times $C_x^*$. Define $Y_f :=  \{x_1,\ldots,x_\lambda\}$ and $L_f :=  \{x_{\lambda+1},\ldots\}= X_f \setminus Y_f $, i.e., $Y_f$ contains the $\lambda$ jobs in $X_{f}$ that \opt completes first and $L_f$ contains the last jobs. For $f < j+1\leq g$, let $Y_f,\ldots,Y_j$ and $L_j$ be defined. By assumption, $|X_{j+1 } \cup L_j| > \lambda$ since $|Y_i| = \lambda$ for $1\leq i \leq j$. The jobs in $X_{j +1} \cup L_j = \{x_1, \ldots, x_\lambda, x_{\lambda+1}, \ldots\}$ are again indexed in increasing order of optimal completion times. Then, $Y_{j+1} := \{x_1,\ldots,x_\lambda\}$ and $L_{j+1} :=  \{x_{\lambda+1},\ldots\}$. Since we move jobs only horizontally to later siblings, we call this procedure \pushforward.
	
	By definition, \ref{enum:proofSoL:YSubsetX} and \ref{enum:proofSoL:Y=lambda} are satisfied. Since $f,\ldots,g$ are leaves, the jobs in $Y_j \cap X_j$ are big w.r.t. $j$. Thus, it remains to show that the jobs in $L_j$ are big w.r.t. the next jobs $j+1$. 
	
	To this end, we observe the following. Assume that the jobs in $Y_f,\ldots,Y_j$ are big w.r.t. $f,\ldots,j$, respectively. If we find an index $f\leq \ihat(x) \leq j$ such that $x$ as well as the jobs in $\bigcup_{i=\ihat(x)}^j Y_i$ are released after $a_{\ihat(x)}$, i.e., 
	\begin{equation}\label{eq:proofSoL:release}
	a_{\ihat(x)} \leq r_y \text{ for $y=x$ or $y \in \bigcup_{i = \ihat(x)}^j Y_i$},
	\end{equation}
	and $x$ completes after every $y \in \bigcup_{i = \ihat(x)}^j Y_i$, i.e., 
	\begin{equation}\label{eq:proofSoL:completion}
	C_y^* \leq C_x^*  \text{ for $y \in \bigcup_{i = \ihat(x)}^j Y_i$},
	\end{equation}
	then we can apply the Volume Lemma to show that $x \in L_j$ is big w.r.t. $j+1$.  Indeed, then 	
	\begin{equation*}
	\sum_{i = \ihat(x)}^j \sum_{\substack{y\in X_{i}:  C_y^* \leq C_x^*}} p_y 		
	\geq p_x + \sum_{i = \ihat(x)}^j \sum_{y\in Y_i} p_y 	\geq p_x +  \sum_{i = \ihat(x)}^j \epsfrac (b_i - a_i) =  \epsfrac (b_{j} - a_{\ihat(x)}) + p_x .
	\end{equation*} 	
	We show by induction that such an index $\ihat(x)$ exists for every $x \in L_j$. 
	
	Since $Y_f \subset X_f$, we set $\ihat(x) := f$ for $x\in L_f$. By definition of $L_f$, $C_y^* \leq C_x^*$ for $y \in Y_f$ and $x \in L_f$. Hence, applying the Volume Lemma as explained above shows $p_x \geq \beta p_{f+1}$. 
	
	Let $f < j < g$. Assume that $Y_f,\ldots,Y_j$ and $L_j$ are defined as described above. For jobs $x \in L_j \setminus X_j \subset L_{j-1}$, we have $\ihat(x)$ with the Properties \eqref{eq:proofSoL:release} and \eqref{eq:proofSoL:completion} by induction. For $x \in L_j \cap X_j$, we temporarily set $\ihat(x) := j$ for simplification. We have to distinguish two cases: $\ihat(x)$ also satisfies \eqref{eq:proofSoL:release} and \eqref{eq:proofSoL:completion} for $j$ or we have to adjust $\ihat(x)$. Fix $x\in L_j$. 
	
	\begin{itemize}
		\item $L_i \cap Y_j = \emptyset$ for every $f\leq i < \ihat(x)$. Since only jobs in $L_i$ are shifted to some later job $j$, this implies $\bigcup_{i=f}^{\ihat(x)-1} X_i \cap Y_j = \emptyset$. Thus, the jobs in $Y_j$ are released after $a_{\ihat(x)}$ and by definition, $C_y^* \leq C_x^*$ for $y \in Y_j$. By induction, $x$ and the jobs in $Y_{\ihat(x)} \cup \ldots \cup Y_{j-1}$ satisfy \eqref{eq:proofSoL:release} and \eqref{eq:proofSoL:completion}. Hence, $\ihat(x)$ is a suitable choice for $x$ and $j$. 		
		\item $L_i \cap Y_j \neq \emptyset$ for some $f\leq i < \ihat(x)$. Choose the job $y \in L_{j-1} \cap Y_j$ with the smallest $\ihat(y)$. By a similar argumentation as before, $\bigcup_{i=f}^{\ihat(y)-1} X_i \cap Y_j = \emptyset$, which implies  \eqref{eq:proofSoL:release} for $z \in Y_j$. Again by induction, $y$ and the jobs in $Y_{\ihat(y)} \cup \ldots \cup Y_{j-1}$ satisfy \eqref{eq:proofSoL:release} and \eqref{eq:proofSoL:completion}. Since $x\in L_j$, $C_x^* \geq C_z^*$ for all $z \in Y_j$. This implies $C_x^* \geq C_z^*$ for $z \in \bigcup_{i = \ihat(y)}^{j-1} Y_i$ because $y \in L_{j-1}\cap Y_j$. Set $\ihat(x) := \ihat(y)$.
	\end{itemize}
	As explained above, the Volume Lemma implies $p_x \geq \beta p_{j+1}$. 
	
	The same argumentation holds for $j = g$ although in this special case, Corollary \ref{cor:IsolatedJobs} implies the statement. 
\end{proof}

\subsection{Proof of Lemma \ref{lem:SoN}}\label{subsec:ProofofLem:SoN}

\lemSoN*

\begin{proof}
	We show that for every $j\in J\cup \{M\}$, there exists a partition $(Y_k)_{k\in T_{-j}}$ with 
	\begin{center}
		\begin{enumerate*}[label=(\roman*)]
			\item\label{app:enum:proofSoN:union} $\bigcup_{k\in T_{-j}} Y_k = X_j^S$,
			\item\label{app:enum:proofSoN:big} $Y_k \subset \{ x \in X_j: p_x \geq \beta p_k \}$, and
			\item\label{app:enum:proofSoN:size} $|Y_k| \leq \lambda$ for every $k\in T_{-j}$. 
		\end{enumerate*} 		
	\end{center}
	Then, it holds that $|X_j^S | = |\bigcup_{k\in T_{-j}} Y_k | = \sum_{k \in T_{-j}} |Y_k| \leq \tau_j \lambda$ and, thus, the lemma follows.
	
	The proof consists of an outer and an inner induction. The outer induction is on the distance~$\dist(j)$ of a job~$j$ from machine job~$M$, i.e., $\dist(M) := 0$ and $\dist(j) := \dist(\pi(j)) + 1$ for $j\in J$. The inner induction uses the idea about pushing jobs $x\in X_j$ to some later sibling of $j$ in the same string of jobs (see proof of Lemma~\ref{lem:SoL}).
	
	Let $j \in J$ with $\dist(j) = \dist_{\max} -1 := \max\{\dist(i): i \in J\} -1$. By Observation \ref{obs:XandXj}, $X_j^S = \bigcup_{k: \pi(k) = j} X_k$, where all $k \in T_{-j}$ are leaves at maximal distance from~$M$. We distinguish three cases for $k\in T_{-j}$:
	
	\begin{description}[labelindent=0em ,labelwidth=0cm, labelsep*=1em, leftmargin =\parindent , itemindent = 0pt, style = sameline]
		\item[Case 1.] If $k\in T_{-j}$ is isolated, $|X_k| \leq \lambda$ follows directly from the Volume Lemma as otherwise $\sum_{x \in X_k} p_x \geq \lambda \beta p_k + p_x = \tfrac{\eps}{\eps - \delta} (b_k - a_k) + p_x$ contradicts Corollary~\ref{cor:IsolatedJobs}, where~$x\in X_k$ is the last job that \opt completes from the set~$X_k$. Since all jobs in $X_k$ are big w.r.t.~$k$, we set $Y_k := X_k$.  
		\item[Case 2.] If $k\in T_{-j}$ with $|X_k| > \lambda$ is part of a string, let $f ,\ldots, g$ be the {\em maximal} string satisfying Lemma \ref{lem:SoL} with $k\in \{f,\ldots,g\}.$ With this lemma, we find $Y_f,\ldots,Y_g$ and set $Y_{g+1} := X_{g+1} \cup L_g$. 
		\item[Case 3.] We have not yet considered jobs~$k$ in a string with $|X_k| \leq \lambda$ that 
		do not have siblings~$f,\ldots,g$ in the same string 
		with~$b_g = a_k$ and $\sum_{i=f}^{g} |X_j| > (g - f) \lambda$. This means that such jobs do not receive jobs~$x \in X_i$ for~$i\neq k$ by the \pushforward procedure in Case~2. For such~$k\in T_{-j}$ we define~$Y_k := X_k$. 
	\end{description}	
	Then, $X_j^S = \bigcup_{k: \pi(k) = j} X_k = \bigcup_{k \in T_{-j}} X_k =  \bigcup_{k \in T_{-j}} Y_k$ and, thus, \ref{app:enum:proofSoN:union} to \ref{app:enum:proofSoN:size} are satisfied. 
	
	Let $\dist < \dist_{\max}$ such that $(Y_k)_{k\in T_{-j}}$ satisfying \ref{app:enum:proofSoN:union} to \ref{app:enum:proofSoN:size} exists for all $j\in J$ with $\dist(j) \ge \dist$. Fix~$j \in J$ with $\dist(j) = \dist -1$. By induction and Observation \ref{obs:XandXj}, it holds that $X_j^S =  \bigcup_{k: \pi(k) = j} \left(X_k^B \cup \bigcup_{i \in T_{-k}} Y_i \right) $.
	Now, we use the partitions $(Y_i)_{i \in T_{-k}}$ for $k$ with $\pi(k) = j$ as starting point to find the partition $(Y_k)_{k \in T_{-j}}$. 
	Fix $k$ with $\pi(k)= j$ and distinguish again the same three cases as before.
	
	\begin{description}[labelindent=0em ,labelwidth=0cm, labelsep*=1em, leftmargin = \parindent , itemindent = 0pt, style = sameline]
		\item[Case 1.] 	If $k$ is isolated, we show that $|X_k| \leq \lambda (\tau_k+1) $ and develop a procedure to find $(Y_i)_{i \in T_k}$. Assume for sake of contradiction that $|X_k| > \lambda (\tau_k+1) $ and index the jobs in~$X_k$ in increasing order of completion times, i.e., $X_k = \{x_1, \ldots, x_{\lambda(\tau_k + 1) }, x_{\lambda(\tau_k + 1) +1 },\ldots \}$, and set $L :=  \{ x_{\lambda(\tau_k + 1) +1 },\ldots \}$. Then, \begin{equation*}
		|X_k^B\setminus L | = |X_k\setminus L | - |X_k^S\setminus L | = (\tau_k+1) \lambda - \sum_{i \in T_{-k}} |Y_i\setminus L | = \lambda + \sum_{i \in T_{-k}} (\lambda - |Y_i\setminus L |).
		\end{equation*}
		By induction hypothesis, $\lambda - |Y_i\setminus L | \geq 0$ for~$i \in T_{-k}$. Let $Y_k$ contain $\lambda$ arbitrary big jobs in $X_k^B\setminus L$ and assign each $Y_i$ for $i\in T_{-k}$ exactly $\lambda - |Y_i\setminus L |$ of the remaining (big) jobs in $X_k^B\setminus L$. This is possible because the jobs in  $X_k^B$ are big for any descendant of $k$, i.e., they satisfy \ref{enum:proofSoN:big}. By choice of $\lambda$, each of the just obtained sets covers the region of the corresponding job at least $\epsfrac$ times. Thus, the jobs in $X_k \setminus L$ have a total processing volume of at least $\epsfrac(b_k - a_k)$. Therefore, any job $x \in L$ satisfies \eqref{eq:VolumeCondition} which contradicts the fact that $k$ is isolated by Corollary \ref{cor:IsolatedJobs}. Thus,~$|X_k| \leq \lambda(\tau_k +1)$. 		

		To construct $(Y_i)_{i \in T_{k} }$, we assign $\min \{\lambda, |X_k^B|\}$ jobs from $X_k^B$ to $Y_k$. If $|X_k^B| > \lambda$, distribute the remaining jobs according to $\lambda - |Y_i|$ among the descendants of $k$. Then, $X_k = \bigcup_{i \in T_{k}} Y_i$. Because a job that is big w.r.t job $k$ is also big w.r.t.\ all descendants of $k$, 
		every (new) set $Y_i$ satisfies \ref{enum:proofSoN:big} and \ref{enum:proofSoN:size}. We refer to this procedure as \pushdown since jobs are shifted vertically to descendants.
		
		\item[Case 2.] 	If $|X_k| > \lambda (\tau_k+1) $, $k$ must belong to a string with similar properties as described in Lemma~\ref{lem:SoL}, i.e., there are jobs $f,\ldots,g$ containing $k$ such that 
		\begin{enumerate}
			\item $\sum_{j=f}^i |X_{j}| > \lambda\sum_{j=f}^i \tau_j $ for all $f\leq i\leq g$ and
			\item\label{enum:lem:SoN} $b_{j} = a_{j+1}$ for all $f\leq j < g$.
		\end{enumerate}
		Choose $\{f,\ldots,g\}$ maximal with those two properties. We show that the Volume Lemma implies the existence of another sibling $g+1$ that balances the sets $X_f,\ldots,X_g,X_{g+1}$. This is done by using the \pushdown procedure within a generalization of the \pushforward procedure. 
		
		As the jobs $f,\ldots,g$ may have descendants, we use \pushforward to construct the sets $Z_f,\ldots,Z_g$ and $L_f,\ldots, L_g$ with $|Z_k| = \lambda (\tau_k+1)$. Then, we show that we can apply \pushdown to $Z_k$ and $(Y_i)_{i\in T_{-k}}$ in order to obtain $(Y_i)_{i \in T_{k} }$. 
		This means the newly obtained partition satisfies 
		\begin{enumerate}[label=(\roman*)]
			\setcounter{enumi}{3}
			\item\label{enum:proofSoN:union'} $Y_k\cup\bigcup_{i\in T_{-k}} Y_i = Z_k$,
			\item $Y_i \subset \{ x \in X_j: p_x \geq \beta p_i \}$ and
			\item\label{enum:proofSoN:size'} $|Y_i| = \lambda$ for every $i\in T_{k}$. 
		\end{enumerate} 
		This implies that the set $Z_k$ covers $[a_k,b_k)$ at least $\epsfrac$ times. Thus, the sets~$X_k$ with~$f\leq k\leq g$ satisfy~\eqref{eq:VolumeCondition} and we can apply~Corollary~\ref{cor:IsolatedJobs}.		
		
		To define $Z_f,\ldots,Z_g$, we index the jobs in $X_f = \{x_1, \ldots, x_{\lambda_f}, x_{\lambda_f + 1},\ldots\}$ in increasing order of optimal completion times and set $Z_f := \{x_1, \ldots, x_{\lambda_k}\}$ and $L_f = X_f\setminus Z_f$. Assume that $Z_f,\ldots,Z_k$ and $L_f,\ldots,L_k$ are defined. Index the jobs in $X_{k+1} \cup L_k = \{x_1, \ldots, x_{\lambda_{k+1}}, x_{\lambda_{k+1} + 1},\ldots\}$ in increasing order of completion times and set $Z_{k+1} := \{x_1, \ldots, x_{\lambda_{k+1}}\}$ and $L_{k+1} = (X_{k+1} \cup L_k)\setminus Z_{k+1}$. Use the \pushdown procedure to obtain the partition $(Y_i)_{i \in T_{k}}$.
		
		If we can show that any job $x \in L_k$ is big w.r.t. $k+1$, we have that $Z_{k+1} \setminus X_{k+1}^S$ only contains big jobs w.r.t. ${k+1}$, which are also big w.r.t. every $ i \in T_{-(k+1)}$. As in Case 1,
		\begin{equation*}\label{eq:NumberOfBigJinZ}
		|Z_{k+1} \setminus X_{k+1}^S| = |Z_{k+1}| - |X_{k+1}^S \setminus L_{k+1}| = \lambda + \sum_{i \in T_{-(k+1)}} (\lambda - |Y_i\setminus L_{k+1}|).
		\end{equation*}
		Hence, the just defined partition $(Y_i)_{i \in T_{k}}$ satisfies \ref{enum:proofSoN:union'} to \ref{enum:proofSoN:size'}. 
		
		As in the proof for Lemma \ref{lem:SoL}, we show by induction that every $x \in L_k$ exhibits an index $\ihat(x)$ with 
		\begin{align}
		a_{\ihat(x)} & \leq r_y  \label{eq:proofSoN:release} \\
		C_y^* &\leq C_x^* \label{eq:proofSoN:completion}
		\end{align}
		for $y = x$ or $y \in \bigcup_{i = \ihat(x)}^j Z_i$. Then, the Volume Lemma implies that $p_x \geq \beta p_{k+1}$.
		
		For $x\in L_f$, set $\ihat(x) = f$. Thus, Equations \eqref{eq:proofSoN:release} and \eqref{eq:proofSoN:completion} are trivially satisfied. Since $Z_f \subset X_f$,  we have that $Z_f \setminus X_f^S$ only contains big jobs w.r.t. $f$.  
		
		Let $f < k < g$. Assume that $Z_f,\ldots,Z_k$ and $L_k$ are defined as described above. For jobs $x \in L_k \setminus X_k$, we have $\ihat(x)$ with the Properties \eqref{eq:proofSoN:release} and \eqref{eq:proofSoN:completion} by induction. For $x \in L_k \cap X_k$, we temporarily set $\ihat(x) := k$ for simplification. We have to distinguish two cases: $\ihat(x)$ also satisfies \eqref{eq:proofSoN:release} and \eqref{eq:proofSoN:completion} for~$k$ or we have to adjust $\ihat(x)$. Fix $x\in L_k$. 		
		\begin{itemize}
			\item $L_i \cap Z_k = \emptyset$ for every $f\leq i < \ihat(x)$. Since only jobs in $L_i$ are shifted to some later job $k$, this implies $\bigcup_{i=f}^{\ihat(x)-1} X_i \cap Z_k = \emptyset$. Thus, the jobs in $Z_k$ are released after $a_{\ihat(x)}$ and by definition, $C_y^* \leq C_x^*$ for $y \in Z_k$. By induction, $x$ and the jobs in $Z_{\ihat(x)} \cup \ldots \cup Z_{k-1}$ satisfy \eqref{eq:proofSoN:release} and \eqref{eq:proofSoN:completion}. Hence, $\ihat(x)$ is a suitable choice for $x$ and $k$. 
			
			\item $L_i \cap Z_k \neq \emptyset$ for $f\leq i < \ihat(x)$. Choose the job $y \in L_{k-1} \cap Z_k$ with the smallest $\ihat(y)$. By a similar argumentation as before, $\bigcup_{i=f}^{\ihat(y)-1} X_i \cap Z_k = \emptyset$, which implies  \eqref{eq:proofSoN:release} for $z \in Z_k$. Again by induction, $y$ and the jobs in $Z_{\ihat(y)} \cup \ldots \cup Z_{k-1}$ satisfy \eqref{eq:proofSoN:release} and \eqref{eq:proofSoN:completion}. Since $x\in L_k$, $C_x^* \geq C_z^*$ for all $z \in Z_k$. This implies $C_x^* \geq C_z^*$ for $z \in \bigcup_{i = \ihat(y)}^{k-1} Z_i$ because $y \in L_{k-1}\cap Y_k$. Set $\ihat(x) := \ihat(y)$.
		\end{itemize}
		As explained above, the Volume Lemma implies $p_x \geq \beta p_{k+1}$. 
		
		For $k+1 = g$, the above argumentation can be combined with Corollary \ref{cor:IsolatedJobs} to prove that the sibling $g+1$ indeed exists. Set $Z_{g+1} := X_{g+1} \cup L_g$ and use \pushdown to construct $(Y_{i})_{i \in T_{(g+1)}}$.

		\item[Case 3.] Any job $k$ with $\pi(k) = j$ that is part of a string and was not yet considered must satisfy $|X_{k} | \leq (\tau_k+1) \lambda$. 
		We use the \pushdown procedure for isolated jobs to get the partition~$(Y_{i})_{i \in T_k }$.
	\end{description}
	Hence, we have found $(Y_k)_{k\in T_{-j}}$ with the properties \ref{enum:proofSoN:union'} to \ref{enum:proofSoN:size'}.	
\end{proof}

\subsection{Analysis of the Region Algorithm is Tight}\label{sec:TightAnalysis}

The region algorithm is best possible (up to constants) for scheduling without commitment as the matching lower bound in Theorem~\ref{thm:detLB-w=1-anytime} proves. Consider now algorithms that must commit to job completions. We show that the analysis of the region algorithm is tight in the sense that the competitive ratio of the region algorithm is $\Omega(\alpha/\beta)$. Moreover, we give examples that show that for the commitment upon admission model the choice~$\alpha \in \Omega(1/\eps)$ and~$\beta \in O(1/\eps)$ is best possible. 

\begin{lemma}\label{lem:GeneralLBOnCompRatio}
	Let~$0 < \eps \leq 1 $, $\alpha\geq 1$, and~$0 < \beta < 1$. Then, the competitive ratio of the region algorithm is bounded from below by~$\alpha/\beta$.
\end{lemma}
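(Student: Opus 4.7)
I will exhibit a single adversarial instance on which the region algorithm completes exactly one job while the optimum completes $\Omega(\alpha/\beta)$ jobs, giving the claimed ratio. The construction uses one ``big'' job that the algorithm is forced to admit immediately, thereby reserving a long region, together with many ``small'' jobs whose sizes are tuned to exactly defeat the preemption test $p_i < \beta p_k$.

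Specifically, set $K := \lfloor \alpha/\beta \rfloor$ and release a job $0$ with $p_0 = 1$, $r_0 = 0$, and deadline $d_0 = 1+\eps$; then, for each $i = 1,\dots,K$, release a small job $s_i$ with $p_{s_i} = \beta$, $r_{s_i} = (i-1)\beta + \tau$, and $d_{s_i} = r_{s_i} + (1+\eps)\beta$, where $\tau > 0$ is arbitrarily small. Both kinds of jobs have exactly $\eps$-slack, so the instance is admissible. The region algorithm admits job $0$ at time $0$ (it is the only available job, and it fits the $(1+\delta)$-admission test for $\delta < \eps$), and thereby reserves the region $R(0) = [0,\alpha)$. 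At every subsequent release or region-end event occurring during $[0,\alpha)$, the shortest available job is some not-yet-expired $s_i$ with $p_{s_i} = \beta$; the preemption test $p_{s_i} < \beta\, p_0 = \beta$ \emph{fails with equality}, so $s_i$ is not admitted. Since each $s_i$ has deadline within $[0,\alpha + (1+\eps)\beta)$, every small job's admission window closes without a successful admission, so the algorithm completes exactly the single job $0$.

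On the other hand, an offline optimum can simply schedule $s_i$ on $[(i-1)\beta + \tau,\, i\beta + \tau)$, which lies inside $[r_{s_i}, d_{s_i}]$, completing all $K = \lfloor \alpha/\beta \rfloor$ small jobs. The resulting competitive ratio is at least $\lfloor \alpha/\beta\rfloor \in \Omega(\alpha/\beta)$, as claimed (and equal to $\alpha/\beta$ whenever $\alpha/\beta \in \mathbb{N}$; more generally one can rescale $p_0$ to avoid the floor).

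\textbf{Main obstacle.} The only delicate point is the tuning of the small jobs so that they genuinely block the algorithm: choosing $p_{s_i}$ strictly less than $\beta$ would make the $s_i$'s preempt $0$ (changing the whole dynamics), while choosing $p_{s_i}$ strictly greater reduces how many fit in $[0,\alpha)$ and would weaken the bound. Taking $p_{s_i} = \beta$ exactly, combined with a tiny shift $\tau > 0$ to break the simultaneous-release tie in favor of job $0$, resolves both issues. Verifying that no other preemption event can admit an $s_i$ during $[0,\alpha)$ is then mechanical: at every event time the shortest still-available job has size $\beta$, the region containing the event belongs to job $0$, and the preemption inequality fails.
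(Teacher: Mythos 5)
Your construction is essentially the one the paper uses---a blocking job~$0$ with $p_0=1$ that claims the region $[0,\alpha)$, followed by $\lfloor\alpha/\beta\rfloor$ small jobs of size exactly $\beta$ that fail the strict preemption test $p_i<\beta p_0$---but you set $d_0=1+\eps$, whereas the paper sets $d_0=\alpha+1$. This matters. With your tight deadline the optimum cannot afford to also complete job~$0$ (the total volume $1+\lfloor\alpha/\beta\rfloor\beta$ exceeds the roughly $\lfloor\alpha/\beta\rfloor\beta+\eps\beta$ units of usable time before all deadlines expire), so your optimum completes only the $K:=\lfloor\alpha/\beta\rfloor$ small jobs and your lower bound on the ratio is $K$, which is \emph{strictly less} than the claimed $\alpha/\beta$ whenever $\alpha/\beta\notin\mathbb{N}$. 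So the lemma as stated is not quite proved. Your parenthetical remark that one can ``rescale $p_0$ to avoid the floor'' is incorrect: scaling $p_0$ scales the region length $\alpha p_0$ and the smallest non-preempting job size $\beta p_0$ by the same factor, so the number of small jobs fitting the region remains $\lfloor\alpha/\beta\rfloor$; the quantity $\alpha/\beta$ is a fixed property of the algorithm's parameters, not of the instance scale.

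The paper closes this gap by a small trick: it gives job~$0$ the generous deadline $d_0=\alpha+1$. Then the optimum processes all $\lfloor\alpha/\beta\rfloor$ small jobs first, finishing by roughly $\lfloor\alpha/\beta\rfloor\beta\leq\alpha$, and still has one unit of time left to run job~$0$ before $\alpha+1$. The optimum thus completes $\lfloor\alpha/\beta\rfloor+1$ jobs, and since $\lfloor x\rfloor+1>x$ for every real $x$, this gives a ratio of at least $\alpha/\beta$ without any floor slack. Everything else in your argument---the tiny shift $\tau$ to break simultaneous-release ties in favor of job~$0$, the choice $p_{s_i}=\beta p_0$ exactly so that preemption fails at equality, and the check that each small job's admission window has closed by the time $R(0)$ ends---matches the paper and is correct.
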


\begin{proof}
	We consider an instance where a job~$0$ with processing time~$p_0 =1$ and a huge scheduling interval~$[r_0, r_0+\alpha+2)$ is released first. Then, the region algorithm blocks the region~$[r_0, r_0 +\alpha]$ for this job. During this interval, $\lfloor \alpha / \beta \rfloor$ jobs of size~$p_j = \beta$ arrive. They all fit into~$R(0)$ but the jobs are to big relative to~0 to be admitted. Then, an offline optimum would process all small jobs until $r_0 + \alpha$ before starting job~$0$. Hence, the \region completes one job while it is optimal to complete $\lfloor \alpha / \beta \rfloor + 1$ jobs. 
	
	More formally, let $r_0 = 0$, $p_0 = 1$ and~$d_0 = \alpha + 1$. Fix $0 < \varphi < \beta < 1$. For $1 \leq j \leq \lfloor \alpha/\beta \rfloor$ let~$r_j = (j-1)\beta + \varphi$, $p_j = \beta$ and~$d_j = r_j + (1+\eps) p_j$. The \region admits job~$0$ at time~$0$ and blocks the interval~$[0,\alpha)$ for~$0$. Thus, the \region cannot admit any of the small jobs and completes only job~$0$. This behavior does not depend on the commitment model.
	
	An optimal offline algorithm processes the jobs $1,\ldots,\lfloor \alpha/\beta \rfloor$ one after the other in the interval~$[\varphi, \lfloor \alpha / \beta \rfloor \beta + \varphi ) \subset [0, \alpha + 1)$. At the latest at time~$\alpha+1$ job~0 starts processing and finishes on time. 
	
	Thus, the competitive ratio of the algorithm is bounded from below by $\lfloor \alpha / \beta \rfloor + 1 \geq \alpha / \beta$. 
\end{proof}

\begin{lemma}\label{lem:CommitmentLBOnCompRatio}
	The competitive ratio of the region algorithm in the scheduling with commitment model is bounded from below by $\Omega(1/\eps^2)$. 
\end{lemma}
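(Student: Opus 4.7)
The plan is to invoke the generic $\Omega(\alpha/\beta)$ lower bound of Lemma~\ref{lem:GeneralLBOnCompRatio} and combine it with the observation that, for the region algorithm to satisfy its commitment in the commitment-upon-admission model, the parameters $\alpha$ and $\beta$ themselves must obey a quantitative relation forcing $\alpha/\beta=\Omega(1/\eps^2)$.

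First, I would observe that the adversarial instance from the proof of Lemma~\ref{lem:GeneralLBOnCompRatio} is already valid in the commitment-upon-admission model: the region algorithm admits and commits to job~$0$ at time~$0$ and does complete it, while it never admits any of the $\lfloor \alpha/\beta \rfloor$ tight small jobs, so the commitment is respected. Thus Lemma~\ref{lem:GeneralLBOnCompRatio} lower-bounds the competitive ratio by $\alpha/\beta$ in this model as well.

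Next, I would argue that any parameter choice $(\alpha,\beta)$ under which the region algorithm satisfies its commitment on every $\eps$-slack instance must satisfy $\alpha/\beta=\Omega(1/\eps^2)$. The natural sufficient condition is Condition~\eqref{cond:1} of Theorem~\ref{lem:all-jobs-commit} instantiated with $\delta=\eps/2$. Writing $\gamma:=\beta/(1-\beta)$, that condition forces $\gamma<\delta$ and $\alpha-1\geq 1/(\delta-\gamma)$, and since $1/\beta=(1+\gamma)/\gamma$ one obtains
\[
\frac{\alpha}{\beta} \;\geq\; \Bigl(1+\tfrac{1}{\delta-\gamma}\Bigr)\cdot\tfrac{1+\gamma}{\gamma}
\;\geq\; \frac{1}{\gamma(\delta-\gamma)}\;\geq\;\tfrac{4}{\delta^2}\;=\;\Omega(1/\eps^2),
\]
using $\gamma(\delta-\gamma)\leq \delta^2/4$. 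Combining with the first step then gives the claimed lower bound.

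The main obstacle is that Condition~\eqref{cond:1} is only sufficient, not a priori necessary, for the algorithm to complete all admitted jobs, so in principle some parameter pair could be valid while violating it. To close this gap I would exhibit a worst-case instance that saturates the slack of a single admitted job $j_0$ with window $[0,1+\eps)$ by a sequence of interrupters of size just below $\beta$, each released immediately after the previous interrupter's region ends. A direct accounting of the cumulative blockage $k\cdot\alpha\beta$ inside the window forces essentially the same relation on $\alpha$ and $\beta$ (up to constant factors) that Condition~\eqref{cond:1} encodes, showing that any commitment-valid choice of parameters indeed has $\alpha/\beta=\Omega(1/\eps^2)$.
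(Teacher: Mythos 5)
Your proposal follows the same two-step strategy as the paper's proof: reuse the $\Omega(\alpha/\beta)$ bound of Lemma~\ref{lem:GeneralLBOnCompRatio} (which, as you note, is valid in any commitment model since only job~$0$ is ever admitted), and then force $\alpha/\beta = \Omega(1/\eps^2)$ for any parameter pair under which the algorithm honors its commitments. The paper also carries out step~(ii) exactly the way you propose to close the gap left by Condition~\eqref{cond:1}: it builds a concrete family of instances featuring a string of blocking interrupters of size just below~$\beta p_0$, plus a geometric tail of even smaller interrupters packed against~$d_0$, and derives a constraint on $(\alpha,\beta)$ from the requirement that~$j_0$ still completes. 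Your instinct that~\eqref{cond:1} alone is insufficient is correct and is precisely why the paper argues from an instance rather than from~\eqref{cond:1}.

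Two details in your sketch would need to be tightened. First, the accounting ``cumulative blockage $k\cdot\alpha\beta$'' conflates two different quantities: the total length of the $k$ interrupter regions is $k\alpha\beta$ (which determines how many fit inside $j_0$'s window), but what each interrupter actually steals from~$j_0$'s processing is only $\beta$, so the commitment constraint is $k\beta \le \delta$, not $k\alpha\beta \le \delta$. The paper makes this precise by choosing $c$ with $\tfrac{1}{\beta(c+1)} < \alpha \le \tfrac{1}{\beta c}$, deriving $c\beta \le \delta$ (hence $\beta \le \delta/c$), and combining with $\alpha > \tfrac{1}{\beta(c+1)}$ to get $\alpha/\beta > \tfrac{c^2}{\delta^2(c+1)} = \Omega(1/\delta^2)$. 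Second, the string-of-interrupters instance by itself only yields $\alpha = \Omega(1/\delta)$; to get the second factor of $1/\delta$ you also need $\beta = O(\delta)$, which the paper obtains from a separate, simpler observation (a single interrupter released just before $d_{j_0}$ forces $\beta\le\delta$), and which is in particular what handles the degenerate case $\alpha\beta > 1$ where no interrupter region fits inside $j_0$'s window at all. The geometric tail in the paper's construction, which contributes the $\tfrac{\beta}{1-\beta}$ term, is not essential for the asymptotic bound and only sharpens the constants.
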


\begin{proof}
	The proof consists of two parts. First we show an upper bound on the choice of~$\beta$ in terms of~$\delta$. Then, we use this observation to show an upper bound on $\beta$ depending on~$\alpha$. 
 	
 	It is obvious that $\beta \leq \delta$ must hold as otherwise a job that is admitted at~$d_j - (1+\delta)p_j$ and interrupted by another job~$i$ with $p_i=\beta p_j -\varphi$ cannot finish on time. Hence, $\beta < \delta \leq 1$ must hold. 
 	
 	We define a family of instances~$\I_m(c)$ that depends on two natural numbers~$m, c \in \N$ where $c$ is chosen such that 
 	\begin{equation}\label{eq:alpha,beta,c}
 		\frac{1}{\beta(c+1) } < \alpha \leq \frac{1}{\beta c}.
 	\end{equation} Each instance consists of four types of jobs, a job $0$ that cannot be finished unless~$\alpha$ and~$\beta$ satisfy certain bounds, an auxiliary job~$-1$ that guarantees that~$0$ is not admitted before~$d_0 - (1+\delta) p_0$ and two sets of jobs, $B(c)$ and $G(n)$ that block as much time in~$[a_0,d_0)$ as possible. A visualization of the instance can be seen in Figure \ref{fig:TightAnalysis}.
 	
 	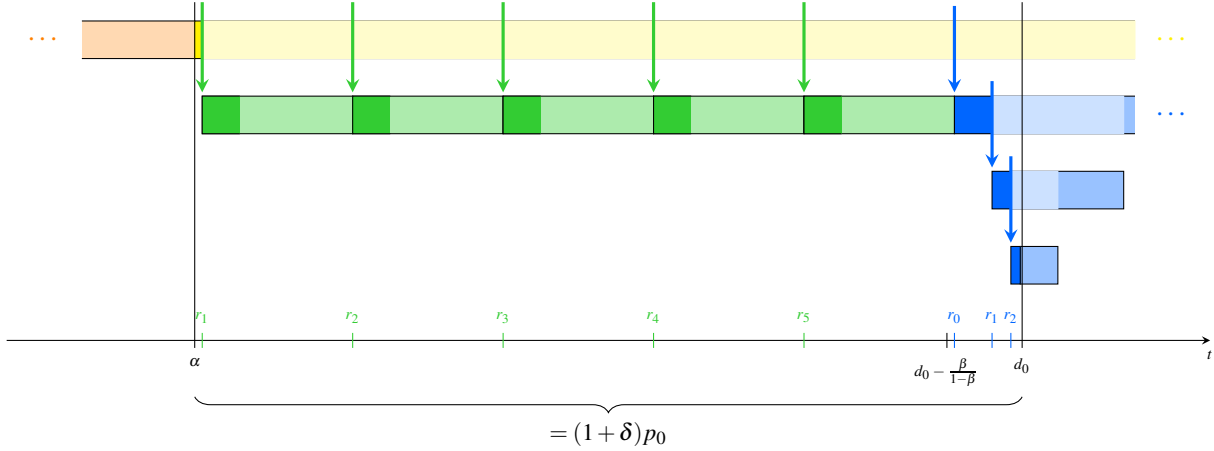
\begin{figure}[tb]
 		\begin{center}
 			\begin{tikzpicture} 
 			
 			\def \alfa {2}
 			\def \d {13}
 			\def \axis {-2}
 			
 			\draw[-stealth] (-.5,\axis) to (15.5,\axis) node [below,font=\tiny] {$t$};
 			\draw (\alfa, 2.5) to (\alfa, \axis-.1) node[below, font=\tiny] {$\alpha$};
 			
 			\draw (\d-1, \axis+.1) to (\d-1, \axis-.1) node [below,font=\tiny] {$d_0 - \frac{\beta}{1-\beta}$};
 			\draw [decorate,decoration={brace,amplitude=6pt}] (\d, \axis-.75) -- (\alfa,\axis-.75)  node [black,midway,yshift = -.5cm,font=\footnotesize] {$= (1+\delta) p_0$} ;

 			\filldraw[fill=orange!30, draw=black] (.5,1.75) -- (2,1.75) -- (2,2.25) -- (.5,2.25);
 			\node[orange] at (0,2) {$\cdots$};
 			
 			\filldraw[fill=yellow, draw=black]  (14.5,2.25) -- (2,2.25) -- (2,1.75) -- (14.5,1.75);
 			\node[yellow] at (15,2) {$\cdots$};
 			
 			\fill[fill=yellow!20, draw=yellow!20] (\alfa+.1, 1.75) rectangle (\d +1.5,2.25);
 			
 			\foreach \job in {1,...,5}{
 				\draw [-stealth, green, very thick] (\alfa+.1+2*\job-2,2.5) to (\alfa+.1+2*\job-2,1.3);
 				\draw [green] (\alfa+.1+2*\job-2, \axis-.1) to (\alfa+.1+2*\job-2, \axis+.1) node [above,font=\tiny] {$r_{\job}$};		
 				\filldraw[fill = green!40, draw=black] (\alfa+.1+2*\job-2, .75) rectangle (\alfa+.1+2*\job, 1.25);
 				\fill [fill = green, draw=black] (\alfa+.1+2*\job-1.5, 1.25) -- (\alfa+.1+2*\job-2 , 1.25) --  (\alfa+.1+2*\job-2, .75) -- (\alfa+.1+2*\job-1.5, .75);	
 			}
 			
 			\def \m {2}
 			
 			\def \factor {.5}
 			\def \size {{.5,.5^2,.5^3}}
 			
 			\foreach \job in {0,...,2}{
 				\edef \position {0}
 				\edef \length {0}
 				\foreach \i in {0,...,\job}{
 					\xdef \position {\position + \size[\i]}
 				}
 				\foreach \i in {\job,...,\m}{
 					\xdef \length {\length + 4*\size[\i]}
 				}

 				\filldraw[fill=blue!40, draw=black] (\d-.9 + \position,  .75 - \job) rectangle (\d-.9 + \position + 4*\length, 1.25 - \job);
 				\if\job0
 				\else 
 				\filldraw[fill=blue!20, draw=blue!20] (\d-.9 + \position - \size[\job],  1.75 - \job ) rectangle (\d-.9 + \position + 4*\length, 2.25 - \job);
 				\fi
 			}
 			
 			\foreach \job in {0,...,2}{
 				\edef \release {0}
 				\foreach \i in {0,...,\job} {
 					\xdef \release {\release + \size[\i]}
 				}
 				\fill [fill = blue, draw=black] (\d-.9+\release-\size[\job], .75-\job) rectangle (\d-.9+\release, 1.25-\job);
 				\draw [-stealth, blue, very thick] (\d-.9+\release-\size[\job], 1.5-\job + .95) to (\d-.9+\release-\size[\job],.75-\job+.55);
 				\draw [blue] (\d-.9+\release-\size[\job], \axis-.1) to (\d-.9+\release-\size[\job], \axis+.1) node [above,font=\tiny] {$r_{\job}$};
 			}

 			\fill [white] (14.5,.5) rectangle (16.5,1.5);
 			\node [blue] at (15,1) {$\cdots$};
 			\draw (\d, 2.5) to (\d, \axis-.1) node [below,font=\tiny] {$d_0$};
 			
 			%
 			%
 			\end{tikzpicture}
 		\end{center}
 		\caption{The structure of the regions and the schedule generated by the \region when faced with the instance~$\I_m(c)$. The darkest shades of a color mean that jobs are scheduled there. The light yellow and blue parts show that the region is currently interrupted. The only time slots where~$0$ can be processed are the lighter parts of the green regions, i.e., the regions belonging to~$B(c)$.} \label{fig:TightAnalysis}
 	\end{figure}
 	
 	More precisely, at time $t=0$, an auxiliary job~$-1$ is released with~$p_{-1} =1$ and $d_{-1} = (1+\eps) p_{-1}$. The \region admits this job and assigns it the region~$R(-1) = [0,\alpha)$. At time~$\alpha-(\eps-\delta)$ job~0 is released with $p_0 =1$ and $d_0 = \alpha + 1 + \delta$. Obviously, this job is admitted at time~$\alpha$ as it is still available. Fix~$\varphi >0$ sufficiently small.
 	
 	At time $\alpha + \varphi$ the sequence~$B(c)$ of $c$ identical jobs is released one after the other such that the release date of one job coincides with the end of the region of the previous job. For~$0 \leq i \leq c-1$, a tight job is released at $r_i := \alpha + i/c + \varphi $ with processing time~$p_i = \beta -\varphi$ and deadline~$d_i = r_i + (1+\eps)p_i$. Since \[ r_i + \alpha p_i \leq \alpha + i/c + \varphi + \beta / (\beta c) - \alpha \varphi  < \alpha + (i+1)/c + \varphi = r_{i+1} \] each of these jobs is admitted by the region algorithm at their release date. The last of these regions ends at $\alpha + (c-1)/c + \varphi + \alpha (\beta - \varphi) = \alpha + \frac{c-1}{c} + \varphi + \frac{1}{c} - \alpha \varphi \leq \alpha +1$. Thus, in the limit $\varphi \rightarrow 0$, they block $c\beta$ units of time in $[a_j, a_j + d_j)$. 
 	
 	At time $d_0 - \frac{\beta}{1-\beta}$, a sequence of~$m$ geometrically decreasing jobs~$G(m)$ is released. For~$1\leq j \leq m$, job~$j$ is released at~$r_j =d_0 - \frac{\beta}{1-\beta} + \sum_{i =1}^j \beta^i$ with processing time~$p_j = (\beta - \varphi)^j$ and deadline~$d_j = r_j + (1+\eps)p_j$. Then, $p_{j+1} = (\beta - \varphi) p_j < \beta p_j$. Thus, the region algorithm admits each of the~$m$ jobs. Again, in the limit~$n\rightarrow \infty$ and $\varphi \rightarrow 0$, the processing volume of~$G(m)$ sums up to $\frac{\beta}{1-\beta}$. 
 	
 	Putting the two observations together, we obtain \[
 		\frac{\beta}{1-\beta} + c \beta \leq \delta 
 	\] 
 	as otherwise job~$0$ cannot finish on time. Hence, 
		$0  \leq c \beta ^2 - (1 + c + \delta)\beta + \delta.$
	Solving for the two roots, $\beta_+$ and~$\beta_-$, we obtain \begin{align*}
		\beta_{+} & =\ \frac{ 1 + c + \delta + \sqrt{ (1 + c + \delta)^2 - 4 c \delta} }{ 2c } 
		\ =\ \frac{ 1 + c + \delta + \sqrt{(1+c)^2 + 2(1+c)\delta + \delta^2 - 4c\delta   } }{ 2c } \\
		& \geq\ \frac{ 1 + c + \delta + \sqrt{ c^2 + \delta^2 - 2c\delta  } }{ 2c } 
		\ =\ \frac{ 1 + c + \delta + \sqrt{ (c-\delta)^2 } }{ 2c } \\
		& >\ 1.
	\end{align*}
	As we have seen by the first example,~$\beta < 1$ must hold. Thus, we conclude that the only valid choice for~$\beta$ is in the interval~$(0,\beta_-)$. By a similar calculation, it follows that $\beta_{-} \leq \frac{\delta}{c}.$
	As we know by Lemma~\ref{lem:GeneralLBOnCompRatio}, the competitive ratio is bounded from below by $\alpha / \beta$. Combined with the two bounds on~$\alpha$,~$\frac{1}{\beta(c+1)} < \alpha \leq \frac{1}{\beta c}$, we obtain \[
		\frac{\alpha}{\beta} \geq \frac{1}{\beta(c+1) } \frac{1}{\beta } = \frac{c^2}{\delta^2 (c+1)}.
	\] Since the right hand side is increasing in~$c$ for positive~$c$, the expression is minimized for~$c=1$. This implies that~$\beta \in \OO(\eps)$ and therefore~$\alpha \in \Omega(1/\eps)$.
\end{proof}

\section{Additional Lower Bounds}\label{app:lbs}



\subsection{Lower Bounds: Commitment upon arrival}



\subsubsection{Proportional weights ($w_j=p_j$)}\label{sec:LBCommUponReleasewj=pj}

We consider the setting of proportional weights in which $w_j=p_j$ for all jobs $j$. It has been previously known that deterministic algorithms can achieve a competitive ratio of $\Theta(\frac{1}{\eps})$ for this setting~\cite{DBLP:conf/approx/DasGuptaP00}. In Theorem~\ref{thm:wj=pj:admission:LB} below, we provide a lower bound of $\Omega(\log \frac1\eps)$ for randomized algorithms in the less restrictive setting of commitment upon job admission. This implies the following corollary.

\begin{corollary}
	Consider proportional weights ($w_j=p_j$). The competitive ratio of any randomized algorithm is $\Omega(\log1/\eps)$ for scheduling with commitment upon arrival.
\end{corollary}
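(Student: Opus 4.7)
The plan is a direct reduction between the two commitment models. Any online algorithm $A$ for scheduling with commitment upon arrival can be interpreted, verbatim, as an online algorithm for scheduling with commitment upon admission that simply performs its admission (and therefore commitment) decision at the release date of each job; the schedule produced is identical and hence so are the set of completed jobs and the achieved weighted throughput. Consequently, the class of (randomized) algorithms feasible in the commitment-upon-arrival model embeds into the class feasible in the commitment-upon-admission model, so the optimal competitive ratio in the former is at least the optimal competitive ratio in the latter.

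Combining this inclusion with Theorem~\ref{thm:wj=pj:admission:LB}, which asserts an $\Omega(\log 1/\eps)$ lower bound on the competitive ratio of any randomized algorithm in the commitment-upon-admission model under proportional weights $w_j=p_j$, yields the corollary at once. The reduction is purely syntactic and preserves randomization (it does not alter $A$'s random choices or the distribution over outputs), so a randomized lower bound transfers without modification; equivalently, Yao's principle applied on the admission side gives a distribution over instances that is equally hard on the arrival side, since every deterministic arrival-time algorithm is a deterministic admission-time algorithm on that distribution.

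I do not anticipate any genuine obstacle here; the entire technical content lies inside Theorem~\ref{thm:wj=pj:admission:LB} itself. The only point requiring care in writing is articulating that commitment upon arrival is a \emph{strictly more} restrictive model than commitment upon admission---an arrival-time algorithm can always postpone its commitment to the later admission time, but not vice versa---so a lower bound on the easier problem is automatically a lower bound on the harder one.
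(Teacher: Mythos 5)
Your argument is exactly the paper's: the corollary is derived by observing that commitment upon arrival is a strictly more restrictive model than commitment upon admission, so the $\Omega(\log 1/\eps)$ lower bound of Theorem~\ref{thm:wj=pj:admission:LB} (for the less restrictive admission model) carries over verbatim. The Yao's-principle aside is unnecessary---the syntactic embedding of arrival-time algorithms into admission-time algorithms already transfers the randomized lower bound directly---but the argument is correct and matches the paper's one-line justification.
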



\subsection{Lower bounds: Commitment on job admission and $\delta$-commitment}

Lower bounds for arbitrary weights and commitment on job admission have already been shown~\cite{DBLP:conf/spaa/LucierMNY13}. We rule out bounded performance ratios even for unit weights and give a lower bound for proportional weights.

\subsubsection{Unit weights ($w_j=1$)}

The lower bound for scheduling without commitment in Theorem~\ref{thm:detLB-w=1-anytime} immediately  carries over to any model for scheduling with commitment. However, we provide a much simpler proof for the setting we study~here.

\begin{proof}[Simple proof of Theorem~\ref{thm:detLB-w=1-anytime} for commitment on job admission and $\delta$-commitment]
	Let $\eps<\frac14$ and $\delta<\eps$. At time $0$, job $1$ is released with $p_1=1$ and $d_j=1+\epsilon$. To be competitive in the case that no further jobs are released, the algorithm has to finish this job. Hence, independently of the respective commitment model, the algorithm needs to have committed to this job at time $2\eps$. At this time, there are $\lfloor\frac{1-\eps}{\eps}\rfloor=\Omega(\frac1\eps)$ jobs with processing time $\eps$ and deadline $1$ released. While the optimum can finish all of them, the laxity of job~$1$ admits only that the online algorithm accepts one additional job.
\end{proof}

\subsubsection{Proportional weights ($w_j=p_j$)}

We first consider deterministic algorithms.

\begin{theorem}\label{thm:wj=pj:admission:LB}
	Consider proportional weights ($w_j=p_j$). For commitment on job admission and the $\delta$-commitment model, the competitive ratio of any deterministic algorithm is $\Omega(\frac1\eps)$.
	\end{theorem}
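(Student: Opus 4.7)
The plan is to construct an adversarial instance with at most two jobs that forces an $\Omega(1/\eps)$ ratio. First, the adversary releases $j_0$ at time $0$ with $p_0=w_0=1$ and $d_0=1+\eps$. In either commitment model, there is a last moment $\tau^{\star}$ at which the algorithm can commit to $j_0$ and still complete it -- namely $\tau^{\star}=\eps-\delta$ for $\delta$-commitment and $\tau^{\star}=\eps$ for commitment upon admission. If the deterministic algorithm never commits by $\tau^{\star}$, the adversary releases nothing else: the algorithm scores $0$ against an optimum of weight $1$, already giving an unbounded ratio. So one may assume the algorithm commits to $j_0$ at some time $\tau\leq\tau^{\star}\leq\eps$.

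Immediately after, at $r_1=\tau+\mu$ with $\mu>0$ arbitrarily small, the adversary releases $j_1$ with $p_1=w_1=(1-\tau-2\mu)/\eps$ and $d_1=r_1+(1+\eps)p_1$; the $\eps$-slack condition is satisfied by construction. The core of the argument is a short budget calculation on the two intervals $[\tau,1+\eps]$ and $[1+\eps,d_1]$: any schedule that completes $j_0$ must spend at least $1-\tau$ units on $j_0$ inside the first interval, which leaves at most $\eps$ for $j_1$ there, and the choice of $p_1$ makes $[1+\eps,d_1]$ strictly too short to absorb the remaining $p_1-\eps$ units of $j_1$. Hence no feasible schedule completes both jobs.

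Because the online algorithm is committed to $j_0$, it collects exactly weight $1$, whereas the offline optimum abandons $j_0$ and schedules $j_1$ alone in its own $(1+\eps)p_1$-window, collecting weight $p_1=(1-\tau-2\mu)/\eps$. Since $\tau\leq\eps$ and $\mu$ is arbitrarily small, the competitive ratio is at least $(1-\eps-o(1))/\eps=\Omega(1/\eps)$, uniformly across both commitment models and over every deterministic algorithm's choice of commitment time.

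The only real obstacle is tuning $p_1$ precisely at the feasibility threshold so that ``both jobs'' is infeasible while ``$j_1$ alone'' remains feasible, and confirming that this threshold yields weight $\Omega(1/\eps)$ for every admissible $\tau$ in both models. Everything else reduces to the elementary budget check just outlined, and the two-job structure avoids the escalating chain constructions needed for the unweighted and commit-upon-arrival versions.
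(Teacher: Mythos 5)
Your proposal is correct and follows essentially the same two-job construction as the paper: a unit job that forces commitment, followed by a job of size $\Theta(1/\eps)$ whose slack the committed job's remaining work cannot fit into, so the optimum drops the first job while the algorithm is stuck with it. The only cosmetic difference is that you parameterize the second job's release time and size by the observed commitment time $\tau$, whereas the paper simply releases it at the fixed time $2\eps$ with $p_2=(1-3\eps)/\eps$ and runs the same slack-deficit argument.
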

	\begin{proof}
		Let $\eps< \frac 1 4 $ and $\delta<\eps$. At time $0$, job $1$ is released with $p_1=1$ and $d_j=1+\epsilon$. To be competitive in the case that no further jobs are released, the algorithm has to complete this job. Hence, independently of the respective commitment model, the algorithm needs to have committed to this job at time $2\eps$. At this time, job $2$ is released, where $p_2=\frac{1-3\eps}\eps=\Omega(\frac 1\eps)$ and its slack is $1-3\eps$. The optimum only schedules job $2$, but the online algorithm cannot finish it because it has committed to job $1$, which still requires $1-2\eps>1-3\eps$ processing.
	\end{proof}

We show a weaker lower bound for randomized algorithms. 

\begin{theorem}\label{thm:wj=pj:admission:LB}
	Consider proportional weights ($w_j=p_j$). For commitment on job admission and the $\delta$-commitment model, the competitive ratio of any randomized algorithm is $\Omega(\log \frac 1 \eps)$.
	\end{theorem}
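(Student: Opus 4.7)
The plan is to apply Yao's minimax principle: construct a distribution over instances against which every deterministic algorithm suffers expected ratio $\Omega(\log 1/\eps)$.

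I would build an instance family indexed by a stopping level $K\in\{0,1,\ldots,k-1\}$ with $k=\Theta(\log(1/\eps))$. The instance $\I_K$ consists of jobs $J_0,\ldots,J_K$ with $p_i=w_i=2^i$, tight slack $d_i-r_i=(1+\eps)p_i$, and release dates $r_i=\sum_{\ell<i}\eps\, p_\ell$, so that $r_{i+1}$ coincides with the latest commitment time of $J_i$. The adversary reveals $J_0,J_1,\ldots$ one at a time and halts after $J_K$; the stopping level $K$ is drawn from $\mu_K=C\cdot 2^{-K}$, normalized by $C\in[1/2,1]$.

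The key structural claim is that any deterministic algorithm commits to a set $S$ of jobs whose consecutive indices differ by at least $L:=\lceil\log_2((1+\eps)/(2\eps))\rceil=\Theta(\log 1/\eps)$. The reason is a direct scheduling argument: if both $J_i$ and $J_j$ with $i<j$ are committed, then $2^i+2^j$ units of processing must fit in the combined window $[r_i,d_j]$, whose length is at most $(1+2\eps)\,2^j$; rearranging yields $2^{j-i}\geq (1+\eps)/(2\eps)$, i.e.\ $j-i\geq L$. Applied to consecutive pairs of $S$ this is also sufficient, since the geometric growth of processing times makes all committed jobs schedulable together precisely when the spacing condition holds. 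Crucially, this feasibility bound is insensitive to the choice of commitment model, and forbids ``speculative'' multi-job processing as well.

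Since the algorithm's decisions along any prefix agree across the instance family, on $\I_K$ it admits precisely $S\cap\{0,\ldots,K\}$, so $|S\cap\{0,\ldots,k-1\}|\leq \lceil k/L\rceil$. An offline optimum schedules $J_K$ alone, giving $\mathbb{E}[\mathrm{OPT}]=Ck$, while, using $\Pr[K\geq i]\leq C\cdot 2^{-i+1}$,
\[
    \mathbb{E}[\mathrm{ALG}\mid S] \;=\; \sum_{i\in S\cap\{0,\ldots,k-1\}} 2^i\,\Pr[K\geq i] \;\leq\; 2C\,|S\cap\{0,\ldots,k-1\}| \;\leq\; 2C\lceil k/L\rceil.
\]
Taking the ratio yields $\mathbb{E}[\mathrm{OPT}]/\mathbb{E}[\mathrm{ALG}]\geq k/(2\lceil k/L\rceil)=\Omega(L)=\Omega(\log 1/\eps)$, and Yao's principle concludes. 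I expect the spacing claim to be the main obstacle: one must rule out all clever interleavings of processing across many partially-started jobs, which works here because $J_i$'s window is essentially nested inside $J_j$'s, making the naive volume inequality essentially tight.
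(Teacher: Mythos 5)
Your construction is essentially the paper's---a sequence of geometrically growing tight jobs released one at a time, with the instance possibly truncated after any job---but you extract the bound via Yao's minimax principle with a geometric stopping prior, whereas the paper argues directly against the randomized algorithm. There, $n_i$ is the probability of committing to job $i$; the paper shows $\sum_i n_i\le 1$ (a sharper form of your spacing claim: \emph{no} two of these jobs can be committed simultaneously, because at the moment the next job is released any previously committed job still retains $\ge 3/4$ of its processing, while all slacks are $\le 1/16$) and $\sum_{i\le j}n_i 2^{i-1}\ge 2^{j-1}/c$ from $c$-competitiveness against every truncation, and then invokes Lemma~\ref{lem:algebraic-fact}. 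That lemma's weighted sum with weights $2^{k-j-1}$ is exactly the exponential reweighting your prior $\mu_K\propto 2^{-K}$ performs, so the two computations are isomorphic up to repackaging; your route is a valid alternative.

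Two details to tighten. You set $r_{i+1}=r_i+\eps p_i$, which is exactly the last instant the algorithm can start (and hence commit to) $J_i$ in the admission model, so its decision about $J_i$ could in principle be made after seeing $J_{i+1}$; the paper instead uses $r_{j+1}=r_j+2\eps p_j$ so the commitment deadline falls strictly before the next release and the committed set $S$ is unambiguously prefix-determined. (In the $\delta$-commitment model with $\delta>0$ your spacing already gives strict inequality, but for commitment on admission you should insert a small gap.) Also, your aside that the spacing condition is \emph{sufficient} for joint feasibility is unused and false: with $2^L\approx 1/(2\eps)$, three jobs spaced exactly by $L$ already exceed the volume available in the hull of their windows. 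Only the necessary direction is needed, so just drop the sufficiency claim.
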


\begin{proof}
	Let $k=\lfloor\log(\frac1{8\eps})\rfloor$, and consider a $c$-competitive algorithm. The adversary releases at most $k$ jobs, where job $j=1,\dots,k$ arrives at $r_j = 2\eps\sum_{i=1}^{j-1}2^{i-1}$, has processing time $2^{j-1}$ and slack $\eps 2^{j-1}$.
	
	Denote by $n_i$ the probability that the algorithm commits to job $i$. We make the following observations:
\begin{compactenum}[(i)]
	\item The release date of job $j$ is $$2\eps\sum_{i=1}^{j-1}2^{i-1}<2\eps\cdot2^{\log(\frac1{8\eps})}\leq \frac14,$$ at which time any job $j'<j$ that the algorithm has committed to has at least $p_1-1/4=3/4$ processing left. The slack of $j$ is however only at most $$\eps\cdot2^{j-1}\leq\eps\cdot2^{\lfloor\log(\frac1{8\eps})\rfloor-1}\leq\frac{1}{16}.$$ This implies that no two jobs can both be committed to at the same time. Hence, $\sum_{i=1}^k n_i\leq 1$.
	\item The algorithm has to commit to $j<k$ at the latest at $$r_j+\eps2^{j-1}=2\eps\sum_{i=1}^{j-1}2^{i-1}+\eps2^{j-1}<2\eps\sum_{i=1}^{j}2^{i-1}=r_{j+1},$$ that is, unknowingly whether $j+1$ will be released or not, so it has to be competitive with the optimum that only schedules $j$. Hence, we have $\sum_{i=1}^j n_i\cdot 2^{i-1}\geq \frac{2^{j-1}}c$.
\end{compactenum}
	This allows us to apply Lemma~\ref{lem:algebraic-fact} to $n_1,\dots,n_k$, showing $c\geq \frac{k+1}2\Omega(\log\frac1\eps)$.
\end{proof}






\newpage
\bibliographystyle{abbrv} 
\bibliography{throughput}
\end{document}